\pdfoutput=1
\RequirePackage{ifpdf}
\ifpdf 
\documentclass[pdftex]{sigma}
\else
\documentclass{sigma}
\fi

\numberwithin{equation}{section}

\newtheorem{Theorem}{Theorem}[section]
\newtheorem{Corollary}[Theorem]{Corollary}
\newtheorem{Lemma}[Theorem]{Lemma}
\newtheorem{Proposition}[Theorem]{Proposition}
 { \theoremstyle{definition}
\newtheorem{Definition}[Theorem]{Definition}
\newtheorem{Remark}[Theorem]{Remark} }

\usepackage{mathtools}

\usepackage{enumitem}
\usepackage{float}
\usepackage{tikz-cd}
\usepackage[all]{xy}
\usepackage{pb-diagram,pb-xy}

\usetikzlibrary{arrows}
\tikzset{
commutative diagrams/.cd,
arrow style=tikz,
diagrams={>=latex}}

\newcommand{\g}{\mathfrak{g}}

\newcommand{\longto}{\longrightarrow}
\newcommand{\cL}{{\cal L}}
\newcommand{\cT}{\mathcal{T}}

\newcommand{\R}{\mathbb{R}}
\newcommand{\C}{\mathbb{C}}
\newcommand{\kf}{\mathfrak{k}}
\newcommand{\pf}{\mathfrak{p}}
\newcommand{\uf}{\mathfrak{u}}

\newcommand{\Lin}{{\rm Lin}}
\newcommand*\circled[1]{\tikz[baseline=(char.base)]{
 \node[shape=circle,draw,inner sep=1pt] (char) {#1};}}

\definecolor{kellygreen}{rgb}{0.3, 0.73, 0.09}

\begin{document}
\allowdisplaybreaks

\newcommand{\arXivNumber}{1909.10088}

\renewcommand{\PaperNumber}{046}

\FirstPageHeading

\ShortArticleName{Routh Reduction of Palatini Gravity in Vacuum}

\ArticleName{Routh Reduction of Palatini Gravity in Vacuum}

\Author{Santiago CAPRIOTTI}

\AuthorNameForHeading{S.~Capriotti}

\Address{Departamento de Matem\'atica, Instituto de Matem\'atica de Bah\'ia Blanca (INMABB),\\ CONICET, Universidad Nacional del Sur, Av.~Alem 1253, 8000 Bah\'ia Blanca, Argentina}
\Email{\href{mailto:santiago.capriotti@uns.edu.ar}{santiago.capriotti@uns.edu.ar}}

\ArticleDates{Received September 24, 2019, in final form May 11, 2020; Published online May 30, 2020}

\Abstract{An interpretation of Einstein--Hilbert gravity equations as Lagrangian reduction of Palatini gravity is made. The main technique involved in this task consists in representing the equations of motion as a set of differential forms on a suitable bundle. In this setting Einstein--Hilbert gravity can be considered as a kind of Routh reduction of the underlying field theory for Palatini gravity. As a byproduct of this approach, a novel set of conditions for the existence of a vielbein for a given metric is found.}

\Keywords{symmetry reduction; Palatini gravity; frame bundle}

\Classification{53C80; 53C05; 83C05; 70S05; 70S10}

\section{Introduction}\label{sec:Intro}

The relationship between Einstein--Hilbert and Palatini formulation of gravity has been studied by several authors. It could be established in the Lagrangian formulation by comparing the action functionals for each case~-- see for example \cite{Romano1993, tsamparlis_palatini_1978} and references therein. From the Hamiltonian point of view, the main theoretical tool used in the discussion of the connection between these formulations of gravity appears to be some flavor of Hamiltonian reduction. For instance, \cite{2012GReGr..44.2337D} and~\cite{Romano1993} use ADM formalism \cite{citeulike:820116} in order to establish the connection; it has also been explored in \cite{Cattaneo2019,Ibort:2016xoo}, where the correspondence is set by using a Hamiltonian structure on the set of fields at the boundary.

From this viewpoint, it becomes interesting to find a reduction scheme relating the Lagrangian formulation of Palatini and Einstein--Hilbert gravity directly, without the detour through Hamiltonian formalism. So far, there exist two ways to implement reduction at the Lagrangian level, namely \emph{Lagrange--Poincar\'e reduction} \cite{CASTRILLONLOPEZ2013352, 2003CMaPh.236..223L, lopez00:_reduc_princ_fiber_bundl, ellis11:_lagran_poinc} and \emph{Routh reduction} \cite{Capriotti201723, Capriotti2019, CrampinMestdagRouth,GarciaToranoAndres2016291,eprints21388, Marsden00reductiontheory}. Moreover, there are physical considerations that can be said in support of this kind of reduction: They deal with not only the reduction, but also the reconstruction problem, and it is argued in~\cite{Nawarajan:2016vzv} that reconstruction can be relevant from the physical point of view.

Routh reduction was originally designed as a technique to get rid of cyclic variables in a~Lagrangian function by using a fixed value of the corresponding momentum~\cite{leopoldalexanderpars1965}. The mechanical system thus obtained has a configuration space where the cyclic variables are absent and whose Lagrangian has a term forcing the momentum to have the prescribed value; the new Lagrangian function is called \emph{Routh function} or \emph{Routhian}. This reduction scheme can be further generalized to symmetries characterized by a non abelian Lie group~-- see~\cite{Marsden00reductiontheory} and references therein. In the general case, the equations of motion of the system obtained by quotient out the symmetry are modified by the addition of a new term, called \emph{force term}, which is related to the curvature of a~principal bundle associated to the symmetry. It should be noted that the Routhian reduces to the Lagrangian and the force term annihilates when the momentum value is set to zero.

These versions of Routh reduction require some sort of regularity to the Lagrangian in order to construct the Routhian; this regularity is used to translate the constraints into the momentum variables to constraints into the velocities. Therefore, it is possible to avoid this regularity condition by working in a unified setting, where variables associated to both velocities and momenta are taken into account; this approach can be used to set a Routh reduction procedure not suffering this limitation \cite{GarciaToranoAndres2016291}. This viewpoint proves its usefulness in~\cite{Capriotti2019}, where unified formalism was employed in the generalization of Routh reduction to the field theory realm. In fact, in order to improve the understanding of the tools and strategies used in the present work, it could be helpful to give an account of both the problem addressed in this reference and the techniques used to solve it:
\begin{itemize}\itemsep=0pt
\item \textbf{Result in \cite{Capriotti2019}:} For the moment, we will represent a variational problem with a
 pair $(\pi\colon E\to M,\cL)$, where $\pi\colon E\to M$ is
 a bundle and $\cL\colon J^1\pi\to\wedge^m(T^*M)$ is a
 Lagrangian density; in this setting, we can define the action
\[
 \Gamma\pi\ni s\mapsto\int_U\cL\circ j^1s
\]
for any compact $U\subset M$, and the variational problem becomes
\[
 \delta\int_U\cL\circ j^1s=0
\]
for arbitrary variations $\delta s$. Let us suppose that there exists a Lie group acting freely and vertically on $E$, such that the Lagrangian density is invariant by this action, yielding to a momentum map $J\colon J^1\pi\to\wedge^{m-1}\big(T^*J^1\pi\big)\otimes\g^*$. In order to construct the reduced Lagrangian theory, it is necessary to fix a value for the momentum map through the specification of a $\g^*$-valued $(m-1)$-form
\[
 \mu\in\Omega^{m-1} (M,\g^* ),
\]
and also to choose a connection $\omega$ on the principal bundle $p_G^E\colon E\to\Sigma:=E/G$. Let us indicate by $G_\mu\subset G$ the stabilizer of the momentum $\mu$. Then the main result in~\cite{Capriotti2019} established a correspondence between the extremals of the variational problem $(\pi\colon E\to M,\cL)$ and
\[
 \big(\overline{\pi}\colon E/G_\mu\times_M\mathop{{\rm Lin}}{(\overline{\pi}^*TM,\widetilde{\g})}\to M,\mathcal{R}_\mu\big).
\]
This variational problem is the reduced variational problem for Routh reduction in field theory. As we warned before, when dealing with variational problems obtained through Routh reduction, it could be necessary to add a \emph{force term}, in order to take care of the non trivial nature of the bundles involved~\cite{kharlamov_characteristic_1977}. In short, the local Routhian functions constructed through Routh original procedure cannot be pasted into a global Routhian function unless the bundle obtained by the symmetry quotient is trivial. It is thus necessary to use a connection in order to define this global function; as a by-product, a force term associated to the curvature of the chosen connection should be taken into account. In any case, the force term is in this context represented by a $(m+1)$-form $\phi_\mu$ on $J^1\overline{\pi}$ constructed from $\omega$ and the momentum~$\mu$. Therefore, the underlying variational problem becomes modified by an external force
\[
 \delta\int_U\mathcal{R}_\mu\circ j^1\overline{s}=\int_U\langle \phi_\mu,\delta\overline{s}\rangle.
\]
\item \textbf{Techniques:} As was said above, it is useful to add momentum variables in order to avoid regularity issues when dealing with Routh reduction. It leads us to work with the unified formulation of field theory \cite{2004JMP....45..360E}, where the space of velocities~$J^1\pi$ is enlarged with the space of multimomentum~-- see equation~\eqref{eq:NotationVerticalForms} below for an explanation of the symbols involved in this formula:
 \[
 J^1\pi^\circ:=\wedge^{m}_2(E)/\wedge^m_1(E)
 \]
 to form a bundle
 \begin{gather}\label{eq:WClassic}
 W_{\text{cl}}:=J^1\pi\times_M J^1\pi^\circ.
 \end{gather}
 In this space, a variational problem can be set. A word of caution should be given here: The variational problem on $W_{\text{cl}}$ is different to those described by a pair $(\pi\colon E\to M,\cL)$ above. Concretely, in this new setting we have a Lagrangian $\Theta_\cL$ that is not a density, but a $m$-form on~$W_{\text{cl}}$; the associated action becomes
 \[
 \Gamma (W_{\text{cl}}\to M)\ni\sigma\mapsto\int_U\sigma^*\Theta_{\cL}.
 \]
 A crucial difference with the variational problems of Lagrangian field theory arises from this action: When performing variations, they should be free because in $W_{\text{cl}}$ there are no variables that can be identified as velocities. It claims for a definition of variational problem capable to associate geometrical structures to these differences; this is captured by the notion of \emph{Griffiths variational problem} considered in \cite{book:852048,hsu92:_calcul_variat_griff} and will be crucial in our approach to reduction~-- see Definition~\ref{def:GriffithsVarProb} below.

Beyond better regularity properties, the unified setting has several advantages stemming from the linear structures that come equipped with it, and the formulation of relevant data in terms of differential forms, on which pullback maps allow for straightforward comparisons of equations and other quantities.
\end{itemize}

Now, we are looking for a Routh reduction scheme similar to the one described above, in order to relate the variational problems for Palatini and Einstein--Hilbert gravity. This forces us to discuss how we will represent the flavors of gravity to work with. For Palatini gravity we will use the formulation given in \cite{capriotti14:_differ_palat}, where it was interpreted as an example of the concept of Griffiths variational problem. In order to describe this formulation, let us summarize the main characteristics of a Griffiths variational problem. It consists into three kind of data: A bundle $p\colon W\to M$, whose sections will be the fields of the theory, a Lagrangian form $\lambda\in\Omega^m(W)$ setting the dynamics, and a set of forms $\mathcal{I}\subset\Omega^\bullet(W)$ (more precisely, an \emph{exterior differential system}) describing the set of differential restrictions on the fields. This additional geometric structure replaces the contact structure of the jet bundle; therefore, the usual constraints
\[
 u^A_\mu=\frac{\partial u^A}{\partial x^\mu}
\]
are absent. It should be stressed that this change has profound consequences on the equations governing the extremals of the underlying variational problem: In general, when performing variations, it results that
\[
 \delta u^A_\mu\not=\big(\delta u^A\big)_\mu.
\]
In fact, because the set of sections verifying the differential restrictions represented by~$\mathcal{I}$ must be invariant by variations, this relationship should be replaced by conditions describing the infinitesimal symmetries of this set.

Accordingly, a Griffiths variational problem will be described with the symbol
\begin{gather}\label{rem:griffiths-notation}
(p\colon W\to M,\lambda,\mathcal{I}),
\end{gather}
where the additional data $\mathcal{I}$ has been added. The variational problem underlying such triple consists in finding the extremals of the action
\[
 S [s ]:=\int_Ms^*\lambda,
\]
where the sections $s\colon M\to W$ of the bundle~$p$ must be integral for the set of forms in $\mathcal{I}$, namely,
\[
 s^*\alpha=0
\]
for every $\alpha\in\mathcal{I}$.

We can use this approach to represent Palatini gravity as a first order field theory. Recall that for this type of gravity theory, the degrees of freedom are a vielbein and a connection; it suggests taking the bundle of frames $\tau\colon LM\to M$ (whose sections are the \emph{vielbein}) as the field bundle. Moreover, its jet space~$J^1\tau$ can be decomposed as
\[
 J^1\tau=LM\times_M C (LM ),
\]
where $C(LM)\to M$ is the connection bundle, that is, a bundle whose sections are the principal connections of~$LM$ -- see Section~\ref{sec:LFT} for details. It means that sections of $\tau_1\colon J^1\tau\to M$ are composed by a pair vielbein~$+$ connection and so, they are suitable for the description of the fundamental degrees of freedom in Palatini gravity. Now, in Palatini formulation, the vielbein and the connection are not independent; they are related by two constraints: on the one side, the connection must be torsionless; on the other, the metric associated to a given vielbein should be invariant respect to the parallel translation associated to the given connection. So, instead of working with the total jet space~$J^1\tau$, we will work in a submanifold $\cT_0\subset J^1\tau$, namely, the submanifold corresponding to the \emph{torsion zero constraint}; also, the contact structure is changed by a set of differential constraints implementing the \emph{metricity conditions}~-- see equation~\eqref{eq:MetricityConds} for a geometrical definition of these conditions, and Section~\ref{sec:zero-tors-subm} for a discussion of its physical meaning. It should be stressed that the zero torsion condition in this approach becomes part of the bundle of fields, rather than appearing as part of the equations of motion, as is usually the case (see for example \cite{hehl_general_1976, Peldan:1993hi}).

The underlying variational problem is given by the triple
\[
 (\tau_1'\colon \cT_0\to M,\mathcal{L}_{\rm PG},\langle \omega_\pf\rangle ),
\]
where $\mathcal{L}_{\rm PG}$ is a Lagrangian $m$-form on~$\cT_0$ and $\langle \omega_\pf\rangle$ is an exterior differential system on~$\cT_0$ encoding the metricity conditions.

It should be stressed that, when working with such variational problem, the allowed variations are not those fullfilling the commutativity property
\[
 \delta u^A_\mu=\big(\delta u^A\big)_\mu;
\]
instead, they must be thought as infinitesimal symmetries of the metricity conditions. We will not explore further this topic here; the interested reader can find a detailed study of these questions in~\cite{Capriotti:2019bqh}. On the other hand, the usual variational problem for Einstein--Hilbert gravity is given by the triple
\[
 \big( (\tau_\Sigma )_2\colon J^2\tau_\Sigma\to M,\cL_{\rm EH},\mathcal{I}_{{\rm con}}^\Sigma\big)
\]
where $\tau_\Sigma\colon \Sigma\to M$ is the bundle of metrics with a given signature, $\cL_{\rm EH}$ is the (second order) Lagrangian density~\cite{doi:10.1063/1.4998526} and~$\mathcal{I}_{{\rm con}}^\Sigma$ is the contact structure on the second order bundle~$J^2\tau_\Sigma$. Although it would be possible in principle to work with a second order jet, it is easier to deal with first order jets; so, we are forced to find a first order variational problem describing Einstein--Hilbert gravity. In fact, one of the results of the present article is to prove that there exists an $m$-form (not a density) $\cL_{\rm EH}^{(1)}$ on $J^1\tau_\Sigma$ such that
\[
 \big((\tau_\Sigma)_1\colon J^1\tau_\Sigma\to M,\cL_{\rm EH}^{(1)},\mathcal{I}_{{\rm con}}^\Sigma\big)
\]
also describes Einstein--Hilbert gravity~-- see Section~\ref{sec:first-order-vari}, where a correspondence between the extremals for these problems will be established.

In summary, a large part of the present article consists in repeating for Palatini gravity what was done for first order field theory in~\cite{Capriotti2019}; more concretely and in line with the previous discussion, we want to answer an specific problem through a particular set of techniques:
\begin{itemize}\itemsep=0pt
\item \textbf{Problem:} To establish correspondences between the extremals of the variational problem
 \[
 \big(\tau_1'\colon \cT_0\to M,\mathcal{L}_{\rm PG},\langle \omega_\pf\rangle\big),
 \]
 describing Palatini gravity, with the extremals of the variational problem
 \[
 \big((\tau_\Sigma)_2\colon J^2\tau_\Sigma\to M,\cL_{\rm EH},\mathcal{I}_{{\rm con}}^\Sigma\big)
 \]
 for Einstein--Hilbert gravity.
\item \textbf{Techniques:} The following diagram could help us to outline the strategy involved in proving the correspondences:
 \begin{equation}\label{eq:TechniquesDiagram}
 \begin{tikzcd}[row sep=1.3cm,column sep=3.5cm]
 (W_{\rm PG},\lambda_{\rm PG},0)
 \arrow[leftrightarrow,dashed]{r}{}
 \arrow[leftrightarrow,swap]{dd}{\circled{1}}
 &
(W_{\rm EH},\lambda_{\rm EH},0)
 \arrow[leftrightarrow]{d}{\circled{2}}
 \\
 &
 \big(J^1\tau_\Sigma,\cL_{\rm EH}^{(1)},\mathcal{I}_{{\rm con}}^\Sigma\big)
 \arrow[leftrightarrow]{d}{\circled{3}}
 \arrow[shift left=1ex,dotted]{ld}{\text{Reconstruction}}
 \\
 (\cT_0,\mathcal{L}_{\rm PG}, \langle\omega_{\pf} \rangle )
 \arrow[shift left=1ex,dotted]{ru}{\text{Reduction}}
 &
 \big(J^2\tau_\Sigma,\cL_{\rm EH},\mathcal{I}_{{\rm con}}^\Sigma\big)
 \end{tikzcd}
 \end{equation}
 The main idea is to replace the Griffiths variational problems representing Palatini and Einstein--Hilbert gravity with other kind of variational problems playing the role, in this generalized context, of the unified formalism. This operation is indicated schematically by arrows~$\circled{1}$ and $\circled{2}$. As we described above, formulation of the unified version of field theory requires to pass to a new phase space where multivelocities and multimomentum variables are represented. Nevertheless, the product bundle construction~\eqref{eq:WClassic} is not versatile enough to establish the correspondences~between the variational problems mentioned above because it depends on the interpretation of the bundle~$J^1\pi^\circ$ as a dual of the jet space $J^1\pi$. Instead, these correspondences could be studied by using a very general setting described in~\cite{GotayCartan}, called \emph{Lepage-equivalent variational problem}. It consists in lifting the variational problem $\big(\pi_1\colon J^1\pi\to M,\cL,\mathcal{I}_{{\rm con}}\big)$ to a variational problem on an affine subbundle of the bundle of $m$-forms on~$J^1\pi$, and this subbundle becomes isomorphic to (a~quotient of) $W_{\text{cl}}$. Concretely, let us fix an adapted coordinate chart $\big(\phi=\big(x^\mu,u^A\big),U\big)$ on~$E$; then we have a set of $(m-1)$-forms defined as follows
 \[
 \eta:={\rm d}x^1\wedge\cdots\wedge {\rm d}x^m,\qquad\eta_\nu:=\frac{\partial}{\partial x^\nu}\lrcorner\eta.
 \]
 In this setting, the Lepage-equivalent problem of Gotay is given by the data
 \[
(\pi_0\colon W_0\to M,\lambda_0)
 \]
 where
 \begin{gather}\label{eq:LepageEquivProblemGeneral}
 W_0:=\cL+{I}^m_{{\rm con}}\subset\wedge^m\big(T^*J^1\pi\big)
 \end{gather}
 and, locally, the bundle of forms $I^m_{{\rm con}}$ is given by
 \[ I^m_{{\rm con}}\big|_{(x^\mu,u^A,u^A_\nu)}:=\big\{p_A^\mu\big({\rm d}u^A-u^A_\nu {\rm d}x^\nu\big)\wedge\eta_\mu\colon p_A^\mu\in\mathbb{R}\big\}\subset\wedge^m\big(T^*_{(x^\mu,u^A,u^A_\nu)}\big(J^1\pi\big)\big).
 \]
 The Lagrangian form $\lambda_0$ arises from the fact that $W_0$ is a subbundle of the bundle of $m$-forms $\wedge^m\big(T^*J^1\pi\big)$; it has a canonical $m$-form $\lambda$, and $\lambda_0$ is its pullback to $W_0$. The underlying action becomes
 \[
 \Gamma\pi_0\ni\sigma\mapsto\int_U\sigma^*\lambda_0
 \]
 and the variations must be performed in an entirely free way; that is, in this case, the variations $\delta u^A$ and $\delta u^A_\mu$ on $J^1\pi$ are independent. As we said above, the coordinates along the fibers of the map
 \[
 q\colon \ W_0\to J^1\pi
 \]
 correspond to multimomentum variables, and we can see that
 \[
 W_0\simeq J^1\pi\times_MJ^1\pi^\circ,
 \]
 namely, the bundle $W_0$ is isomorphic to the total bundle~$W_{\text{cl}}$ in the unified formulation of first order field theory. Thus, it can be readily seen that Lepage-equivalent variational problems serve as a generalization of the unified formalism to variational problems of Griffiths type; the interested reader is referred to~\cite{GotayCartan} for details.

Part of the techniques we will use to establish the relationship (reduction and reconstruction) between the extremals of these variational problems are analogous to those that were employed in~\cite{Capriotti2019}: To transform the variational problems into its Lepage-equivalent problems, represented by the two-way arrows $\circled{1}$ and $\circled{2}$ in diagram \eqref{eq:TechniquesDiagram}, and to use the operations available there in order to set the correspondences. As we indicated above, in order to apply this construction to Einstein--Hilbert gravity, it will be necessary to consider an equivalent first-order variational problem instead; this operation is shown as arrow~$\circled{3}$ in the previous diagram. The rest of the techniques involved in the proof of reduction/reconstruction are represented by the dashed arrow: The basic idea is to take advantage of the fact that the variational problems we are trying to connect are formulated on an affine subbundles of a bundle of forms. As we said above, the naturality of forms respect to the pullback operation is crucial to compare the equations governing the extremals in every variational problem. Although in our present article this characteristic will not be used, it is still true that the vector bundle structure enhances this comparison by allowing the translation along a given form.
\end{itemize}

These considerations set the purposes of the following article: On the one hand, to carry out a proof of concept for the generalization of Routh reduction to variational problems more general than those corresponding to first order field theory, generalizing the techniques employed in~\cite{Capriotti2019}; on the other hand, to apply Routh reduction of field theory in the context of a meaningful example, namely, a formulation of gravity with basis. In short, the results achieved in the article could help to put the relationship between Einstein--Hilbert and Palatini gravity in a precise geometrical framework, which can in principle be extended to more sophisticated field theoretic analyses.

In this vein, it should be mentioned that among the objectives in mind while planning the article there was one of mathematical nature; in short, to contribute to the improvement of the language in a geometrical sense around the general question on how Einstein--Hilbert and Palatini versions of gravity are related. In this regard, it appears that the subject has continuously been developed since these formulations were found, but the use of a more precise geometrical language in describing the new achievements in the area has grown at a slower pace~-- see \cite{doi:10.1063/1.4890555,doi:10.1063/1.4998526, sardanashvily_classical_2011,vey_multisymplectic_2015} for examples dealing with the geometrical perspective in gravity. To some extent, this imbalance could place some obstacles to the natural development of some aspects of the subject; in these cases, the geometrical viewpoint might be more useful than a mere calculation in their understanding. Hence, the present article intends to contribute to a better comprehension of the correspondence between Einstein--Hilbert and Palatini gravity, through the formulation of this correspondence in a geometrical framework.

The rest of the paper is organized as follows: In Section~\ref{sec:vari-probl-unif} fundamentals of Griffiths variational problems and their relationship with the unified formalism in field theory are given. In Section~\ref{sec:LFT} geometrical tools necessary for the construction of the variational problem for Palatini gravity we will use in this article are reviewed. In Section~\ref{sec:vari-probl-palat} the actual construction of this variational problem, as well as the associated unified problem, is done. The symmetry considerations necessary to carry out the reduction are discussed in Section \ref{sec:symmetry-momentum}. Section~\ref{sec:local-coord-expr} is rather technical, containing some calculations used in the reduction and reconstruction theorems. In Section~\ref{sec:metr-cont-struct} the results achieved in the previous section are employed in the search of identifications between geometrical structures present in both the reduced and unreduced spaces: A remarkable fact in this vein is that the metricity constraints correspond with the contact structure of a jet bundle after projection onto the quotient. Construction of the first order formalism for Einstein--Hilbert gravity (and its correspondence with the usual second order formalism) is delayed until Section~\ref{sec:first-order-vari}; also, a unified formalism for this variational problem is discussed in this section. The choice of a connection induces a splitting in the contact structure on the jet space of the frame bundle. In Section~\ref{sec:cont-bundle-decomp} the effects of this splitting in the variational formulation of Palatini gravity are analyzed. In Section~\ref{sec:routhian} the Routhian is constructed, showing that the Routhian for Palatini gravity is the (first order) Einstein--Hilbert Lagrangian. Finally, in Section~\ref{sec:einst-hilb-grav} the reduction theorem and the reconstruction theorem are proved. The main result of this section is the notion of \emph{flat condition for a metric}, which is a helpful hypothesis in the proof of the reconstruction theorem. Also, it is proved there that this condition is equivalent to the parallelizability of the spacetime manifold $M$; therefore, the reconstruction scheme can be carried out only locally.

\subsubsection*{Notations} We are adopting here the notational conventions from \cite{saunders89:_geomet_jet_bundl} when dealing with bundles and its associated jet spaces. Also, if $Q$ is a manifold, $\Lambda^p (Q)=\wedge^p(T^*Q)$ denotes the $p$-th exterior power of the cotangent bundle of $Q$. Moreover, for $k\leq l$ the set of $k$-horizontal $l$-forms on the bundle $\pi\colon P\to N$ is
\begin{gather}\label{eq:NotationVerticalForms}
 \wedge^l_k(P):=\big\{\alpha\in\wedge^l(P)\colon v_1\lrcorner\cdots v_k\lrcorner\alpha=0\text{ for any }v_1,\dots,v_k\text{ }\pi\text{-vertical vectors}\big\}.
\end{gather}
For the same bundle, the set of vectors tangent to $P$ in the kernel of $T\pi$ will be represented with the symbol $V\pi\subset TP$. In this regard, the set of vector fields which are vertical for a bundle map $\pi\colon P\to N$ will be indicated by $\mathfrak{X}^{V\pi}(P)$. The space of differential $p$-forms, sections of $\Lambda^p (Q)\to Q$, will be denoted by $\Omega^p(Q)$. {We also write $\Lambda^\bullet(Q)=\bigoplus_{j=1}^{\dim Q}\Lambda^j(Q)$}. If $f\colon P\to Q$ is a smooth map and $\alpha_x$ is a $p$-covector on $Q$, we will sometimes use the notation $\alpha_{f(x)}\circ T_xf$ to denote its pullback $f^*\alpha_x$. If $P_1\to Q$ and $P_2\to Q$ are fiber bundles over the same base $Q$ we will write $P_1\times_Q P_2$ for their fibred product, or simply $P_1\times P_2$ if there is no risk of confusion. Unless explicitly stated, the canonical projections onto its factor will be indicated by
\[
 {\rm pr}_i\colon \ P_1\times P_2\to P_i,\qquad i=1,2
\]
or, in general
\[
 {\rm pr}_{i_1\cdots i_k}\colon \ P_1\times\cdots\times P_L\to P_{i_1}\times\cdots\times P_{i_k}
\]
for any collection of different indices $i_1,\dots,i_k\in\{1,\dots,L\}$.

Given a manifold $N$ and a Lie group $G$ acting on $N$, the symbol $[n]_G$ for $n\in N$ will indicate the $G$-orbit in $N$ containing $n$; the canonical projection onto its quotient will be denoted by
\[
 p_G^N\colon \ N\to N/G.
\]
Also, if $\mathfrak{g}$ is the Lie algebra for the group~$G$, the symbol $\xi_N$ will represent the infinitesimal generator for the $G$-action associated to~$\xi\in\mathfrak{g}$. Finally, Einstein summation convention will be used everywhere.

\section{Variational problems and unified formalism}\label{sec:vari-probl-unif}

As we mentioned in Section~\ref{sec:Intro}, the scheme used for Routh reduction relies on the notion of unified formulation of a variational problem. So, we will devote the present section to describe the construction of a unified formalism for a particular family of variational problem, the so called \emph{Griffiths variational problems}.
\begin{Definition}[Griffiths variational problem]\label{def:GriffithsVarProb}
A \emph{Griffiths variational problem} is a triple
\[
(p\colon W\to M,\lambda,\mathcal{I})
\]
where $p\colon W\to M$ is a bundle, $\lambda\in\Omega^m(W)$ is a form on~$W$ (here~$m$ is the dimension of the base manifold $M$) and $\mathcal{I}$ is an exterior differential system. The underlying variational problem consists in finding the extremals of the action
\[
 \Gamma p\ni\sigma\mapsto\int_U\sigma^*\lambda,
\]
where $U\subset M$ is any compact submanifold and $\sigma$ is a section integral for $\mathcal{I}$, namely,
\[
 \sigma^*\alpha=0
\]
for any $\alpha\in\mathcal{I}$.
\end{Definition}

\begin{Remark}A particular instance of a Griffiths variational problem is provided by the so called \emph{classical variational problem}, which is the variational problem underlying the first order classical field theory~\cite{Gotay:1997eg}: In it, the underlying bundle is the jet space $\pi_1\colon J^1\pi\to M$ associated to a bundle $(E,\pi,M)$, the Lagrangian form is induced on $\Omega^m\big(J^1\pi\big)$ by the Lagrangian density
\[
 \cL\colon \ J^1\pi\to\wedge^m (T^*M)
\]
and the set of differential restrictions is imposed by the so called \emph{contact structure on the jet bundle}. Particularly, it means that the sections to be evaluated in the action should be holonomic.
\end{Remark}

In view of the previous discussion, there are two crucial differences between a classical variational problem and a more general Griffiths variational problem that we would like to point out:
\begin{itemize}\itemsep=0pt
\item First of all, a classical variational problem (of first order) is formulated in a first order jet bundle, whereas a Griffiths variational problem can use any bundle in principle.
\item More important is the fact that the sections are integral for the set of forms~$\mathcal{I}$, that in the general case (as in the present article) could be different of the set of forms belonging to the contact structure.
\end{itemize}

A construction used in \cite{GotayCartan} becomes relevant to this article: Given a Griffiths variational problem $(p\colon W\to M,\lambda,\mathcal{I})$, it is possible to build another bundle $p'\colon W'\to M$ and a bundle map $q\colon W'\to W$ covering the identity on~$M$. The new bundle becomes part of a new variational problem $(p'\colon W'\to M,\lambda',0)$, and the extremals of the new variational problem can be set into a one-to-one correspondence with the extremals of the original variational problem through the map~$q$. The variational problem $(p'\colon W'\to M,\lambda',0)$ is called a \emph{Lepage-equivalent variational problem}.\footnote{There are some subtleties regarding the use of the word ``equivalent'' in this context; they will not be discussed here, so we are referring to the readers interested in these questions to the original article of Gotay.} The coordinates along the fibers of the map $p'\colon W'\to W$ can be seen as a generalization of momentum variables in the context of these variational problems. An important feature of a Lepage-equivalent variational problem is that it imposes no differential restrictions on the sections of $p'\colon W'\to M$, which could be useful when variations are performed.

A particular instance of this construction called \emph{canonical Lepage-equivalent problem} will be relevant for the present article. It requires that the differential constraints encoded by the exterior differential system $\mathcal{I}$ be generated by the sections of a subbundle $I\subset\wedge^\bullet (T^*W)$; under such conditions, we can define
\[
 W'|_w:=\big\{\lambda(w)+\beta\colon \beta\in I\cap\wedge^m (T^*_wW )\big\}\subset\wedge^m (T^*W )
\]
for all $w\in W$. The projection $p'\colon W'\to W$ is the restriction of the canonical map $\overline{\tau}^m_W\colon \wedge^m (T^*W)\to W$ to this subbundle, and the $m$-form $\lambda'$ is the pullback of the canonical $m$-form $\Theta\in\Omega^m (\wedge^m (T^*W))$ to~$W'$.

In order to describe the relevance of this construction, it will be necessary to refer to the concept of \emph{unified formalism for classical first order field theory}, as defined in \cite{1751-8121-40-40-005,2004JMP....45..360E,Prieto-Martinez2015203} and references therein. This formulation becomes useful when dealing with variational problems whose Lagrangian densities have singular Legendre transformations. The trick is to lift the variational problem to an space where both velocities and momenta are included and to forget about the differential restrictions on the fields imposed by the contact structure of the jet bundle. Therefore, variations can be performed without having to identify the independent degrees of freedom, and the formulas defining Legendre transform become part of the equations of motion; on the downside, these equations of motion are in general of the algebraic-differential type, and thus more difficult to deal with.
As we mentioned in the Introduction, this underlying bundle $W_{\text{cl}}$ for the unified formulation of field theory is given by equation~\eqref{eq:WClassic}; the Lagrangian functional~$\cL$ gives rise to an $m$-form $\Theta_\cL$ on $W_{\text{cl}}$, and the associated action becomes
\[
 \Gamma (W_{\text{cl}}\to M )\ni\sigma\mapsto\int_U\sigma^*\Theta_\cL.
\]

Finally, let us discuss briefly the relationship between Lepage-equivalent problems and the unified formalism. It can be seen that, when restricted to the particular case of the classical variational problem, the canonical Lepage-equivalent variational problem devised by Gotay reduces to the variational problem associated to the unified formalism. In fact, if the fields are sections of a bundle $\pi\colon E\to M$, the triple describing the associated Griffiths variational problem is
\[
 \big(\pi_1\colon J^1\pi\to M,\cL,\mathcal{I}_{{\rm con}}\big),
\]
where $\mathcal{I}_{{\rm con}}\subset\Omega^\bullet\big(J^1\tau\big)$ is the exterior differential system induced by the contact structure. In this case, we have the correspondences
\[
 W\leftrightsquigarrow J^1\pi,\qquad\lambda\leftrightsquigarrow\cL,\qquad\mathcal{I}\leftrightsquigarrow\mathcal{I}_{{\rm con}},\qquad W'\leftrightsquigarrow W_{\text{cl}},
\]
allowing us to think on Lepage-equivalent problems as generalizations of the unified formalism for field theory.

\section{Geometrical tools for Palatini gravity}\label{sec:LFT}

We choose to focus on variational problems of Griffiths type because there exists a description of Palatini gravity in terms of this kind of variational problems \cite{doi:10.1142/S0219887818500445}. The present section is devoted to give a brief account of the geometrical ingredients involved in this construction.

\subsection{Geometry of the jet space for the frame bundle}
\label{sec:geometry-jet-space}

The basic bundle is the frame bundle $\tau\colon LM\rightarrow M$ on the spacetime manifold $M$ ($\dim M=m$); because it is a principal bundle with structure group ${\rm GL}(m)$, we can lift this action to the jet bundle $J^1\tau$, so that we obtain a commutative diagram
\[
 \begin{tikzcd}[row sep=1.3cm,column sep=1.1cm]
 &
 J^1\tau
 \arrow[swap]{dl}{\tau_{10}}
 \arrow{dr}{p_{{\rm GL}(m)}^{J^1\tau}}
 \arrow{dd}{\tau_1}
 &
 \\
 LM
 \arrow[swap]{dr}{\tau}
 &
 &
 C(LM)
 \arrow{dl}{\overline{\tau}}
 \\
 &
 M
 &
 \end{tikzcd}
\]
where $C(LM):=J^1\tau/{\rm GL}(m)$ is the so called \emph{connection bundle of $LM$}, whose sections can be naturally identified with the principal connections of the bundle $\tau$ -- for details, see \cite{springerlink:10.1007/PL00004852} and references therein. It is interesting to note that there exists an affine isomorphism
\[
 F\colon \ J^1\tau\rightarrow LM\times_MC(LM)\colon \ j_x^1s\mapsto\big(s(x),\big[j_x^1s\big]_{{\rm GL}(m)}\big)
\]
and under this correspondence, the ${\rm GL}(m)$-action is isolated to the first factor in the product, namely
\begin{gather}\label{eq:IdentificationJetBundle}
 F\big(j_x^1s\cdot g\big)=\big(s(x)\cdot g,\big[j_x^1s\big]_{{\rm GL}(m)}\big).
\end{gather}
It means that a section of the bundle $\tau_1$ is equivalent to a connection on~$LM$ plus a moving frame $(X_1,\dots, X_m)$ on $M$; although this moving frame has no direct physical interpretation, we can associate a metric to it, namely, in contravariant terms,
\[
 g:=\eta^{ij}X_i\otimes X_j
\]
for some nondegenerate symmetric matrix $\eta$ -- see equation~\eqref{eq:EtaDefinition} below. It is the same to declare that the metric $g$ is the unique metric on $M$ making the moving frame $(X_1,\dots,X_m)$ (pseudo)orthonormal, with the signature given by $\eta$.

The tautological form $\widetilde{\theta}\in\Omega^1 (LM,\R^m )$ can be pulled back along $\tau_{10}$ to a $1$-form $\theta:=\tau_{10}^*\widetilde{\theta}$ on~$J^1\tau$; moreover, the Cartan form $\widetilde{\omega}\in\Omega^1\big(J^1\tau,V\tau\big)$, given by the formula
\[
\widetilde{\omega}|_{j_x^1s}:=T_{j_x^1s}\tau_{10}-T_xs\circ T_{j_x^1s}\tau_1,
\]
gives rise to a $\mathfrak{gl}(m)$-valued $1$-form $\omega$ on $J^1\tau$, by using the identification
\[
 V\tau\simeq LM\times\mathfrak{gl}(m).
\]
By means of the canonical basis $\{e_i\}$ on $\R^m$ and $\big\{E^i_j\big\}$ on $\mathfrak{gl}(m)$, where
\[
 \big(E_j^i\big)^q_p:=\delta^q_j\delta^i_p,
\]
we can define the collection of $1$-forms $\big\{\theta^i,\omega^i_j\big\}$ on $J^1\tau$ such that
\[
 \theta=\theta^ie_i,\qquad\omega=\omega ^j_iE^i_j.
\]
We also have the formula
\[
 \widetilde{\omega}=\omega ^j_i\big(E^i_j\big)_{J^1\tau},
\]
where $A_{J^1\tau}\in\mathfrak{X}^{Vp_{{\rm GL}(m)}^{J^1\tau}}\big(J^1\tau\big)$ is the infinitesimal generator associated to $A\in\mathfrak{gl}(m)$ for the lifted action. It can be proved that $\omega$ is a connection form for a principal connection on the bundle
\[
 p_{{\rm GL}(m)}^{J^1\tau}\colon \ J^1\tau\rightarrow C (LM ).
\]

\begin{Remark}[coordinates on the jet space $J^1\tau$ and the connection bundle $\overline{\tau}\colon C(LM)\to M$]\label{rem:CoordinatesJetConnections}
 Let $(\phi=(x^\mu),U)$ be a coordinate chart on $M$; for $u=(X_1,\dots,X_m)\in\tau^{-1}(U)$, we define the maps $u\mapsto e_i^\mu(u)$ such that
 \[
 X_i=e_i^\mu(u)\frac{\partial}{\partial x^\mu}
 \]
 for $1\leq i\leq m$. Then
 \[
 u\mapsto\big(x^\mu(\tau(u)),e^\mu_i(u)\big)
 \]
 defines a set of adapted coordinates on $\tau^{-1}(U)$; let $\big(x^\mu,e^\mu_i,e^\nu_{k\sigma}\big)$ be the associated coordinates on $\tau_1^{-1}(U)\subset J^1\tau$. We can also define the functions
 \[
 \Gamma_{\rho\sigma}^\mu:=-e^k_\rho e^\mu_{k\sigma},
 \]
 on $\tau_1^{-1}(U)$, and we can prove that they are ${\rm GL}(m)$-invariant; so, they define a set of coordinates on $\overline{\tau}^{-1}(U)\subset C(LM)$.
\end{Remark}

Let us define
\[
 \theta_0:=\theta^1\wedge\cdots\wedge\theta^m;
\]
as every $u\in LM$ is a collection $u=(X_1,\dots,X_m)$ of vectors on $\tau(u)\in M$, and $\theta^i$ is a $\tau_1$-horizontal $1$-form on $J^1\tau$, we can define the set of forms{\samepage
\[
\theta_{i_1\cdots i_k}|_{j_x^1s}:=X_{i_1}\lrcorner\cdots X_{i_k}\lrcorner \theta_0|_{j_x^1s}
\]
for $1\leq i_1,\dots,i_k\leq m$, where $j_x^1s\in J^1\tau$ is any element such that $u=\tau_{10}\big(j_x^1s\big)$.}

Although the reduction scheme we will develop in this article would work for any signature, let us fix it using the matrix
\begin{gather}\label{eq:EtaDefinition}
 \eta:=
 \begin{bmatrix}
 -1&0&\cdots&0\\
 0&1&&0\\
 \vdots&&\ddots&\vdots\\
 0&\cdots&0&1
 \end{bmatrix}\in {\rm GL}(m).
\end{gather}
Accordingly, let $\eta_{ij}$ be its $(i,j)$-entry; we will represent with the symbol $\eta^{ij}$ the $\left(i,j\right)$-entry of its inverse. With these ingredients we can construct the \emph{Palatini Lagrangian}
\begin{gather}\label{eq:PalatiniLagrangianInvariant}
 \cL_{\rm PG}:=\eta^{ip}\theta_{ik}\wedge\Omega^k_p,
\end{gather}
where $\Omega:=\Omega^i_jE^j_i$ is the curvature of the canonical connection $\omega$. This $m$-form will determine the dynamics of the vacuum gravity in this formulation.

Finally, let us describe a decomposition of $\mathfrak{gl}(m)$ induced by $\eta$. In fact, this matrix yields to a real form $\uf$ in $\mathfrak{gl}\left(m,\C\right)$, given by
\[
 \uf=\big\{\xi\in\mathfrak{gl} (m,\C )\colon \xi^\dagger\eta+\eta\xi=0\big\}
\]
and thus we have a Cartan decomposition
\[
 \mathfrak{gl} (m,\C )=\uf\oplus\mathfrak{s}.
\]
Given the inclusion
\[
 \mathfrak{gl}(m)\subset\mathfrak{gl} (m,\C),
\]
we obtain the decomposition
\[
 \mathfrak{gl}(m)=\kf\oplus\pf.
\]
The subalgebra $\kf$ is the Lie algebra of the subgroup $K\subset {\rm GL}(m)$, composed of the linear transformations keeping invariant the matrix~$\eta$,
\[
 K:=\big\{A=\big(A_i^j\big)\colon \eta_{ij}A^i_kA^j_l=\eta_{kl}\big\}.
\]
The canonical action of ${\rm GL}(m)$ on $LM$ restricts to an free action of $K$ on this bundle; accordingly, we have the $K$-principal bundle
\[
 p_K^{LM}\colon \ LM\to\Sigma:=LM/K.
\]
The bundle $\tau_\Sigma\colon \Sigma\to M$ induced by this quotient has an immediate physical meaning: Any section $\zeta\colon M\to\Sigma$ is a metric on~$M$ with $\eta$-signature. In fact, let us define the map $q\colon LM\to TM\otimes_MTM$ via
\[
 q (X_1,\dots,X_m ):=\eta^{ij}X_i\otimes X_j;
\]
indicating by $R_k\colon LM\to LM$ the action of an element $k\in K$ on $LM$, it can be proved that $q\circ R_k=q$, so that there exists a map $\overline{q}\colon LM/K\to TM\otimes_MTM$ making the following diagram commutative
\[
 \begin{tikzcd}[row sep=1.3cm,column sep=2.1cm]
 LM
 \arrow{r}{p_K^{LM}}
 \arrow[swap]{dr}{q}
 &
 \Sigma
 \arrow{d}{\overline{q}}
 \\
 &
 TM\otimes_MTM
 \end{tikzcd}
\]
When restricted to the subbundle of nondegenerate symmetric $2$-tensors (with $\eta$-signature) on~$M$, this map becomes a bundle isomorphism on~$M$, allowing us to interpret~$\Sigma$ as the bundle of metrics with $\eta$-signature on~$M$.

\begin{Remark}[adapted coordinates for the principal bundle $p_K^{LM}\colon LM\to\Sigma$]\label{rem:CoordinatesForMetrics}
 It will be useful to introduce a set of coordinates on $LM$ adapted to the quotent map~$p_K^{LM}$. In order to proceed, we need a theorem on generalized polar decomposition~\cite{higham2003j}. So, let $U\subset M$ be a parallelizable open set in $M$, let $\{Z_1,\dots,Z_m\}\subset\mathfrak{X}(U)$ be a moving basis on $U$, and $\{e_1,\dots,e_n\}$ the canonical basis on $\mathbb{R}^m$; for every $x\in U$ denote $V=T_xM$ and $W=\mathbb{R}^m$. The matrix $\eta$ defines a linear map
 \[
 \eta\colon \ W\to W^*;
 \]
 another operator that will be important for the formulation of the polar decomposition theorem is
 \[
 I\colon \ V\to W\colon \ Z_\mu\mapsto\delta_\mu^ie_i.
 \]
 Also, set
 \[
 \widetilde{\eta}:=I^*\circ\eta\circ I.
 \]
 It is clear that, in this setting, a metric is a map
 \[
 g_x\colon \ V\to V^*
 \]
 for every $x\in U$, and a vielbein becomes a linear map
 \[
 E\colon \ V\to W.
 \]
 Then, the following can be proved.
 \begin{Theorem}[generalized polar decomposition]
 Let $g\colon V\to V^*$ be an invertible linear map such that $\widetilde{\eta}\circ g^{-1}$ has no eigenvalues on the nonpositive real axis. Then $g$ can be uniquely descomposed as
 \[
 g=Q\circ s,
 \]
 where $Q\colon W\to V^*$ satisfies
 \[
 Q^*\circ\widetilde{\eta}\circ Q=\eta
 \]
 and for $s\colon V\to W$ the following property
 \[
 I^*\circ\eta\circ s= (I^*\circ\eta\circ s )^*
 \]
 holds.
 \end{Theorem}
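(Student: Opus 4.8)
The plan is to recognize the statement as an instance of the generalized (hyperbolic) polar decomposition and to reduce it, via the isometry $I$, to a factorization of a single endomorphism of the indefinite inner‑product space $(V,\widetilde\eta)$. I would introduce the endomorphism $T:=\widetilde\eta^{-1}\circ g\colon V\to V$ and, for any $A\colon V\to V$, its $\widetilde\eta$-adjoint $A^{\star}:=\widetilde\eta^{-1}\circ A^{*}\circ\widetilde\eta$. Since $\widetilde\eta\circ g^{-1}$ is conjugate to $T^{-1}$, its eigenvalues are the reciprocals of those of $T$, so the hypothesis is equivalent to $T$ having no eigenvalue on the nonpositive real axis. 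Setting $\sigma:=I^{-1}\circ s$ and $\rho:=\widetilde\eta^{-1}\circ Q\circ I$, one checks that $g=Q\circ s$ becomes $T=\rho\circ\sigma$, that the condition on $Q$ is equivalent to $\rho^{*}\circ\widetilde\eta\circ\rho=\widetilde\eta$ (that is, $\rho$ is $\widetilde\eta$-orthogonal, $\rho^{\star}\rho=\mathrm{id}$), and that the condition on $s$ is equivalent to $\widetilde\eta\circ\sigma=\sigma^{*}\circ\widetilde\eta$ (that is, $\sigma$ is $\widetilde\eta$-self-adjoint, $\sigma^{\star}=\sigma$). Thus the theorem is equivalent to the existence and uniqueness of a polar factorization $T=\rho\sigma$ into an orthogonal and a self-adjoint factor for the form $\widetilde\eta$.

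For existence I would construct the self-adjoint factor as a square root. Because $(T^{\star}T)^{\star}=T^{\star}T^{\star\star}=T^{\star}T$, the operator $T^{\star}T$ is $\widetilde\eta$-self-adjoint, and I would set $\sigma:=(T^{\star}T)^{1/2}$, its principal square root. As a primary matrix function, $\sigma$ is a real polynomial in $T^{\star}T$, hence invertible, self-adjoint ($\sigma^{\star}=\sigma$), and with spectrum in the open right half-plane. Putting $\rho:=T\circ\sigma^{-1}$, orthogonality is immediate from $\rho^{\star}\rho=\sigma^{-1}(T^{\star}T)\sigma^{-1}=\sigma^{-1}\sigma^{2}\sigma^{-1}=\mathrm{id}$, using $\sigma^{2}=T^{\star}T$. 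Unwinding $s=I\circ\sigma$ and $Q=\widetilde\eta\circ\rho\circ I^{-1}$ then returns factors $Q,s$ satisfying the two stated conditions.

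For uniqueness, suppose $T=\rho_{1}\sigma_{1}=\rho_{2}\sigma_{2}$ are two admissible factorizations. Applying $(\cdot)^{\star}(\cdot)$ and using $\rho_{i}^{\star}\rho_{i}=\mathrm{id}$ gives $\sigma_{i}^{2}=\sigma_{i}^{\star}\rho_{i}^{\star}\rho_{i}\sigma_{i}=T^{\star}T$, so $\sigma_{1},\sigma_{2}$ are self-adjoint square roots of the same operator; requiring each to be the \emph{principal} root (spectrum in the open right half-plane) forces $\sigma_{1}=\sigma_{2}$ and hence $\rho_{1}=\rho_{2}$. I would flag here that the bare self-adjointness printed in the statement does not by itself single out $s$ (one may replace $\sigma$ by any other self-adjoint square root); the operative normalization is that $\sigma$ be the principal square root, which is exactly what the spectral hypothesis secures.

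The main obstacle is precisely the existence of this principal square root in the indefinite setting. Since $\widetilde\eta$ has Lorentzian signature, $T^{\star}T$ is self-adjoint for an \emph{indefinite} form and so need not be diagonalizable and need not have real, let alone positive, spectrum, so the spectral-theorem argument for positive operators is unavailable; the role of the eigenvalue hypothesis is to keep the spectrum of the relevant operator off the nonpositive real axis, which is the sharp condition under which $(\,\cdot\,)^{1/2}$ exists as a primary matrix function depending holomorphically on the operator. This is the input I would take from the generalized polar decomposition of~\cite{higham2003j}; the one point requiring care in the adaptation is verifying that the condition imposed on $\widetilde\eta\circ g^{-1}$ indeed furnishes the corresponding condition on $T^{\star}T$ needed to invoke that square-root existence, after which the reduction above closes the argument.
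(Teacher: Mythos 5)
The paper itself offers no proof of this theorem: it is quoted from the cited reference \cite{higham2003j} on generalized polar decomposition (``Then, the following can be proved''), so the only meaningful comparison is with that source. Your argument is exactly the standard one used there: translate the two stated conditions into $T=\rho\circ\sigma$ with $\rho$ $\widetilde{\eta}$-orthogonal and $\sigma$ $\widetilde{\eta}$-self-adjoint, take $\sigma=(T^{\star}T)^{1/2}$ to be the principal square root and $\rho=T\circ\sigma^{-1}$. The reduction of the conditions on $Q$ and $s$ is carried out correctly (after silently repairing the type mismatch in the printed condition $Q^{*}\circ\widetilde{\eta}\circ Q=\eta$, which only composes if read as $Q^{*}\circ\widetilde{\eta}^{-1}\circ Q=\eta$), and the existence and uniqueness computations $\rho^{\star}\rho=\sigma^{-1}(T^{\star}T)\sigma^{-1}=\mathrm{id}$ and $\sigma_{i}^{2}=T^{\star}T$ are sound.

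Both caveats you flag are genuine, and they are defects of the statement as transcribed rather than of your argument. On uniqueness: with only self-adjointness imposed on $s$, the factorization is not unique; for the symmetric $g$ to which the remark applies one has $T^{\star}=T$, so $\rho=\mathrm{id}$, $\sigma=T$ is always an admissible factorization alongside $\rho=-\mathrm{id}$, $\sigma=-T$ (and alongside the principal-root one when it exists), so the principal-square-root normalization you identify is indeed the missing ingredient. On the hypothesis: the stated condition controls the spectrum of $T$ (equivalently of $\widetilde{\eta}\circ g^{-1}$), whereas the square-root construction needs the spectrum of $T^{\star}T$ to avoid $(-\infty,0]$, and for symmetric $g$ one has $T^{\star}T=T^{2}$. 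These are not equivalent: with $\widetilde{\eta}=\mathrm{diag}(-1,1)$ and $g$ the off-diagonal symmetric matrix (a legitimate metric of the right signature), $T$ has eigenvalues $\pm i$, so the stated hypothesis holds, yet $T^{2}=-\mathrm{id}$ has eigenvalue $-1$ and admits no principal square root; for this $g$ the uniqueness assertion actually fails. So the step you leave open cannot be closed from the hypothesis as written; the condition should be imposed on $T^{\star}T$ (equivalently, for metrics, on $\big(\widetilde{\eta}\circ g^{-1}\big)^{2}$). Since the hypothesis is only used in the paper to carve out an open neighbourhood of $\widetilde{\eta}$ in the bundle of metrics, this repair does not affect the application in Remark~\ref{rem:CoordinatesForMetrics}.
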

 Therefore, the condition on the factor $Q$ tells us that the composite map
 \[
 I^*\circ Q\colon \ W\to W^*
 \]
 belongs to $K$. Thus, let $V\subset\tau_\Sigma^{-1}(U)$ be an open set in~$\Sigma$ such that $\widetilde{\eta}\circ g^{-1}$ has no eigenvalues on the nonpositive real axis for every $g\in V$. Using this theorem, we can define the $K$-bundle isomorphism
 \[
 u\in\big(p_K^{LM}\big)^{-1}(V)\mapsto\big(p_K^{LM}(u),I^*\circ Q\big)\in V\times K,
 \]
 where $Q$ is the corresponding factor in the polar decomposition of $g=p_K^{LM}(u)$. Finally, given a~set of coordinates $\big(x^\mu,g^{\mu\nu}\big)$ on $\tau_\Sigma^{-1}(U)$ and $\big(k_{i}^j\big)$ on~$K$, we can introduce the set of coordinates
 \[
 u\mapsto\big(x^{\mu},g^{\mu\nu},k_i^j\big)
 \]
 on $\tau_1^{-1}(U)\subset LM$, where $\big(x^\mu,g^{\mu\nu}\big)$ are the coordinates for~$p_K^{LM}(u)$ and $\big(k_i^j\big)$ are the coordinates on~$K$ for~$I^*\circ Q$.
\end{Remark}

\subsection[Restrictions in Palatini gravity: Zero torsion submanifold and metricity forms]{Restrictions in Palatini gravity: Zero torsion submanifold\\ and metricity forms}\label{sec:zero-tors-subm}

It is time to discuss the restrictions we must impose on the sections of~$\tau_1$ in order to have a~characterization of a gravity field in this description. Our aim is to describe a metric and a~connection on the spacetime, and the restrictions to be considered will establish the relationship between them; this approach has been extensively discussed in the references \cite{capriotti14:_differ_palat,doi:10.1142/S0219887818500445}.

Recall that, according to the identification performed by the map~$F$~-- see equation~\eqref{eq:IdentificationJetBundle} above~-- a section of $J^1\tau$ can be seen as a pair composed by a frame plus a connection. Correspondingly, there are two types of conditions to be imposed to a section of~$J^1\tau$, each of them motivated on physical grounds which we will not discuss here:
\begin{enumerate}\itemsep=0pt
\item[1)] the connection which is a solution for the field equations of Palatini gravity must be torsionless, and
\item[2)] this connection must be metric for the solution metric.
\end{enumerate}

The canonical forms defined in the previous section allow us to set the \emph{torsion form}
\[
 T:=\big({\rm d}\theta^j+\omega^j_k\wedge\theta^k\big)\otimes e_j\in\Omega^2\big(J^1\tau,\R^m\big).
\]
Now, every connection $\Gamma\colon M\rightarrow C(LM)$ gives rise to a section $\sigma_\Gamma\colon LM\rightarrow J^1\tau$ of the bundle $\tau_{10}\colon J^1\tau\rightarrow LM$, as the equivariance of the following diagram shows
\[
 \begin{tikzcd}[row sep=1.3cm,column sep=1.1cm]
 &
 J^1\tau
 \arrow{dl}{\tau_{10}}
 \arrow{dr}{p_{{\rm GL}(m)}^{J^1\tau}}
 &
 \\
 LM
 \arrow{dr}{\tau}
 \arrow[dashed,bend left=45]{ur}{\sigma_\Gamma}
 &
 &
 C(LM)
 \arrow[swap]{dl}{\overline{\tau}}
 \\
 &
 M
 \arrow[dashed,bend right=45,swap]{ur}{\Gamma}
 &
 \end{tikzcd}
\]
The interesting fact is that the pullback form $\sigma_\Gamma^*T$ coincides with the torsion of the connection~$\Gamma$. Additionally, it can be proved that $T$ is a $1$-horizontal form on $\tau_1\colon J^1\tau\to M$, so that there exists a maximal (respect to the inclusion) submanifold $i_0\colon \cT_0\hookrightarrow J^1\tau$ such that
\begin{enumerate}\itemsep=0pt
\item[1)] $\cT_0$ is transversal to the fibers of $\tau_1\colon J^1\tau\rightarrow M$ (namely, $T_{j^1_xs} (\cT_0)\oplus V_{j_x^1s}\tau_1=T_{j_x^1s}\left(J^1\tau\right)$), and
\item[2)] it annihilates the torsion, i.e.,
 \[
 i_0^*T=0.
 \]
\end{enumerate}
The transformation properties of the form $T$ allow us to conclude that $\cT_0$ is ${\rm GL}(m)$-invariant. The connections associated to sections of $J^1\tau$ taking values in $\cT_0$ are torsionless, so that the zero torsion restriction can be achieved through the requirement that these sections take values in this submanifold. Accordingly, we can use the affine isomorphism $F\colon J^1\tau\to LM\times C(LM)$, to define the \emph{bundle of torsionless connections} as the bundle $\overline{\tau}'\colon C_0(LM)\to M$ obtained by restricting~$F$ to~$\cT_0$
\[
 C_0(LM):={\rm pr}_2 (F (\cT_0 ) ).
\]
Moreover, the following lemma can be proved using standard facts about principal bundles \cite{KN1}.

\begin{Lemma}\label{lem:torsion-zero-bundle}
 The submanifold $\cT_0\subset J^1\tau$ is a subbundle of the ${\rm GL}(m)$-bundle $p_{{\rm GL}(m)}^{J^1\tau}\colon J^1\tau\to C(LM)$, associated to the isomorphism ${\rm id}\colon {\rm GL}(m)\to {\rm GL}(m)$. Also, it is a ${\rm GL}(m)$-principal bundle with respect to the restriction of the ${\rm GL}(m)$-action.
\end{Lemma}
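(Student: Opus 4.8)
The plan is to realize $\cT_0$ as the restriction of the principal ${\rm GL}(m)$-bundle $p_{{\rm GL}(m)}^{J^1\tau}$ over a submanifold of its base $C(LM)$, and then to invoke the standard fact that such a restriction is again a principal bundle with the same structure group.

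First I would exploit the ${\rm GL}(m)$-invariance of $\cT_0$, already deduced above from the transformation properties of the torsion form $T$. Writing $p:=p_{{\rm GL}(m)}^{J^1\tau}\colon J^1\tau\to C(LM)$, each fiber of this principal bundle is a single ${\rm GL}(m)$-orbit; since $\cT_0$ is a union of orbits, it must be $p$-saturated, so that whenever a fiber $p^{-1}(c)$ meets $\cT_0$ it is contained in $\cT_0$. Consequently $\cT_0=p^{-1}(C_0(LM))$, where $C_0(LM)=p(\cT_0)$ is precisely the bundle of torsionless connections defined through $F$.

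The heart of the argument is to show that $C_0(LM)$ is an embedded submanifold of $C(LM)$. I would transport the picture through the affine isomorphism $F\colon J^1\tau\to LM\times_M C(LM)$. By the action formula~\eqref{eq:IdentificationJetBundle} the ${\rm GL}(m)$-action is confined to the $LM$-factor, so $F(\cT_0)$ is saturated for ${\rm pr}_2\colon LM\times_M C(LM)\to C(LM)$; using invariance together with the transitivity of ${\rm GL}(m)$ on the fibers of $\tau$ one obtains $F(\cT_0)={\rm pr}_2^{-1}(C_0(LM))$. Since ${\rm pr}_2$ is a surjective submersion (it is the pullback of $\tau$ along $\overline{\tau}$), any local section $s$ of ${\rm pr}_2$ is automatically transversal to the saturated submanifold $F(\cT_0)$, because the tangent space of $F(\cT_0)$ already contains all ${\rm pr}_2$-vertical directions while $ds$ supplies a complement. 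The preimage $s^{-1}(F(\cT_0))$ then recovers $C_0(LM)$ locally, and transversality guarantees it is an embedded submanifold.

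With $C_0(LM)\subset C(LM)$ in hand, the conclusion follows from standard principal-bundle theory~\cite{KN1}: the restriction of a principal $G$-bundle over a submanifold of its base is again a principal $G$-bundle. Applied to $p$ over $C_0(LM)$ this exhibits $\cT_0=p^{-1}(C_0(LM))\to C_0(LM)$ as a principal ${\rm GL}(m)$-bundle, the action being free as a restriction of the free action on $J^1\tau$; the inclusion $\cT_0\hookrightarrow J^1\tau$ paired with ${\rm id}\colon {\rm GL}(m)\to {\rm GL}(m)$ is then a bundle homomorphism over $C_0(LM)\hookrightarrow C(LM)$, which is exactly the assertion that $\cT_0$ is a subbundle associated to the identity. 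I expect the main obstacle to be the submanifold claim for $C_0(LM)$: one must ensure that the torsion constraint, together with the transversality built into the definition of $\cT_0$, yields a genuinely embedded saturated submanifold downstairs rather than an immersed or singular subset, and the transversality of local sections of ${\rm pr}_2$ to $F(\cT_0)$ is the technical point that secures this.
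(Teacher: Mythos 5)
Your argument is correct and is essentially the one the paper intends: the paper gives no written proof of this lemma, merely asserting that it follows from ``standard facts about principal bundles'' in the sense of~\cite{KN1}, and your chain of reasoning (${\rm GL}(m)$-invariance $\Rightarrow$ saturation of $\cT_0$ for the orbit projection $\Rightarrow$ $C_0(LM)$ embedded via transversal local sections of the submersion $\Rightarrow$ restriction of the principal bundle over $C_0(LM)$) is precisely that standard argument spelled out. The technical point you flag, embeddedness of $C_0(LM)$, is indeed the only step needing care, and your transversality argument (together with the fact that $\cT_0$ is cut out by the constraints $e^k_\mu e^\sigma_{k\nu}=e^k_\nu e^\sigma_{k\mu}$, which have constant rank since $(e^k_\mu)$ is invertible) settles it.
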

These considerations give rise to the commutative diagram
\[
 \begin{tikzcd}[row sep=1.3cm,column sep=1.1cm]
 &
 \cT_0
 \arrow[swap]{dl}{\tau_{10}'}
 \arrow{dr}{p_{{\rm GL}(m)}^{\cT_0}= p_{{\rm GL}(m)}^{J^1\tau}\big|_{\cT_0}}
 &
 \\
 LM
 \arrow{dr}{\tau'}
 &
 &
 C_0(LM)
 \arrow{dl}{\overline{\tau}'}
 \\
 &
 M
 &
 \end{tikzcd}
\]

\begin{Remark}[local description for $\cT_0$]
 Let $\left(x^\mu,e_k^\nu\right)$ be a set of adapted coordinates for $LM$ induced on $\tau^{-1}(U)$ by a set of coordinates $ (x^\mu)$ on $U\subset M$; as was shown in Remark~\ref{rem:CoordinatesJetConnections}, it induces coordinates $ (x^\mu,e_k^\nu,e^\nu_{k\sigma} )$ on $\tau_1^{-1}(U)$. On this open set we have
 \[
 T=e_\sigma^i\big(e^k_\mu e_{k\nu}^\sigma dx^\mu\wedge dx^\nu\big)\otimes e_i,
 \]
 where $(e^k_\mu)$ is the inverse matrix of $(e_k^\mu)$, so that the set $\cT_0\cap\tau_1^{-1}(U)$ is described by the constraints
 \[
 e^k_\mu e_{k\nu}^\sigma=e^k_\nu e_{k\mu}^\sigma.
 \]
\end{Remark}

On the other hand, the metricity condition has a differential nature: As we mentioned above, matrix $\eta$ determines a factorization of $\mathfrak{gl}(m)$ in a subalgebra $\kf$ (the subalgebra of $\eta$-Lorentz transformations) and an invariant subspace $\pf$. The explicit formulas for this decomposition are given by the projectors
\[
 A_\kf:=\frac{1}{2}\big(A-\eta A^T\eta\big),\qquad A_\pf:=\frac{1}{2}\big(A+\eta A^T\eta\big)
\]
for every $A\in\mathfrak{gl}(m)$. The metricity condition is imposed on a section $s\colon M\rightarrow J^1\tau$ by requiring that
\begin{gather}\label{eq:MetricityConds}
 s^*\omega_\pf=0,
\end{gather}
where $\omega_\pf$ is the $\pf$-component of the canonical connection $\omega$ respect to this decomposition. Taking into account the affine isomorphism $F\colon J^1\tau\to LM\times_MC(LM)$, this constraint means that the parallel transport of the connection ${\rm pr}_2\circ F\circ s\colon M\to C(LM)$ leaves invariant the metric associated to the vielbein ${\rm pr}_1\circ F\circ s\colon M\to LM$ (see equation~\eqref{eq:GammaInLocalTerms} below).

\begin{Remark}[local expression for the metricity conditions] Let us provide a description of the metricity conditions \eqref{eq:MetricityConds} in terms of the adapted coordinates $(x^\mu,e^\mu_k,e^\mu_{k\nu})$ on $\tau_1^{-1}(U)\subset J^1\tau$. It results that for a section
 \[
 s\colon \ (x^\mu )\mapsto\big(x^\mu,e^\mu_k(x),e^\mu_{k\nu}(x)\big),
 \]
 these conditions can be written as \cite{capriotti14:_differ_palat}
 \begin{equation}\label{eq:MetricityLocal}
 {\rm d} {g}^{\mu\nu}+\big({g}^{\mu\sigma}\Gamma^\nu_{\gamma\sigma}+{g}^{\nu\sigma}\Gamma^\mu_{\gamma\sigma}\big) {\rm d}x^\gamma=0,
 \end{equation}
 where ${g}^{\mu\nu}(x):=\eta^{kl}e_k^\mu(x)e_l^\nu(x)$ is the metric associated to this section and
 \[
 \Gamma^\mu_{\rho\sigma}(x):=-e^k_\rho(x)e^\mu_{k\sigma}(x)
 \]
 are the Christoffel symbols for the associated connection. As it is well-known, these conditions plus zero torsion imply the metric and the symbols become related by Levi-Civita formula
 \[
 \Gamma_{\mu\nu}^\sigma=\frac{1}{2}g^{\sigma\rho}\left(\frac{\partial g_{\rho\mu}}{\partial x^\nu}+\frac{\partial g_{\rho\nu}}{\partial x^\mu}-\frac{\partial g_{\mu\nu}}{\partial x^\rho}\right).
 \]
\end{Remark}

\begin{Remark}[on the nature of the metricity conditions] In some approaches to Palatini gravity, it is customary to work with a ${\rm SO}(3,1)$ connection, without a clear reference to the bundle to which this connection should be associated. It gives rise to an ambiguity that could be problematic in some cases. Namely, when in local terms it is referred to a $\mathfrak{so}(3,1)$-valued $1$-form
 \[
 \alpha\colon \ TM\to\mathfrak{so} (3,1),
 \]
 it is not clear whether it is a local version of a connection on a principal bundle with structure group ${\rm SO}(3,1)$ or instead, a local version of a connection on a principal bundle with a larger structure group $G\supset {\rm SO}(3,1)$ taking values in the smaller Lie algebra $\mathfrak{so}(3,1)\subset\g$. It is not uncommon to find articles in the literature where the information provided to the reader to decide in which case one is working is not enough, because in general, the underlying principal bundle is not associated to quantities of physical nature. It should remain clear that in general, these connections are not equivalent, unless a reduction of the $G$-bundle to a subbundle with structure group ${\rm SO}(3,1)$ is admitted. When $G={\rm GL}(4)$, it is equivalent to have a metric with signature $3+1$; in this case, the ${\rm SO}(3,1)$-bundle is the bundle of pseudo orthogonal frames. It could be argued that to have a metric at our disposal is a feature of working with metric-affine theories of gravitation; but then it should be pointed out in this case that variations of the metric have the undesired side effect of changing the ${\rm SO}(3,1)$-bundle. In order to avoid this behavior, we have adopted metricity conditions \eqref{eq:MetricityConds}: A connection verifying these conditions will be a connection on the frame bundle, whose structure group is ${\rm GL}(m)$, and will have a local version taking values in the subalgebra $\kf\subset\mathfrak{gl}(m)$ associated to the matrix~$\eta$.
\end{Remark}

\section{Griffiths variational problem for Palatini gravity}\label{sec:vari-probl-palat}

The variational problem we will consider here for the Palatini gravity is not a classical one; it will differ from a variational problem of this kind in both of the aspects mentioned in Section~\ref{sec:vari-probl-unif}:
\begin{itemize}\itemsep=0pt
\item The relevant bundle is not the first order jet $J^1\tau$; instead, it is the subset $\cT_0$ consisting in the jets associated to torsionless connections. Due to this fact, we will consider the pullback of the canonical forms and the restriction of maps from $J^1\tau$ to $\cT_0$; unless explicitly stated, the new forms and maps will be indicated with the same symbols. An exception to this rule will be the restriction of the bundle maps $\tau_{10}$ and $\tau_1$, which will be indicated as
 \[
 \tau_{10}'\colon \ \cT_0\to LM\qquad\text{and}\qquad\tau_1'\colon \ \cT_0\to M.
 \]

\item The forms we will use for the restriction of the sections of
 $\tau_1'\colon \cT_0\rightarrow M$ are not the whole set of contact forms
 $\big\{\omega_j^i\big\}$, but a geometrically relevant subset,
 namely, the components of the metricity forms $\omega_\pf$.
\end{itemize}

Using these considerations, we will introduce the following definition.
\begin{Definition}[Griffiths variational problem for Palatini gravity]\label{def:VariationalProblemPalatini}
 The variational problem for Palatini gravity is given by the action
 \[
 S [\sigma ]:=\int_U\sigma^*\cL_{\rm PG},
 \]
 where $\sigma\colon U\subset M\rightarrow \cT_0$ is any section of $\tau_1'$ such that $\sigma^*\omega_\pf=0$. According to equation~\eqref{rem:griffiths-notation}, it is described by the triple $(\tau_1'\colon \cT_0\to M,\cL_{\text{\rm PG}},\langle \omega_\pf\rangle)$, where $\langle \cdot\rangle$ indicates the exterior differential system generated by the set of forms enclosed by the brackets.
\end{Definition}

\begin{Remark}\label{rem:restr-palat-grav}
 The considerations made in Section \ref{sec:LFT} allow us to compare the variational problem given by Definition~\ref{def:VariationalProblemPalatini} with the classical variational problem, as in \cite{CASTRILLONLOPEZ2013352}. In fact, Proposition~4 in this reference tells us that when holonomic sections are factorized using the isomorphism~$F$ (in~\cite{CASTRILLONLOPEZ2013352} it is called~$\Psi$), the connection
 \[
 {\rm pr}_2\circ F\circ s\colon \ M\to C (LM )
 \]
 is flat, a restriction far more stringent than those imposed by the metricity conditions, that only require the associated connection form to have values in~$\kf$. For a description of these admissible variations in the case of the metricity constraints, see \cite[Section~4.1]{Capriotti:2019bqh}.

 It is interesting to point out the effect that the replacement of the contact structure with the differential system $\langle\omega_\pf \rangle$ has on the way that Euler--Lagrange equations are calculated. The key change has to do with the characterization of the admissible infinitesimal variations: Given that the metricity constraints are a part of the constraints imposed by the contact structure, it results that the set of admissible infinitesimal variations considered in~\cite{CASTRILLONLOPEZ2013352} contain them as a~subset.
\end{Remark}

In order to establish the unified version of the equations of motion for Palatini gravity, it will be necessary to represent the metricity constraints as a subbundle of the set of $m$-forms on $\cT_0$. To this end, let us define the \emph{metricity subbundle} $I_{\rm PG}^m$ on $\cT_0$,
\[
 I_{\rm PG}^m:=\big\{\eta^{ik}\beta_{kp}\wedge\omega_k^p\colon \beta_{ij}\in\wedge^{m-1}_1 (T^*\cT_0),\beta_{ij}-\beta_{ji}=0\big\}\subset\wedge^m_2 (T^*\cT_0),
\]
where $\wedge^{m-1}_1 (T^*\cT_0)$ indicates the set of $\tau_1'$-horizontal $(m-1)$-covectors on $\cT_0$.

\begin{Remark}[local description for the metricity subbundle]\label{rem:LocalDescriptionMetricity}
 Given a nondegenerate symmetric matrix $(g^{\mu\nu})$, there exists a matrix $\left(e^\mu_k\right)$ such that
 \[
 g^{\mu\nu}=\eta^{kl}e_k^\mu e_l^\nu;
 \]
 also, any two matrices $(e^\mu_k)$, $(f^\mu_k)$ giving rise to the same symmetric matrix $\left(g^{\mu\nu}\right)$ differ in an element $(k^i_j)\in {\rm GL}(m)$ such that
 \[
 k^i_jk^l_m\eta^{jm}=\eta^{il}.
 \]
 Using the factorization
 \[
 \cT_0=LM\times_M C_0(LM),
 \]
 we can set the coordinates $\big(x^\mu,g^{\mu\nu},k^i_j,\Gamma^\mu_{\rho\sigma}\big)$ on $\cT_0$; the functions $\Gamma^\mu_{\rho\sigma}$ are not independent; rather, they are constrained by the relations
 \[
 \Gamma^\mu_{\rho\sigma}=\Gamma^\mu_{\sigma\rho}.
 \]
 On the other hand, metricity constraints are given by equation~\eqref{eq:MetricityLocal}); therefore
 \[
 I_{\rm PG}^m\big|_{(x^\mu,g^{\mu\nu},k^i_j,\Gamma^\mu_{\rho\sigma})}=\big\{p_{\mu\nu}^\rho \big[{\rm d}{g}^{\mu\nu}+\big({g}^{\mu\sigma}\Gamma^\nu_{\gamma\sigma}+{g}^{\nu\sigma}\Gamma^\mu_{\gamma\sigma}\big) {\rm d}x^\gamma\big]\wedge\eta_\rho\colon p_{\mu\nu}^\rho=p_{\nu\mu}^\rho\big\},
 \]
 where, as before
 \[
 \eta_\rho:=\frac{\partial}{\partial x^\rho}\lrcorner\eta,\qquad\eta={\rm d}x^1\wedge\cdots\wedge {\rm d}x^m.
 \]
 It is a description of the fibers of $I_{\rm PG}^m$ in local terms; the independence of these forms respect to the variables $\big(k_i^j\big)$ is due to the fact that the group $K$ describes a fundamental symmetry of this description of gravity.
\end{Remark}

\begin{Remark}[metricity conditions and the annihilation of the metricity tensor]\label{rem:MetricityTensorAndConds}
It could be useful to give an explanation of the relevance of this subbundle. To this end, recall the notion of \emph{metricity tensor of a connection respect to a metric}~\cite{Hehl19951}; in short, given a manifold with a~metric and a~connection in it, this tensor is essentially the covariant derivative of the metric. When working in the jet space of the frame bundle, where a section induced both a metric and a connection, this tensor can be encoded in terms of the canonical connection. Concretely, let us suppose that $s\colon M\to\cT_0$ is a section with the following property
\[
 s^*\big(\eta^{ik}\beta_{ip}\wedge\omega^p_k\big)=0
\]
for any collection of $l$-forms $\beta_{ip}$ such that
\[
 \beta_{ip}-\beta_{pi}=0.
\]
Then the connection on $LM$ induced by $s$ has zero metricity tensor; therefore, the metricity subbundle allows us to introduce the set of restrictions imposed by the annihilation of the metricity tensor into the Lagrangian. From this viewpoint, forms $\beta_{ij}$ in the definition of the metricity subbundle play the role of Lagrange multipliers.
\end{Remark}

With the metricity subbundle in mind, and following the prescriptions made in \cite{GotayCartan} for the construction of the canonical Lepage-equivalent variational problem, we define the affine subbundle
\begin{gather}\label{eq:BundleOfFormsOnT0}
 W_{\rm PG}:=\cL_{\rm PG}+I_{\rm PG}^m\subset\wedge^m_2 (T^*\cT_0 ),
\end{gather}
which comes with the projection
\[
 \tau_{\rm PG}\colon \ W_{\rm PG}\rightarrow \cT_0\colon \ \alpha\in\wedge^m_2\big(T^*_{j^1_xs}\cT_0\big)\mapsto j_x^1s.
\]
We are referring to the interested reader to the article of Gotay for details; for the purposes of the present work, it should be enough to mention that the bundle defined by equation~\eqref{eq:BundleOfFormsOnT0} will become the underlying bundle for the Lepage-equivalent problem associated to the Griffiths variational problem describing Palatini gravity (see Definition~\ref{def:VariationalProblemPalatini} above).

Because this is a subbundle in the set of $m$-forms on $\cT_0$, it has a canonical $m$-form $\lambda_{\rm PG}$ on it given by
\[
\lambda_{\rm PG}|_\alpha(w_1,\dots,w_m):=\alpha(T_\alpha\tau_{\rm PG}(w_1),\dots,T_\alpha\tau_{\rm PG}(w_m)),\qquad w_1,\dots,w_m\in T_\alpha(W_{\rm PG}).
\]
This $m$-form will be the Lagrangian form for the Lepage-equivalent problem associated to the Griffiths variational problem for Palatini gravity, namely
\begin{equation}\label{eq:LepEquivPalatini}
 (\tau_1'\circ\tau_{\rm PG}\colon W_{\rm PG}\to M,\lambda_{\rm PG},0 ).
\end{equation}
Because no differential restrictions must be considered when performing variations of the variational problem
\[
 \Gamma\left(\tau_1'\circ\tau_{\rm PG}\right)\ni\Gamma\mapsto\int_U\Gamma^*\lambda_{\rm PG},
\]
and assuming that the variations annihilate at the boundary, we obtain that the Euler-Lagrange equations for the Lepage-equivalent variational problem \eqref{eq:LepEquivPalatini} are
\[
 \Gamma^* (X\lrcorner {\rm d}\lambda_{\rm PG} )=0\qquad\text{for all}\qquad X\in\mathfrak{X}^{V(\tau_1'\circ\tau_{\rm PG})}(W_{\rm PG}).
\]
The relevance of the unified formalism in dealing with the variational problems posed by Definition \ref{def:VariationalProblemPalatini} is guaranteed by the following result \cite{doi:10.1142/S0219887818500445}.
\begin{Proposition}\label{prop:FieldTheoryEqsWL}
 A section $s\colon U\subset M\rightarrow \cT_0$ is critical for the variational problem established in Definition~{\rm \ref{def:VariationalProblemPalatini}} if and only if there exists a section $\Gamma\colon U\subset M\rightarrow W_{\rm PG}$ such that
 \begin{enumerate}\itemsep=0pt
 \item[$1)$] $\Gamma$ covers $s$, i.e., $\tau_{\rm PG}\circ\Gamma=s$, and
 \item[$2)$] $\Gamma^*(X\lrcorner {\rm d}\lambda_{\rm PG})=0$, for all $X\in\mathfrak{X}^{V(\tau_1'\circ\tau_{\rm PG})}(W_{\rm PG})$.
 \end{enumerate}
\end{Proposition}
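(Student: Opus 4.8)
The plan is to read this statement as the specialization, to the data $(\tau_1'\colon\cT_0\to M,\cL_{\rm PG},\langle\omega_\pf\rangle)$, of the general equivalence between a Griffiths variational problem and its canonical Lepage-equivalent problem recalled in Section~\ref{sec:vari-probl-unif} and established in \cite{GotayCartan}. The first preliminary step is to confirm that $W_{\rm PG}=\cL_{\rm PG}+I^m_{\rm PG}$ really is the canonical Lepage-equivalent bundle: by Remark~\ref{rem:MetricityTensorAndConds} the fibres of $I^m_{\rm PG}$ are exactly the degree-$m$ part of the exterior differential system generated by $\omega_\pf$, so the metricity subbundle plays the role of the bundle $I$ in the construction $W'|_w=\{\lambda(w)+\beta\colon\beta\in I\cap\wedge^m(T_w^*W)\}$. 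Consequently $\lambda_{\rm PG}$ is the pullback of the tautological $m$-form, and the correspondence to be proved is the one carried by $\tau_{\rm PG}$.

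The second step is to record how $\lambda_{\rm PG}$ pulls back along a section. For $\Gamma\colon U\to W_{\rm PG}$ with $s:=\tau_{\rm PG}\circ\Gamma$, the definition of the tautological form gives $(\Gamma^*\lambda_{\rm PG})_x(w_1,\dots,w_m)=\Gamma(x)\big(T_xs(w_1),\dots,T_xs(w_m)\big)$; writing $\Gamma=\cL_{\rm PG}\circ s+\beta$ with $\beta$ a section of the pullback of $I^m_{\rm PG}$, this yields locally $\Gamma^*\lambda_{\rm PG}=s^*\cL_{\rm PG}+s^*\beta$. Using the coordinates $(x^\mu,g^{\mu\nu},k^i_j,\Gamma^\mu_{\rho\sigma},p^\rho_{\mu\nu})$ of Remark~\ref{rem:LocalDescriptionMetricity}, the fibre variables $p^\rho_{\mu\nu}=p^\rho_{\nu\mu}$ parametrize $\beta$, and I would compute ${\rm d}\lambda_{\rm PG}$ explicitly so as to separate the contributions of the base directions along $\cT_0$ from those of the momentum directions along the fibres of $\tau_{\rm PG}$.

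With these formulas in hand the argument splits $\mathfrak{X}^{V(\tau_1'\circ\tau_{\rm PG})}(W_{\rm PG})$ into the $\tau_{\rm PG}$-vertical fields and a complement projecting onto $V\tau_1'$ in $\cT_0$. For the $\tau_{\rm PG}$-vertical fields, corresponding to $\partial/\partial p^\rho_{\mu\nu}$, condition~$2)$ becomes $s^*\big(\eta^{ik}\beta_{ip}\wedge\omega^p_k\big)=0$ for every symmetric family of multipliers, which by Remark~\ref{rem:MetricityTensorAndConds} is precisely $s^*\omega_\pf=0$; thus $s$ is automatically an integral section of $\langle\omega_\pf\rangle$. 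For a field $X$ projecting to $\bar X\in\mathfrak{X}^{V\tau_1'}(\cT_0)$, I expect $\Gamma^*(X\lrcorner{\rm d}\lambda_{\rm PG})$ to reduce, modulo the already-vanishing constraint terms, to the expression $s^*(\bar X\lrcorner{\rm d}\cL_{\rm PG})$ corrected by the multiplier pairing, which is exactly the Euler--Lagrange expression of the Griffiths problem along admissible variations. This gives the implication $(\Leftarrow)$; for $(\Rightarrow)$ I would reverse the argument, reading off from the constrained Euler--Lagrange equations of $s$ the multipliers $p^\rho_{\mu\nu}$ that define $\beta$ and hence $\Gamma:=\cL_{\rm PG}\circ s+\beta$, which satisfies $1)$ by construction and $2)$ by the same computation run backwards.

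The main obstacle I anticipate is the forward implication, specifically producing the Lagrange multipliers and certifying that they assemble into a genuine section of $I^m_{\rm PG}$ with the required symmetry $p^\rho_{\mu\nu}=p^\rho_{\nu\mu}$: here one must check that the admissible-variation formulation of criticality for the Griffiths problem (the variations being infinitesimal symmetries of $\langle\omega_\pf\rangle$ rather than arbitrary, as stressed after Definition~\ref{def:VariationalProblemPalatini}) matches the free variations allowed on $W_{\rm PG}$. The delicate part of the coupled computation of ${\rm d}\lambda_{\rm PG}$ is to verify that the cross terms mixing base and fibre directions are exactly those that convert the constrained Euler--Lagrange operator into the contraction $X\lrcorner{\rm d}\lambda_{\rm PG}$, so that no spurious equations are generated and none are lost.
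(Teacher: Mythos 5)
Your overall architecture is the correct one and coincides with the paper's: the proof given in Appendix~\ref{app:LiftToTorsionZero} likewise treats $W_{\rm PG}$ as the canonical Lepage-equivalent bundle of the Griffiths problem, obtains the metricity conditions $s^*\omega_\pf=0$ from the $\tau_{\rm PG}$-vertical (momentum) directions, and recovers the remaining Euler--Lagrange equations from the base-vertical directions, adapting the unified-formalism argument of the $J^1\tau$ case (to which it defers for ``the rest of the equations'', which is how the admissible-variations issue you flag is discharged). The difference in presentation is that the paper works globally, through the identification $W_{\rm PG}\simeq E_2=\wedge^{m-1}_1(\cT_0)\otimes S^*(m)$ and an explicit computation of ${\rm d}\lambda_{\rm PG}$ from the structure equations of the canonical connection, rather than in the coordinates of Remark~\ref{rem:LocalDescriptionMetricity}.

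The genuine gap is in your third step, where you posit ``a complement projecting onto $V\tau_1'$'' and expect the contractions to reduce to the Euler--Lagrange expression. On $\cT_0$ one may only contract ${\rm d}\lambda_{\rm PG}$ with vertical vector fields tangent to the zero-torsion submanifold, and these do not exhaust $V\tau_1\big|_{\cT_0}$: the paper exhibits the global basis consisting of the $\mathfrak{gl}(m)$-generators $A_{J^1\tau}$ together with the combinations $M_{rs}\big(\theta^r,\big(E^s_j\big)_{LM}\big)^V$ with $M_{rs}=M_{sr}$, the symmetry of $M$ being precisely the tangency condition, since applying such a field to the constraint $e^k_\rho e^\mu_{k\sigma}-e^k_\sigma e^\mu_{k\rho}$ produces $e^\mu_j\big(e^s_\sigma e^r_\rho-e^r_\sigma e^s_\rho\big)$, which is antisymmetric in $(r,s)$. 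Consequently the equations determining the momenta are only the symmetrized ones, $\eta^{ks}\Gamma^*\big(M_{rs}\Theta_{kt}\wedge\theta^r\big)=0$, and concluding that the momenta vanish, $\Gamma^*\Theta_{kt}=0$, requires the algebraic Proposition~\ref{Prop:UniqueSolution}: writing $\Gamma^*\Theta_{kt}=\eta^{lp}N_{ktp}\theta_l$ one obtains the system $N_{pqr}-N_{qpr}=0$, $N_{pqr}+N_{prq}=0$, whose unique solution is $N_{pqr}=0$. This elimination of the multipliers is exactly the step your plan leaves as ``read off the multipliers $p^\rho_{\mu\nu}$'', and it is where the symmetry $p^\rho_{\mu\nu}=p^\rho_{\nu\mu}$ that you identify as a worry actually enters; without restricting to the tangent class of base-vertical variations and invoking the unique-solution lemma, the forward implication does not close.
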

$\Gamma$ is called a \emph{solution} of the Palatini gravity equations of motion.

\begin{Remark} Although the proof in \cite{doi:10.1142/S0219887818500445} refers to sections of $\tau_1\colon J^1\tau\to M$, it can be also {readily adapted} to cover this case; in this regard, see Appendix~\ref{app:LiftToTorsionZero}.
\end{Remark}

The situation described by Proposition~\ref{prop:FieldTheoryEqsWL} is summarized in the following diagram:
\[
 \begin{tikzcd}[row sep=1.3cm,column sep=1.1cm]
 W_{\rm PG}
 \arrow{dr}{\tau_1'\circ\tau_{\rm PG}}
 \arrow{rr}{\tau_{\rm PG}}
 &
 &
 \cT_0
 \arrow[swap]{dl}{\tau_1'}
 \\
 &
 M
 \arrow[dashed,bend left=40]{ul}{\Gamma}
 \arrow[dashed,bend right=40,swap]{ur}{s}
 &
 \end{tikzcd}
\]
Accordingly, any section $s$ that is extremal for the Griffiths variational problem
\[
 (\tau_1'\colon \cT_0\to M,\cL_{\text{\rm PG}}, \langle\omega_\pf \rangle )
\]
can be lifted to a section $\Gamma$ that is extremal for the Lepage-equivalent variational problem
\[
 (\tau_1'\circ\tau_{\rm PG}\colon W_{\rm PG}\to M,\lambda_{\rm PG},0 )
\]
and viceversa.

\begin{Remark}[local version of Proposition~\ref{prop:FieldTheoryEqsWL}] According to Remark~\ref{rem:LocalDescriptionMetricity}, section $\Gamma$ will be obtained from section~$s$ by providing the set of functions
 \[
 (p_{\mu\nu}^\sigma )\colon \ U\to\mathbb{R}^{\frac{m^2(m+1)}{2}}.
 \]
 These momentum variables $p_{\mu\nu}^\sigma$ are determined by the equations of motion in Proposition~\ref{prop:FieldTheoryEqsWL}, concretely, through the contractions
 \[
 \Gamma^*\left(\frac{\partial}{\partial\Gamma^\sigma_{\mu\nu}}\lrcorner {\rm d}\lambda_{\rm PG}\right)=0.
 \]
\end{Remark}

We will see below (Section~\ref{sec:first-order-vari}) that an unified formalism for (first order) Einstein--Hilbert variational problem can also be given; the reduction and reconstruction theorems (see Section~\ref{sec:einst-hilb-grav}) will be proved using these lifted systems.

\section{Symmetry and reduction}\label{sec:symmetry-momentum}

As we described in the introduction, a crucial ingredient in Routh reduction is the restriction of the dynamics to a level set of the momentum map. It forces us to discuss the presence of natural symmetries in our formulation of gravity, and to construct their momentum maps. Also, this procedure requires the choice of a connection on a bundle obtained by quotient out the symmetries of the variational problem. This section is devoted to these tasks.

\subsection{Momentum map and connection}
\label{sec:momentum-map}

As we said above (see Lemma \ref{lem:torsion-zero-bundle}), there exists a ${\rm GL}(m)$-action on $\cT_0$; nevertheless, the Lagrangian $\cL_{\rm PG}$ is preserved by the action of the subgroup $K\subset {\rm GL}(m)$ composed of the linear transformations keeping invariant the matrix $\eta$,
\[
 K:=\big\{A=\big(A_i^j\big)\colon \eta_{ij}A^i_kA^j_l=\eta_{kl}\big\}.
\]
We can lift the ${\rm GL}(m)$-action to $\wedge^m (\cT_0)$; it results that the subbundle~$I^m_{\rm PG}$ is also preserved by the action of~$K$, and so
\[
 K\cdot W_{\rm PG}\subset W_{\rm PG}.
\]
Our aim is to find a momentum map for this action, in the sense of the following definition.

\begin{Definition}
A \emph{momentum map} for the action of $K$ on $W_{\rm PG}$ is a map
\[
 J\colon \ W_{\rm PG}\to \Lambda^{m-1} (T^*W_{\rm PG})\otimes\kf^*
\]
over the identity in $W_{\rm PG}$ such that
\[
\xi_{W_{\rm PG}}\lrcorner {\rm d}\lambda_{\rm PG}=-{\rm d}J_\xi,
\]
where $J_\xi$ is the $(m-1)$-form on $W_{\rm PG}$ whose value at $\alpha\in W_{\rm PG}$ is $J_\xi(\alpha)=\langle J(\alpha),\xi \rangle$.

A momentum map is $Ad^*$-\emph{equivariant} if it satisfies
\[
 \langle J(g\alpha),{\rm Ad}_{g^{-1}}\xi \rangle=g \langle J(\alpha),\xi \rangle.
\]
Also, it is said that $J$ is \emph{conserved along a section $\Gamma\colon M\to W_{\rm PG}$} if and only if $\Gamma^* ({\rm d}J)=0$.
\end{Definition}
\begin{Remark}\label{rem:moment-map-conserved} A clarification about the nomenclature seems necessary here: Suppose $M=\mathbb{R}\times N$ for some $(m-1)$-dimensional manifold $N$, we can take $U\subset N$ and $\Gamma\colon [T_1,T_2 ]\times U\to W_{\rm PG}$ such that $J$ is conserved along $\Gamma$ and moreover
 \[
 J (\Gamma (t,q ) )=0\qquad\text{for all} \quad T_1\leq t\leq T_2\quad \text{and} \quad q\in\partial U;
 \]
 then, denoting $\Gamma_t\colon U\to W_{\rm PG}\colon r\mapsto\Gamma(t,r)$ for any $t\in[T_1,T_2]$, Stokes' theorem will tell us that
 \[
 \int_U\Gamma_{T_1}^*J=\int_U\Gamma_{T_2}^*J.
 \]
 It is in this sense that the momentum map is ``conserved'', namely, the integral of the $(m-1)$-form $\Gamma_t^*J$ is independent of $t$.
\end{Remark}

Thus, we obtain Noether's theorem in this setting:
\begin{Proposition} The momentum map $J$ is conserved along solutions of the Palatini gravity equations of motion.
\end{Proposition}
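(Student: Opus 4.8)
The plan is to derive Noether's theorem as a direct consequence of the defining property of the momentum map together with the Euler--Lagrange equations of Proposition~\ref{prop:FieldTheoryEqsWL}, the only geometric input being that the infinitesimal generators of the $K$-action are vertical over $M$.

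First I would observe that the $K$-action on $W_{\rm PG}$ covers the identity on the base $M$: it is the lift to $\wedge^m(\cT_0)$ of the ${\rm GL}(m)$-action on $\cT_0$ restricted to $K$, and that action is by construction vertical for $\tau_1'\colon\cT_0\to M$. Consequently, for every $\xi\in\kf$ the infinitesimal generator $\xi_{W_{\rm PG}}$ is tangent to the fibers of $\tau_1'\circ\tau_{\rm PG}$, i.e., $\xi_{W_{\rm PG}}\in\mathfrak{X}^{V(\tau_1'\circ\tau_{\rm PG})}(W_{\rm PG})$. This is the key point that makes $\xi_{W_{\rm PG}}$ an admissible test vector field for the equations of motion.

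Next, fix $\xi\in\kf$ and let $\Gamma\colon M\to W_{\rm PG}$ be a solution, so that $\Gamma^*(X\lrcorner{\rm d}\lambda_{\rm PG})=0$ for every $X\in\mathfrak{X}^{V(\tau_1'\circ\tau_{\rm PG})}(W_{\rm PG})$. By the defining equation of the momentum map we have $\xi_{W_{\rm PG}}\lrcorner{\rm d}\lambda_{\rm PG}=-{\rm d}J_\xi$; pulling this identity back along $\Gamma$ and choosing $X=\xi_{W_{\rm PG}}$ in the Euler--Lagrange equation gives
\[
\Gamma^*({\rm d}J_\xi)=-\Gamma^*\big(\xi_{W_{\rm PG}}\lrcorner{\rm d}\lambda_{\rm PG}\big)=0.
\]
Since ${\rm d}$ acts componentwise on the $\kf^*$-valued form and $\xi\in\kf$ is arbitrary, running this over a basis of $\kf$ yields $\Gamma^*({\rm d}J)=0$, which is exactly the statement that $J$ is conserved along $\Gamma$ in the sense of the preceding definition (and hence, via Remark~\ref{rem:moment-map-conserved}, that the integrals $\int_U\Gamma_t^*J$ are independent of $t$).

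The argument is essentially a one-line computation once the setup is in place, so there is no serious analytic obstacle; the only thing that genuinely requires checking is the verticality claim in the first step, namely that the $K$-action does not move points of the base $M$. This is immediate here because $K\subset{\rm GL}(m)$ acts on frames over a fixed point, but it is worth stating explicitly, since it is precisely the hypothesis that allows the generators $\xi_{W_{\rm PG}}$ to be fed into the equations of motion of Proposition~\ref{prop:FieldTheoryEqsWL}.
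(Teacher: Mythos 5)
Your proof is correct and follows essentially the same route as the paper's: plug the infinitesimal generator $\xi_{W_{\rm PG}}$ into the Euler--Lagrange equations of Proposition~\ref{prop:FieldTheoryEqsWL} and combine with the defining identity $\xi_{W_{\rm PG}}\lrcorner\,{\rm d}\lambda_{\rm PG}=-{\rm d}J_\xi$ to get $\Gamma^*({\rm d}J_\xi)=0$. Your explicit verification that the $K$-action is vertical over $M$ (so that $\xi_{W_{\rm PG}}$ is an admissible test field) is a point the paper leaves implicit, but it is the same argument.
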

\begin{proof}
 Recall that $\Gamma\colon U\subset M\rightarrow W_{\rm PG}$ is a solution for the Palatini gravity equations of motion if and only if
 \[
 \Gamma^*(Z\lrcorner {\rm d} \lambda_{\rm PG})=0
 \]
 for any $\tau_1'\circ\tau_{\rm PG}$-vertical vector field $Z$. Then for each $\xi\in\kf$ we have
\[
{\rm d}(\Gamma^*J_\xi)=\Gamma^*({\rm d}J_\xi)=\Gamma^*(-\xi_{W_{\rm PG}}\lrcorner {\rm d} \lambda_{\rm PG})=0,
\]
and therefore the momentum is conserved along solutions.
\end{proof}

Accordingly, we think of a ``momentum'' $\widehat{\mu}$ as an element $\widehat{\mu}\in \Omega^{m-1}(W_{\rm PG},\mathfrak{gl}(m)^*)$, i.e., as a~$\mathfrak{gl}(m)^*$-valued $(m-1)$-form on $W_{\rm PG}$; a conserved value $\widehat{\mu}$ of the momentum map is a~closed one, i.e., ${\rm d}\widehat{\mu}=0$.

The construction of a momentum map for the action on $W_{\rm PG}$ is standard~\cite{Gotay:1997eg}:
\begin{Lemma}\label{lem:mmap} The map $J\colon W_{\rm PG}\to \Lambda^{m-1} (T^*W_{\rm PG})\otimes \kf^*$ defined by
\[
\langle J(\alpha),\xi \rangle=\xi_{W_{\rm PG}}(\alpha) \lrcorner \lambda_{\rm PG} |_\alpha,
\]
for each $\xi\in\kf$, is an ${\rm Ad}^*$-equivariant momentum map for the ${\rm GL}(m)$-action on $W_{\rm PG}$.
\end{Lemma}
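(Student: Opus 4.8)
The plan is to verify the two asserted properties of the map $J$ defined by $\langle J(\alpha),\xi\rangle=\xi_{W_{\rm PG}}(\alpha)\lrcorner\lambda_{\rm PG}|_\alpha$: first, that it is a momentum map in the sense of the preceding definition, namely that $\xi_{W_{\rm PG}}\lrcorner\mathrm{d}\lambda_{\rm PG}=-\mathrm{d}J_\xi$ for every $\xi\in\kf$; and second, that it is $\mathrm{Ad}^*$-equivariant. The essential input is that the canonical $m$-form $\lambda_{\rm PG}$ on the bundle of forms $W_{\rm PG}$ is invariant under the lifted $K$-action, since $K$ preserves both $\cL_{\rm PG}$ and the subbundle $I_{\rm PG}^m$, hence preserves $W_{\rm PG}$, and the canonical form is natural with respect to such fiber-preserving diffeomorphisms.

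The first property is an immediate application of Cartan's magic formula. For $\xi\in\kf$, invariance of $\lambda_{\rm PG}$ under the flow of $\xi_{W_{\rm PG}}$ means the Lie derivative vanishes, $\mathcal{L}_{\xi_{W_{\rm PG}}}\lambda_{\rm PG}=0$. Writing $\mathcal{L}_{\xi_{W_{\rm PG}}}\lambda_{\rm PG}=\xi_{W_{\rm PG}}\lrcorner\mathrm{d}\lambda_{\rm PG}+\mathrm{d}\big(\xi_{W_{\rm PG}}\lrcorner\lambda_{\rm PG}\big)$ and noting that by definition $\xi_{W_{\rm PG}}\lrcorner\lambda_{\rm PG}=J_\xi$, we obtain at once $\xi_{W_{\rm PG}}\lrcorner\mathrm{d}\lambda_{\rm PG}=-\mathrm{d}J_\xi$, which is exactly the required identity. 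So the only real content here is establishing the $K$-invariance of $\lambda_{\rm PG}$, which follows from naturality of the canonical $m$-form: the lifted action $R_g$ on $\wedge^m_2(T^*\cT_0)$ covers the action on $\cT_0$, and $R_g^*\lambda=\lambda$ for the tautological form $\lambda$ on any bundle of $m$-forms; restricting to the $K$-invariant subbundle $W_{\rm PG}$ gives $R_g^*\lambda_{\rm PG}=\lambda_{\rm PG}$.

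For $\mathrm{Ad}^*$-equivariance I would compute $\langle J(g\alpha),\mathrm{Ad}_{g^{-1}}\xi\rangle$ directly. The infinitesimal generators transform according to the standard relation $(\mathrm{Ad}_{g^{-1}}\xi)_{W_{\rm PG}}=T R_g^{-1}\circ\xi_{W_{\rm PG}}\circ R_g$ (or its inverse, depending on the sign convention fixed by the definition of $\xi_N$ in the Notations), so that $(\mathrm{Ad}_{g^{-1}}\xi)_{W_{\rm PG}}(g\alpha)$ is the push/pull of $\xi_{W_{\rm PG}}(\alpha)$ under $TR_g$. Contracting this with $\lambda_{\rm PG}|_{g\alpha}$ and using the invariance $R_g^*\lambda_{\rm PG}=\lambda_{\rm PG}$ to transfer the form back to $\alpha$, one recovers $g\langle J(\alpha),\xi\rangle$, where the leftover action of $g$ on the value reflects that $J$ takes values in $(m-1)$-forms on $W_{\rm PG}$ which are themselves pulled back by $R_g$. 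This is the routine ``equivariance of the cotangent-type momentum map'' argument transplanted to the multisymplectic bundle of forms.

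The main obstacle, such as it is, is bookkeeping rather than conceptual: I must be careful that the $K$-action genuinely restricts to $W_{\rm PG}$ (guaranteed by the stated inclusion $K\cdot W_{\rm PG}\subset W_{\rm PG}$ and the invariance of $I_{\rm PG}^m$), and that the sign and variance conventions in the generator transformation law match those used to define both $J_\xi$ and the meaning of $g\langle J(\alpha),\xi\rangle$ on the form-valued target. A secondary subtlety is that $\lambda_{\rm PG}$ is only an $m$-form and not a density, so the usual momentum-map formula for field theory must be read in the Lepage/multisymplectic sense; but since $J$ is defined precisely as a contraction of $\lambda_{\rm PG}$ and the defining identity is verified by Cartan's formula, this causes no difficulty. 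I therefore expect the verification to be short, with the only genuine ingredient being the invariance of the canonical form, which I would cite from the naturality of the tautological $m$-form together with the $K$-invariance of the subbundle.
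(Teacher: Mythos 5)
Your proof is correct and follows exactly the standard construction the paper invokes: the paper gives no explicit argument for Lemma~\ref{lem:mmap}, citing instead the standard multimomentum-map construction of Gotay et al., which is precisely your combination of the $K$-invariance of the tautological $m$-form restricted to $W_{\rm PG}$ with Cartan's formula, plus the usual generator-transformation computation for ${\rm Ad}^*$-equivariance. The only caveat worth noting is that, despite the lemma's wording, the momentum map is really for the $K$-action (only $\xi\in\kf$ enters), which is how you have correctly read it.
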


Now, because
\[
 T\tau_{\rm PG}\circ\xi_{W_{\rm PG}}=\xi_{\cT_0}\circ\tau_{\rm PG},
\]
then for every $\alpha\in W_{\rm PG}$ it results that
\begin{align*}
 \langle J(\alpha),\xi \rangle&=\xi_{W_{\rm PG}}(\alpha) \lrcorner \lambda_{\rm PG} |_\alpha
 =\xi_{\cT_0}\lrcorner\alpha\\
 &=i_0^*\big[\xi_{J^1\tau}\lrcorner\big(\eta^{ip}\theta_{pk}\wedge\Omega^k_i+ \eta^{ip}\beta_{pq}\wedge\omega^q_i\big)\big] =0
\end{align*}
for all $\xi\in\kf$. It means that the unique allowed momentum level set for this symmetry is $J=0$; accordingly, the isotropy group of this level set is $K$, and
\[
 J^{-1}(0)=W_{\rm PG}.
\]

The other ingredient needed in Routh reduction is the factorization of the metricity bundle~$I^m_{\rm PG}$ induced by a connection $\omega_K$ on the underlying bundle $p_K^{LM}\colon LM\rightarrow\Sigma$, where
\[
 \tau_\Sigma\colon \ \Sigma:=LM/K\to M
\]
is the \emph{bundle of metrics of signature $\eta$}. We will carry out this task in Section~\ref{sec:cont-bundle-decomp}; here we will construct this connection. To this end, consider the decomposition associated to the matrix $\eta$ (see Section~\ref{sec:LFT}). The connection $\omega_K$ on the bundle $p_K^{LM}\colon LM\rightarrow\Sigma$ is induced by this decomposition, namely
\[
 \omega_K:=\pi_\kf\circ\omega_0,
\]
where $\pi_\kf\colon \mathfrak{gl}(m)\to\kf$ is the canonical projector onto the $\kf$-factor in the Cartan decomposition and $\omega_0$ is a connection form on the principal bundle $\tau\colon LM\to M$. The $K$-invariance of the factor $\pf$,
\[
 {\rm Ad}_A\pf\subset\pf\qquad\forall\, A\in K
\]
ensures us that it has the expected properties of a connection.

\subsection{Reduced bundle for Palatini gravity}\label{sec:reduc-bundle-palat}

We have singled out the symmetries of our formulation of Palatini gravity; they are described by the Lie group~$K$. On the other hand, Palatini gravity can be formulated through the variational problem
\[
 \big(\tau_1'\circ\tau_{\rm PG}\colon W_{\rm PG}\to M,\lambda_{\rm PG},0\big).
\]
The relevant bundles in this triple fit in the following diagram
\[
 \begin{tikzcd}[row sep=1.3cm,column sep=1.1cm]
 W_{\rm PG}
 \arrow{dr}{\tau_1'\circ\tau_{\rm PG}}
 \arrow{rr}{\tau_{\rm PG}}
 &
 &
 \cT_0
 \arrow[swap]{dl}{\tau_1'}
 \\
 &
 M
 &
 \end{tikzcd}
\]
It should be stressed that the link between this diagram and the original variational problem depends on the facts that $\cT_0\subset J^1\tau$ and $W_{\rm PG}\subset\wedge^m \big(T^*\big(J^1\tau\big)\big)$: The first inclusion allows us to interpret some fields as derivatives, while the second one tells us that other degrees of freedom behave like (multi)momenta. These correspondences could be lost when symmetries are singled out; by performing the quotients, the corresponding diagram becomes
\[
 \begin{tikzcd}[row sep=1.3cm,column sep=1.1cm]
 W_{\rm PG}/K
 \arrow{dr}{}
 \arrow{rr}{}
 &
 &
 \cT_0/K
 \arrow[swap]{dl}{}
 \\
 &
 M
 &
 \end{tikzcd}
\]
The immediate problem is to find a variational problem associated to this diagram. As we said above, this is difficult to achieve, because neither $\cT_0/K$ is a subset of a jet bundle nor $W_{\rm PG}/K$ is a subbundle of a space of forms. One of the objectives of Routh reduction is to identify in these quotient spaces the degrees of freedom that can be described in this way, and to deal with those that cannot be fitted in this classification. In order to proceed with this identification, the connection $\omega_K$ defined in Section~\ref{sec:momentum-map} is used. Here we will carry out this operation on the quotient space $\cT_0/K$, leaving the discussion of the splitting of $W_{\rm PG}/K$ for later (see Section~\ref{sec:cont-bundle-decomp}). Now, using the adjoint bundle $\tau_{\kf}\colon \widetilde{\kf}\to\Sigma$, the following result holds.

\begin{Proposition} The map
 \begin{align*}
 \Upsilon_{\omega}\colon \ J^1\tau&\longto \big(p_K^{LM}\big)^*\big(J^1\tau_\Sigma\times_\Sigma\Lin\big(\tau_\Sigma^*TM,\widetilde{\kf}\big)\big),\\
 j_x^1s&\longmapsto \big(s(x),j^1_x [s ]_K, [s(x),\omega_K\circ T_xs ]_K\big).
 \end{align*}
is a bundle isomorphism.
\end{Proposition}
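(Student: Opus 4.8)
The plan is to exhibit $\Upsilon_\omega$ as a fiberwise bijection over $LM$ depending smoothly on the base point, and then upgrade this to a bundle isomorphism. First I would reduce the target to a manageable form. Since it is the pullback along the principal $K$-bundle $p_K^{LM}\colon LM\to\Sigma$ of a bundle over $\Sigma$, and the pullback of an associated bundle to the total space of its principal bundle is canonically trivial, we have
\[
 \big(p_K^{LM}\big)^*\Lin\big(\tau_\Sigma^*TM,\widetilde{\kf}\big)\cong\Lin\big(\tau^*TM,\kf\big)
\]
over $LM$, using $\tau=\tau_\Sigma\circ p_K^{LM}$. Under this trivialization the third component $j_x^1s\mapsto[s(x),\omega_K\circ T_xs]_K$ becomes simply $j_x^1s\mapsto\omega_K\circ T_xs$, while the first component identifies the base point $s(x)=\tau_{10}(j_x^1s)\in LM$, so that $\Upsilon_\omega$ is manifestly a bundle morphism over $LM$ covering the identity. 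Well-definedness and smoothness are then immediate: the induced jet $j_x^1[s]_K$ is the composite of $j_x^1s$ with $p_K^{LM}$ and hence depends smoothly on the jet alone, while $\omega_K\circ T_xs$ depends only on $T_xs$.

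Second, I would fix $u=s(x)\in LM$ with $\zeta:=p_K^{LM}(u)\in\Sigma$ and $x:=\tau(u)$, and establish the key linear isomorphism
\[
 \big(T_up_K^{LM},\,\omega_K|_u\big)\colon\ T_uLM\ \longto\ T_\zeta\Sigma\oplus\kf.
\]
Here $T_up_K^{LM}$ is surjective with kernel $V_u\big(p_K^{LM}\big)$, and since $\omega_K$ is a principal connection on the $K$-bundle $p_K^{LM}$ (as guaranteed by ${\rm Ad}_K\pf\subset\pf$), its restriction $\omega_K|_u$ carries $V_u\big(p_K^{LM}\big)$ isomorphically onto $\kf$; these two facts give injectivity, and the dimension count $\dim T_\zeta\Sigma+\dim\kf=m+m(m+1)/2+m(m-1)/2=m+m^2=\dim T_uLM$ yields the isomorphism.

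Third, I would translate this into jet language. The fiber $\tau_{10}^{-1}(u)$ is the affine space of linear sections $\phi\colon T_xM\to T_uLM$ of $T_u\tau$, and post-composition with the isomorphism just obtained sends $\phi$ to the pair $\big(T_up_K^{LM}\circ\phi,\ \omega_K|_u\circ\phi\big)$. Because $\tau=\tau_\Sigma\circ p_K^{LM}$ gives $T_u\tau=T_\zeta\tau_\Sigma\circ T_up_K^{LM}$, the section condition $T_u\tau\circ\phi={\rm id}_{T_xM}$ holds if and only if $T_\zeta\tau_\Sigma\circ\big(T_up_K^{LM}\circ\phi\big)={\rm id}_{T_xM}$, with no condition at all on the $\kf$-valued part $\omega_K|_u\circ\phi$. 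Thus the map restricts to a bijection from $\tau_{10}^{-1}(u)$, the valid $1$-jets of frames over $u$, onto the product of the fiber of $J^1\tau_\Sigma$ over $\zeta$ (the sections of $T_\zeta\tau_\Sigma$, i.e.\ the data $j_x^1[s]_K$) with $\Lin(T_xM,\kf)$, which is exactly the fiber of the trivialized target over $u$.

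Finally I would conclude: the family of linear isomorphisms $\big(T_up_K^{LM},\omega_K|_u\big)$ varies smoothly with $u$, since $\omega_K$ is a smooth connection form and $p_K^{LM}$ a smooth submersion, so the fiberwise inverses assemble into a smooth inverse of $\Upsilon_\omega$; hence $\Upsilon_\omega$ is a bundle isomorphism over $LM$. I expect the main obstacle to be the third step: one must treat the affine (rather than linear) structure of the jet fibers carefully and verify that the single section constraint on $\phi$ decouples cleanly into a constraint on the $\Sigma$-component and no constraint on the $\kf$-component. This decoupling is precisely what makes the two summands $J^1\tau_\Sigma$ and $\Lin\big(\tau_\Sigma^*TM,\widetilde{\kf}\big)$ appear separately, and it rests entirely on the compatibility $\tau=\tau_\Sigma\circ p_K^{LM}$ together with the connection splitting of $T_uLM$.
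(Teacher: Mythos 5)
Your proof is correct and rests on exactly the same idea as the paper's: the principal-connection splitting of $T_uLM$ into the horizontal part (seen through $T_up_K^{LM}$) and the vertical part (seen through $\omega_K$), applied fiberwise to the affine space of linear sections of $T_u\tau$. The paper does not spell out a proof but simply exhibits the inverse $\Upsilon_\omega^{-1}$ as ``horizontal lift plus infinitesimal generator,'' which is precisely the inverse of your map $\phi\mapsto\big(T_up_K^{LM}\circ\phi,\ \omega_K\circ\phi\big)$, so your argument is a fleshed-out version of the same construction.
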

The inverse of $\Upsilon_\omega$ is given by
\begin{gather*}
 \Upsilon_\omega^{-1}\big(e,j_x^1\overline{s},\big[e,\widehat{\xi}\big]_K\big)=\big[v_x\in T_xM\xmapsto{\hspace{.7cm}} (T_x\overline{s} (v_x))_e^H+\big(\widehat{\xi} (v_x)\big)_{LM} (e)\big],
\end{gather*}
where $ (\cdot )^H_e$, $e\in LM,$ is the horizontal lift associated to $\omega_K$.

The map $\Upsilon_\omega$ enjoys a useful property: under this identification, the action of $K$ on $J^1\tau$ is simply
\[
g\cdot\big(e,j_x^1\overline{s},[e,\widehat{\xi}]_K\big)=\big(g\cdot e,j_x^1\overline{s},[e,\widehat{\xi}]_K\big).
\]
This is a direct consequence of the equivariance of the principal connection~$\omega_K$. As a result, we get the following corollary, which is well-known in the literature on Lagrangian reduction \cite{2003CMaPh.236..223L,lopez00:_reduc_princ_fiber_bundl,ellis11:_lagran_poinc}.
\begin{Corollary}\label{cor:identification}
There is an identification
 \[
 J^1\tau/K\simeq J^1\tau_\Sigma\times_{\Sigma}\Lin\big(\tau_\Sigma^*TM,\widetilde{\kf}\big).
 \]
\end{Corollary}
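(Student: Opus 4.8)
The plan is to read the Corollary as the infinitesimal-to-quotient consequence of the preceding Proposition, for which essentially all the work has already been done. That Proposition exhibits $\Upsilon_\omega$ as a bundle isomorphism of $J^1\tau$ onto the pullback bundle $\big(p_K^{LM}\big)^*B$, where I abbreviate $B:=J^1\tau_\Sigma\times_\Sigma\Lin\big(\tau_\Sigma^*TM,\widetilde{\kf}\big)$, and the Remark immediately following records how the $K$-action transports across $\Upsilon_\omega$: it acts only on the $LM$-slot, via $g\cdot\big(e,j_x^1\overline{s},[e,\widehat{\xi}]_K\big)=\big(g\cdot e,j_x^1\overline{s},[e,\widehat{\xi}]_K\big)$. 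Thus $\Upsilon_\omega$ is $K$-equivariant, and so it descends to a diffeomorphism $J^1\tau/K\xrightarrow{\ \cong\ }\big(\big(p_K^{LM}\big)^*B\big)/K$. The entire task then reduces to computing the quotient of this pullback bundle by the stated action.

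First I would unwind the pullback as a fibred product: since $B$ is a bundle over $\Sigma$, one has $\big(p_K^{LM}\big)^*B\cong LM\times_\Sigma B=\big\{(e,b)\colon p_K^{LM}(e)=\pi_B(b)\big\}$, with $K$ acting by $g\cdot(e,b)=(g\cdot e,b)$. I would then study the projection $(e,b)\mapsto b$. This map is $K$-invariant because $b$ is untouched by the action; it is surjective because $p_K^{LM}$ is surjective, so every $b$ over $\sigma\in\Sigma$ is matched by some $e$ with $p_K^{LM}(e)=\sigma$; and for fixed such $b$ its preimage is $\big(p_K^{LM}\big)^{-1}(\sigma)\times\{b\}$, i.e.\ exactly one $K$-orbit, since the fibres of the quotient map $p_K^{LM}\colon LM\to\Sigma=LM/K$ are precisely the orbits of the free $K$-action. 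Consequently the induced map on the quotient is a bijection and, being smooth with smooth inverse coming from any local section of $p_K^{LM}$, a bundle isomorphism over $M$, giving $\big(\big(p_K^{LM}\big)^*B\big)/K\cong B$.

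Chaining the two identifications yields
\[
J^1\tau/K\;\xrightarrow{\ \cong\ }\;\big(\big(p_K^{LM}\big)^*B\big)/K\;\xrightarrow{\ \cong\ }\;B\;=\;J^1\tau_\Sigma\times_\Sigma\Lin\big(\tau_\Sigma^*TM,\widetilde{\kf}\big),
\]
which is precisely the asserted identification.

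The one point demanding care, and what I regard as the only real (and mild) obstacle, is the legitimacy of the descent to quotients, that is, the genuine $K$-equivariance of $\Upsilon_\omega$ for the product action on the target. This is the content of the displayed equivariance formula, and it rests in turn on the equivariance of the principal connection $\omega_K$: one must confirm that replacing $s$ by $g\cdot s$ leaves the reduced jet $j_x^1[s]_K$ unchanged and sends $\omega_K\circ T_xs$ to a representative defining the same class $[e,\widehat{\xi}]_K$ in the adjoint bundle $\widetilde{\kf}$, so that only the frame slot $e\mapsto g\cdot e$ feels the action. Granting the equivariance of $\omega_K$ used in the Proposition, these checks are formal and the stated isomorphism follows.
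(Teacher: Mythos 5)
Your proposal is correct and follows essentially the same route as the paper: the Corollary is obtained directly from the Proposition exhibiting $\Upsilon_\omega$ as a bundle isomorphism onto $\big(p_K^{LM}\big)^*\big(J^1\tau_\Sigma\times_\Sigma\Lin\big(\tau_\Sigma^*TM,\widetilde{\kf}\big)\big)$ together with the displayed equivariance formula showing the $K$-action lives only in the $LM$-slot, so that passing to the quotient removes exactly that factor. Your explicit verification that the fibres of $(e,b)\mapsto b$ are single $K$-orbits is the (routine) detail the paper leaves implicit.
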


\begin{Remark}
 The choice of a connection on the bundle $p_K^{LM}$ allows us to establish a relationship between the quotient space $J^1\tau/K$ and the jet bundle of the metric bundle~$J^1\tau_\Sigma$, the latter being the relevant bundle in the Einstein--Hilbert approach to relativity, which will be studied in detail in Section~\ref{sec:metr-cont-struct}.
\end{Remark}

Motivated by these considerations, we are in position to split the quotient bundle $J^1\tau/K$ into a jet bundle part and a set of complementary degrees of freedom. It suggests the following definition.
\begin{Definition}[quotient bundle for Palatini gravity]\label{def:reduc-bundle-palat}
 The bundle $J^1\tau_\Sigma\times_\Sigma\mathop{{\rm Lin}}\big(\tau_\Sigma^*TM,\widetilde{\kf}\big)$ is the \emph{quotient bundle for Palatini gravity}.
\end{Definition}
In the next sections, we will explore a further simplification for this bundle as well as a reduction for the Lagrangian responsible of the dynamics on these bundles.

\subsection{Routh reduction scheme for Palatini gravity}\label{sec:routh-reduction}

Our aim is to interpret usual Einstein--Hilbert variational problem as a Routh reduction of the Griffiths variational problem for Palatini gravity. As far as I know, there is no formulation of this type of reduction that could be used in dealing with a Griffiths variational problem, so it is necessary to generalize the techniques employed in~\cite{Capriotti2019} for Routh reduction in field theory to cover this case. The variational problems to be related by this procedure are the Griffiths variational problem for Palatini gravity, described in Definition~\ref{def:VariationalProblemPalatini}, and a variational problem which has not been determined yet, but whose underlying bundle would be related to the quotient bundle given by Definition \ref{def:reduc-bundle-palat}. Now, in Routh reduction the equivalence between the extremals is restricted to those having a particular value of the momentum map;\footnote{In our case this consideration is superfluous, as the momentum map assumes just one value, but it is the way the method proceeds in the general case.} hence, it has some advantages to work in the unified formalism, where momentum level sets have a straightforward meaning (this fact was first recognized in~\cite{GarciaToranoAndres2016291}). Moreover, as the setting of the unified formalism is an affine subbundle of a bundle of forms (see equation~\eqref{eq:LepageEquivProblemGeneral}), the proof of the equivalence between the equations of motion is less involved. A partially filled diagram could clarify these considerations:
\[
 \begin{tikzcd}[row sep=1.3cm,column sep=1.1cm]
 \left(W_{\rm PG},\lambda_{\rm PG},0\right)
 \arrow[leftrightarrow,swap]{d}{\circled{A}}
 \arrow[leftrightarrow,swap]{r}{\circled{1}}
 \arrow[leftrightarrow,to=1-3,bend left=20,"\circled{1}\,+\,\circled{2}"]
 &
 \left(W_{\rm EH}',\lambda_{\rm EH}',0\right)
 \arrow[leftrightarrow]{d}{\circled{B}}
 \arrow[leftrightarrow,swap]{r}{\circled{2}}
 &
 (W_{\rm EH},\lambda_{\rm EH},0 )
 \arrow[leftrightarrow]{d}{\circled{C}}
 \\
 (\cT_0,\cL_{\rm PG}, \langle\omega_\pf \rangle)
 &
\big(J^1\tau_\Sigma\times_\Sigma\mathop{{\rm Lin}} \big(\tau_\Sigma^*TM,\widetilde{\kf}\big) ,\,\bigstar\,,\,\bigstar\,\big)
 \arrow[leftrightarrow]{r}{\circled{3}}
 &
\big(J^1\tau_\Sigma,\,\bigstar\,,\,\bigstar\,\big)
 \end{tikzcd}
\]
Stars ($\bigstar$) refer to geometrical structures (Lagrangians forms and differential constraints) not identified yet. Arrows $\circled{A},\circled{B}$ and $\circled{C}$ correspond to the equivalence given by the Lepage-equivalent construction detailed at the end of Section~\ref{sec:vari-probl-unif}; Theorems~\ref{sec:routh-reduct-palat} and~\ref{thm:Reconstruction} below proved the equivalence indicated by the composition of the horizontal arrows $\circled{1}+\circled{2}$. In order to carry out this operation, we will see in the next sections that the reduced variational problem on $J^1\tau_\Sigma\times_\Sigma\mathop{{\rm Lin}}{\big(\tau_\Sigma^*TM,\widetilde{\kf}\big)}$ can be further simplified to a variational problem on~$J^1\tau_\Sigma$; this fact is depicted in the diagram by arrow $\circled{3}$, and will be carried out in Section~\ref{sec:metr-cont-struct}.

\section{Local coordinates expressions}\label{sec:local-coord-expr}

Here we will obtain some identities allowing us to write down the isomorphism $\Upsilon_\omega^{-1}$ in local terms. In order to proceed, we fix a coordinate chart on~$M$, inducing coordinates $(x^\mu,e_k^\mu)$ on~$LM$. As usual, we will indicate with $ (x^\mu,e_k^\mu,e^\mu_{k\sigma})$ the coordinates induced on $J^1\tau$. As shown above, there exists a set of coordinates $(x^\mu,g^{\mu\nu},\Gamma^\sigma_{\mu\nu})$ on
\[
 J^1\tau/K=\Sigma\times C(LM)
\]
and adapted to this decomposition, namely
\[
 p_K^{LM}\big(x^\mu,e_k^\mu\big)=\big(x^\mu,\eta^{kl}e_k^\mu e^\nu_{l}\big).
\]
In terms of these coordinates, we have
\[
 p^{J^1\tau}_K\big(x^\mu,e_k^\mu,e^\mu_{k\sigma}\big)=\big(x^i,\eta^{ij}e_i^\mu e_j^\nu,-e^k_\mu e_{k\nu}^\sigma\big).
\]
It means in particular that
\[
 Tp_{K}^{LM}\left(\frac{\partial}{\partial x^\mu}\right)=\frac{\partial}{\partial x^\mu}
\]
and
\begin{equation}\label{eq:E_Projection}
 Tp_{K}^{LM}\left(\frac{\partial}{\partial e^\mu_k}\right)=\big(\eta^{kq}e^\rho_q\delta^\sigma_\mu+\eta^{kp}e^\sigma_p\delta_\mu^\rho\big) \frac{\partial}{\partial g^{\sigma\rho}}.
\end{equation}
On the other hand, a principal connection on $LM$ can be written as
\[
 \omega_0=-e^l_\mu\big({\rm d}e^\mu_k-f^\mu_{k\sigma}{\rm d}x^\sigma\big)E^k_l,
\]
where $(f^\mu_{k\sigma})$ is a collection of local functions on~$M$; its Christoffel symbols will be
\[
 \overline{\Gamma}^\sigma_{\rho\mu}=-e_\rho^kf_{k\mu}^\sigma.
\]
Given our definition of the connection $\omega_K$ on the $K$-bundle $p_K^{LM}\colon LM\rightarrow\Sigma$, its components become
\[
[(\omega_0)_\kf]_k^l=-\eta_{kp}\big(\eta^{pq}e_\mu^l-\eta^{lq}e^p_\mu\big) \big({\rm d}e_q^\mu-f^\mu_{q\sigma}{\rm d}x^\sigma\big).
\]
Now we will find the horizontal lift defined by $\omega_K$ for vector fields on $\Sigma$:
\begin{Proposition}\label{prop:Local_Horizontal_Lift}
 The horizontal lift of vector fields on $\Sigma$ associated to the connection $\omega_K$ is locally given by
 \begin{gather*}
 \left(\frac{\partial}{\partial x^\mu}\right)^H =\frac{\partial}{\partial x^\mu}+\frac{1}{2}g_{\beta\rho}e^\rho_k\big(g^{\alpha\sigma}\overline{\Gamma}^\beta_{\alpha\mu} -g^{\alpha\beta}\overline{\Gamma}^\sigma_{\alpha\mu}\big)\frac{\partial}{\partial e^\sigma_k},\\
 \left(\frac{\partial}{\partial g^{\mu\nu}}\right)^H =\frac{1}{4}g_{\rho\beta}e^\beta_k\big(\delta_\mu^\alpha\delta^\rho_\nu+ \delta_\nu^\alpha\delta^\rho_\mu\big)\frac{\partial}{\partial e^\alpha_k}.
 \end{gather*}
\end{Proposition}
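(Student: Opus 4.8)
The plan is to characterise the $\omega_K$-horizontal lift of a vector $v$ on $\Sigma$ by the two defining conditions: $v^H$ projects onto $v$, i.e.\ $Tp_K^{LM}(v^H)=v$, and $v^H$ is annihilated by the connection form, $\omega_K(v^H)=0$. Equivalently, if $w$ is \emph{any} vector on $LM$ with $Tp_K^{LM}(w)=v$, then
\[
 v^H = w - \big(\omega_K(w)\big)_{LM},
\]
where $(\cdot)_{LM}$ denotes the infinitesimal generator of the ${\rm GL}(m)$-action; this works because $\omega_K(\xi_{LM})=\pi_\kf(\omega_0(\xi_{LM}))=\xi$ for $\xi\in\kf$, so subtracting the fundamental field of $\omega_K(w)\in\kf$ removes exactly the vertical part while leaving the projection unchanged. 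I will treat the two coordinate fields separately, recalling that for $A=(A^l_k)\in\mathfrak{gl}(m)$ one has $A_{LM}=A^l_k e_l^\sigma \frac{\partial}{\partial e_k^\sigma}$.

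For $\frac{\partial}{\partial x^\mu}$ I take the obvious preimage $w=\frac{\partial}{\partial x^\mu}$ on $LM$; since $g^{\mu\nu}=\eta^{kl}e_k^\mu e_l^\nu$ does not depend on $x$, the map $Tp_K^{LM}$ sends this field to $\frac{\partial}{\partial x^\mu}$ on $\Sigma$, so it is a genuine lift. Evaluating $\omega_K=\pi_\kf\circ\omega_0$ on $\frac{\partial}{\partial x^\mu}$, only the ${\rm d}x^\sigma$-terms of $\omega_0$ survive, producing the functions $f^\nu_{q\mu}$; I then form the associated fundamental vector field and rewrite it with $f^\sigma_{k\mu}=-e_k^\rho\overline{\Gamma}^\sigma_{\rho\mu}$ together with the vielbein–metric identities $e_l^\sigma e_\nu^l=\delta^\sigma_\nu$, $\eta_{kp}e_\nu^p=g_{\nu\rho}e_k^\rho$ and $\eta^{lq}e_l^\sigma e_q^\beta=g^{\sigma\beta}$. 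Collecting the two resulting terms yields the combination $g^{\alpha\sigma}\overline{\Gamma}^\beta_{\alpha\mu}-g^{\alpha\beta}\overline{\Gamma}^\sigma_{\alpha\mu}$ contracted with $g_{\beta\rho}e_k^\rho$, which is precisely the stated first formula.

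For $\frac{\partial}{\partial g^{\mu\nu}}$, note it is $\tau_\Sigma$-vertical and has no $\frac{\partial}{\partial x}$-component, so its lift must be $\tau$-vertical, i.e.\ a fundamental field $A_{LM}$. Horizontality gives $0=\omega_K(A_{LM})=\pi_\kf(A)=A_\kf$, so $A\in\pf$; since $\pi_\kf$ has kernel $\pf$, the restriction of $Tp_K^{LM}$ to $\pf$-fundamental fields is an isomorphism onto the $\tau_\Sigma$-vertical vectors, and the lift is the unique $A\in\pf$ solving $Tp_K^{LM}(A_{LM})=\frac{\partial}{\partial g^{\mu\nu}}$. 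Using \eqref{eq:E_Projection}, this is a linear equation for $A$; solving it (and remembering $g^{\mu\nu}=g^{\nu\mu}$, which forces the symmetrisation) and passing back to the $\frac{\partial}{\partial e_k^\alpha}$-frame gives the stated second formula, the factor $\tfrac14$ arising from the combined symmetrisations in $g^{\sigma\rho}$ and in the $\pf$-condition.

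The main obstacle is purely computational bookkeeping rather than conceptual. Because $Tp_K^{LM}$ is not injective --- its kernel is exactly the $\kf$-vertical distribution --- the lift is only pinned down once horizontality is imposed, so one must keep the projection and the $\kf/\pf$-splitting perfectly aligned throughout the index manipulations. The delicate points are the consistent conversion between derivatives of the vielbein $e_k^\mu$ and of the metric $g^{\mu\nu}$, the repeated use of the identities relating $\eta$, $e$ and $g$, and the careful accounting of the symmetry $g^{\mu\nu}=g^{\nu\mu}$, which is the source of the numerical factors and of the explicit symmetrisation $\delta^\alpha_\mu\delta^\rho_\nu+\delta^\alpha_\nu\delta^\rho_\mu$; verifying afterwards that the proposed lift of $\frac{\partial}{\partial g^{\mu\nu}}$ indeed lands in $\pf$ provides a useful consistency check.
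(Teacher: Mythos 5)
Your argument is correct, and it reorganizes the computation in a way that differs mildly but usefully from the paper's Appendix~\ref{sec:proof-prop-ref}. There, both lifts are found by an undetermined--coefficients ansatz ($M^\nu_\mu,N^\nu_{\mu k}$ and $P^\sigma_{\mu\nu},Q^\sigma_{\mu\nu k}$), imposing the projection and $\omega_K$-annihilation conditions and then solving the resulting linear systems; for the $x^\mu$-lift this requires the change of variables $N^\nu_{\mu k}=g_{\sigma\rho}e^\rho_k N^{\nu\sigma}_\mu$ and the observation that $e^p_\beta e^l_\sigma-e^l_\beta e^p_\sigma$ acts as an antisymmetrizer, so that the answer emerges from a symmetric/antisymmetric splitting in the spirit of Proposition~\ref{Prop:UniqueSolution}. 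Your use of the standard identity $v^H=w-\big(\omega_K(w)\big)_{LM}$ short-circuits that linear system: the reproducing property $\omega_K(\xi_{LM})=\pi_\kf\big(\omega_0(\xi_{LM})\big)=\xi$ for $\xi\in\kf$ is exactly what the paper's construction of $\omega_K$ guarantees, and the explicit projector $A_\kf=\tfrac12\big(A-\eta A^T\eta\big)$ produces the factor $\tfrac12$ and the antisymmetric combination $g^{\alpha\sigma}\overline{\Gamma}^\beta_{\alpha\mu}-g^{\alpha\beta}\overline{\Gamma}^\sigma_{\alpha\mu}$ directly, rather than as the solution of a system. For the $g^{\mu\nu}$-lift your argument is essentially the paper's in different clothing: requiring $A\in\pf$ is the same as the condition $Q^{\alpha\rho}_{\mu\nu}-Q^{\rho\alpha}_{\mu\nu}=0$ of equation~\eqref{eq:Condition_Two_E}, and the projection requirement reproduces equation~\eqref{eq:Condition_One_E}, whence the same $\tfrac14$. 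The only point to watch in writing this out fully is the sign convention for the fundamental vector field (the paper identifies $V\tau\simeq LM\times\mathfrak{gl}(m)$ via $t\mapsto u\cdot\exp(-tB)$), which must be kept consistent with the convention implicit in $\omega_0=-e^l_\mu\big({\rm d}e^\mu_k-f^\mu_{k\sigma}{\rm d}x^\sigma\big)E^k_l$; this affects only intermediate signs, not the final formulas.
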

\begin{proof} See Appendix \ref{sec:proof-prop-ref}.
\end{proof}

This proposition has the following consequence, that will be important to work with the reduction of the Palatini variational problem.
\begin{Corollary}\label{cor:Coordinates-Lift}
 Let $(x^\mu,g^{\mu\nu},g^{\mu\nu}_\sigma)$ be the induced coordinates on $J^1\tau_\Sigma$. Then
 \begin{gather}
 \left(\frac{\partial}{\partial x^\sigma}+g^{\mu\nu}_\sigma\frac{\partial}{\partial g^{\mu\nu}}\right)^H=\frac{\partial}{\partial x^\sigma}+\frac{1}{2}g_{\beta\rho}e^\rho_k\big[g^{\kappa\beta}_\sigma +\big(g^{\alpha\kappa}\overline{\Gamma}^\beta_{\alpha\sigma}-g^{\alpha\beta} \overline{\Gamma}^\kappa_{\alpha\sigma}\big)\big]\frac{\partial}{\partial e^\kappa_k},\label{eq:HorizontalLift}
 \end{gather}
 where $\left(\cdot\right)^H$ is the horizontal lift in the $K$-principal bundle $p_K^{LM}\colon LM\to\Sigma$ for the connection~$\omega_K$.
\end{Corollary}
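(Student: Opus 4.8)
The plan is to read the statement off directly from Proposition~\ref{prop:Local_Horizontal_Lift} by exploiting the fact that, at a fixed point of $J^1\tau_\Sigma$, the field $\frac{\partial}{\partial x^\sigma}+g^{\mu\nu}_\sigma\frac{\partial}{\partial g^{\mu\nu}}$ is simply the value $T_x\zeta\big(\frac{\partial}{\partial x^\sigma}\big)$ of a holonomic tangent vector on $\Sigma$, in which the jet coordinates $g^{\mu\nu}_\sigma$ enter as (pointwise constant) real coefficients. Since the horizontal lift attached to a principal connection is fibrewise $\R$-linear on each tangent space of $\Sigma$, I would begin by writing
\[
\left(\frac{\partial}{\partial x^\sigma}+g^{\mu\nu}_\sigma\frac{\partial}{\partial g^{\mu\nu}}\right)^H=\left(\frac{\partial}{\partial x^\sigma}\right)^H+g^{\mu\nu}_\sigma\left(\frac{\partial}{\partial g^{\mu\nu}}\right)^H
\]
and then substitute the two expressions furnished by Proposition~\ref{prop:Local_Horizontal_Lift}.

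The lift of $\frac{\partial}{\partial x^\sigma}$ is carried over verbatim; relabelling its summation index so that the vertical part is written against $\frac{\partial}{\partial e^\kappa_k}$ already produces the term $\frac{1}{2}g_{\beta\rho}e^\rho_k\big(g^{\alpha\kappa}\overline{\Gamma}^\beta_{\alpha\sigma}-g^{\alpha\beta}\overline{\Gamma}^\kappa_{\alpha\sigma}\big)\frac{\partial}{\partial e^\kappa_k}$ that appears inside the bracket of \eqref{eq:HorizontalLift}. The only genuine computation is the second summand: inserting the lift of $\frac{\partial}{\partial g^{\mu\nu}}$ and contracting the Kronecker deltas, while using the symmetry $g^{\mu\nu}_\sigma=g^{\nu\mu}_\sigma$ to merge the two terms into $2g^{\alpha\rho}_\sigma$, gives
\[
g^{\mu\nu}_\sigma\left(\frac{\partial}{\partial g^{\mu\nu}}\right)^H=\frac{1}{4}g_{\rho\beta}e^\beta_k\,g^{\mu\nu}_\sigma\big(\delta^\alpha_\mu\delta^\rho_\nu+\delta^\alpha_\nu\delta^\rho_\mu\big)\frac{\partial}{\partial e^\alpha_k}=\frac{1}{2}g_{\rho\beta}e^\beta_k\,g^{\alpha\rho}_\sigma\frac{\partial}{\partial e^\alpha_k},
\]
so that the prefactor $\tfrac14$ collapses to $\tfrac12$.

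The final step, and the only place where any care is needed, is to bring this contribution into a common shape with the previous one. I would relabel $\alpha\to\kappa$ and then interchange the summation indices $\beta\leftrightarrow\rho$ (equivalently, invoke the symmetry $g_{\beta\rho}=g_{\rho\beta}$), turning the term into $\frac{1}{2}g_{\beta\rho}e^\rho_k\,g^{\kappa\beta}_\sigma\frac{\partial}{\partial e^\kappa_k}$. Adding this to the $\frac{\partial}{\partial x^\sigma}$ contribution and factoring out the common $\frac{1}{2}g_{\beta\rho}e^\rho_k\frac{\partial}{\partial e^\kappa_k}$ reproduces exactly the bracket $\big[g^{\kappa\beta}_\sigma+g^{\alpha\kappa}\overline{\Gamma}^\beta_{\alpha\sigma}-g^{\alpha\beta}\overline{\Gamma}^\kappa_{\alpha\sigma}\big]$ of \eqref{eq:HorizontalLift}. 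I do not expect any conceptual obstacle: the entire content of the corollary is already contained in Proposition~\ref{prop:Local_Horizontal_Lift}, and the symmetrization of the metric derivative together with the index relabelling is the only substantive manipulation.
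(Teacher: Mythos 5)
Your proposal is correct and follows essentially the same route as the paper's own proof: both expand the lift by linearity, substitute the two formulas of Proposition~\ref{prop:Local_Horizontal_Lift}, contract the Kronecker deltas using the symmetry of $g^{\mu\nu}_\sigma$ to turn the prefactor $\tfrac14$ into $\tfrac12$, and relabel indices to match the bracket in \eqref{eq:HorizontalLift}. Your explicit justification of the fibrewise linearity of the horizontal lift is a small addition the paper leaves implicit, but the argument is the same.
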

\begin{proof} According to Proposition \ref{prop:Local_Horizontal_Lift}, we have that
 \begin{gather*}
 \left(\frac{\partial}{\partial x^\sigma}+g^{\mu\nu}_\sigma\frac{\partial}{\partial g^{\mu\nu}}\right)^H\\
 \qquad{}
 =\frac{\partial}{\partial x^\sigma}+\frac{1}{2}g_{\beta\rho}e^\rho_k\big(g^{\alpha\kappa} \overline{\Gamma}^\beta_{\alpha\sigma}-g^{\alpha\beta} \overline{\Gamma}^\kappa_{\alpha\sigma}\big)\frac{\partial}{\partial e^\kappa_k}+\frac{1}{4}g^{\mu\nu}_\sigma g_{\rho\beta}e^\rho_k\big(\delta_\mu^\alpha\delta^\beta_\nu+\delta_\nu^\alpha\delta^\beta_\mu\big)\frac{\partial}{\partial e^\alpha_k}\\
 \qquad{} =\frac{\partial}{\partial x^\sigma}+\frac{1}{2}g_{\beta\rho}e^\rho_k\big[g^{\kappa\beta}_\sigma +\big(g^{\alpha\kappa}\overline{\Gamma}^\beta_{\alpha\sigma}-g^{\alpha\beta} \overline{\Gamma}^\kappa_{\alpha\sigma}\big)\big]\frac{\partial}{\partial e^\kappa_k},
 \end{gather*}
 as required.
\end{proof}

Let us now introduce coordinates on the vector bundle $\widetilde{\kf}$. In order to do this, let us suppose that $\left(\phi=\left(x^\mu\right),U\right)$ is a coordinate chart on $M$; then it is also a trivializing domain for the principal bundle $LM$, where
\[
 t_U\colon \ \tau^{-1}(U)\to U\times {\rm GL}(m)\colon \ u= (X_1,\dots,X_m) \mapsto\big(x^{\mu} (\tau(u)),e_k^\mu(u)\big)
\]
if and only if
\[
 X_k=e_k^\mu(u)\frac{\partial}{\partial x^\mu}.
\]
Hence, we can define the coordinate chart $\big(\phi_{\kf},\tau_{\kf}^{-1}(U)\big)$ \cite{springerlink:10.1007/PL00004852}. In order to proceed, we use the correspondence between the space of sections of the adjoint bundle $\Gamma\tau_\kf$ and the set of $p_{K}^{LM}$-vertical $K$-invariant vector fields on~$LM$.

Therefore, taking the base $\big\{E_\sigma^\rho\big\}$ on $\mathfrak{gl}(m)$ such that
\[
\big(E_\sigma^\rho\big)_\alpha^\beta=\delta^\beta_\sigma\delta_\alpha^\rho,
\]
we can define the set of ${\rm GL}(m)$-invariant $\tau$-vertical vector fields $\widetilde{E}_\sigma^\rho$ whose flow $\Phi^{\widetilde{E}_\sigma^\rho}_t\colon \tau^{-1}(U)\to\tau^{-1}(U)$ is given by
\[
 \Phi^{\widetilde{E}_\sigma^\rho}_t(u):= t_U^{-1}\big(\tau(u),\big[\exp{\big(tE_\sigma^\rho\big)}\big]^\alpha_\beta e_i^\beta(u)\big);
\]
it means that, locally, these vector fields are such that
\begin{gather}\label{eq:InvarianVectorExpression}
 T_{u}t_U\big(\widetilde{E}_\sigma^\rho(u)\big)=e^\rho_i\frac{\partial}{\partial e_i^\sigma}.
\end{gather}
In the following, we will adopt the usual convention according to which the map $Tt_U$ is not explicitly written, namely, where
\[
 \frac{\partial}{\partial e_i^\sigma}\qquad\text{and}\qquad Tt_U^{-1}\left(\frac{\partial}{\partial e_i^\sigma}\right)
\]
are identified. We can write down any $p_K^{LM}$-vertical $K$-invariant vector field $Z$ on $LM$ as
\[
 Z=A^\rho_\sigma\widetilde{E}^\sigma_\rho;
\]
then, using equation~\eqref{eq:E_Projection}, we obtain the following result.

\begin{Lemma} The vector field on $\tau^{-1}(U)$ given by
 \[
 Z=A^\rho_\sigma\widetilde{E}^\sigma_\rho
 \]
 is $p_K^{LM}$-vertical if and only if
 \[
 g^{\sigma\alpha}A_\alpha^\rho+g^{\rho\alpha}A_\alpha^\sigma=0.
 \]
\end{Lemma}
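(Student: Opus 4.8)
The plan is to compute $Tp_K^{LM}(Z)$ explicitly in the adapted coordinates and read off when it vanishes, since $Z$ is $p_K^{LM}$-vertical precisely when it lies in the kernel of $Tp_K^{LM}$. First I would rewrite $Z$ in the coordinate frame. By equation~\eqref{eq:InvarianVectorExpression}, and under the convention identifying $\frac{\partial}{\partial e^\sigma_i}$ with $Tt_U^{-1}\big(\frac{\partial}{\partial e^\sigma_i}\big)$, we have $\widetilde{E}^\sigma_\rho=e^\sigma_i\frac{\partial}{\partial e^\rho_i}$, so that
\[
 Z=A^\rho_\sigma\widetilde{E}^\sigma_\rho=A^\rho_\sigma e^\sigma_i\frac{\partial}{\partial e^\rho_i}.
\]
Since $Z$ involves only the fibre coordinates $e^\rho_i$ of $\tau\colon LM\to M$ and $p_K^{LM}$ leaves the base coordinates $x^\mu$ fixed, its image under $Tp_K^{LM}$ lies entirely in the span of the $\frac{\partial}{\partial g^{\mu\nu}}$; hence verticality is equivalent to the vanishing of every coefficient of $\frac{\partial}{\partial g^{\mu\nu}}$.

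Next I would apply the projection formula~\eqref{eq:E_Projection} to each $\frac{\partial}{\partial e^\rho_i}$ and contract against $A^\rho_\sigma e^\sigma_i$. The key simplification is the identity $e^\sigma_i\,\eta^{iq}e^a_q=g^{\sigma a}$, which is nothing but the definition $g^{\mu\nu}=\eta^{kl}e^\mu_k e^\nu_l$ of the metric associated to the frame. Carrying out the two contractions coming from the two terms in~\eqref{eq:E_Projection} gives
\[
 Tp_K^{LM}(Z)=\big(g^{\mu\sigma}A^\nu_\sigma+g^{\nu\sigma}A^\mu_\sigma\big)\frac{\partial}{\partial g^{\mu\nu}}.
\]
The coefficient is manifestly symmetric under $\mu\leftrightarrow\nu$, so it is consistent with the symmetry of the coordinates $g^{\mu\nu}$, and requiring it to vanish for every symmetric pair yields, after relabelling the free indices $\mu\to\sigma$, $\nu\to\rho$ and the dummy index as $\alpha$, exactly the stated condition $g^{\sigma\alpha}A^\rho_\alpha+g^{\rho\alpha}A^\sigma_\alpha=0$. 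Both implications then follow simultaneously, since $Z$ is $p_K^{LM}$-vertical if and only if $Tp_K^{LM}(Z)=0$ if and only if this coefficient vanishes.

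I do not expect a genuine obstacle: the argument is essentially index bookkeeping resting on the two previously established local formulas~\eqref{eq:InvarianVectorExpression} and~\eqref{eq:E_Projection}. The only point deserving care is the symmetry of $g^{\mu\nu}$, so that the coordinate derivatives $\frac{\partial}{\partial g^{\mu\nu}}$ are not all independent. Because the derived coefficient is already symmetric in its two free indices, the apparent double-counting of the off-diagonal terms causes no trouble, and the vanishing of the symmetric coefficient is genuinely equivalent to the full verticality condition.
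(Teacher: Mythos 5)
Your proposal is correct and follows the paper's own proof essentially verbatim: both compute $T_up_K^{LM}(Z)$ by expressing $\widetilde{E}^\sigma_\rho$ in the coordinate frame via equation~\eqref{eq:InvarianVectorExpression}, applying the projection formula~\eqref{eq:E_Projection}, and contracting with $\eta^{iq}e^\mu_i e^\nu_q=g^{\mu\nu}$ to obtain the symmetric coefficient $g^{\sigma\alpha}A^\rho_\alpha+g^{\rho\alpha}A^\sigma_\alpha$. The only difference is that you make explicit the remark about the symmetry of the $g^{\mu\nu}$ coordinates, which the paper leaves implicit.
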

\begin{proof} In fact, we have that
 \begin{align*}
 0&=T_up_K^{LM}\big(A_\sigma^\rho\widetilde{E}_\rho^\sigma(u)\big)
 =A_\sigma^\rho T_up_K^{LM}\big(\widetilde{E}_\rho^\sigma(u)\big)
 =A_\sigma^\rho e_i^\sigma(u) T_up_K^{LM}\left(\frac{\partial}{\partial e_i^\rho}\right)\\
 &=A_\sigma^\rho e_i^\sigma(u)\eta^{iq}\big[e^\alpha_q(u)\delta^\beta_\rho+e^\beta_q(u)\delta^\alpha_\rho\big]\frac{\partial}{\partial g^{\alpha\beta}},
 \end{align*}
 and the identity follows.
\end{proof}

Therefore, we will have that
\[
 \phi_{\kf}([u,B]_K):=\big(x^\mu(u),\eta^{kl}e_k^\mu(u)e_l^\nu(u),A_{\sigma}^\rho([u,B]_K)\big)
\]
if and only if $g^{\sigma\alpha}A_\alpha^\rho+g^{\rho\alpha}A_\alpha^\sigma=0$ and
\[
[u,B]_K=A_{\sigma}^\rho\widetilde{E}_\rho^\sigma(u).
\]

In order to relate the coordinates $A_\sigma^\rho$ with the element $[u,B]_K$, we need to look closely to the identification between $\Gamma\widetilde{\kf}$ and the set of $p_{K}^{LM}$-vertical $K$-invariant vector fields on~$LM$. It uses the correspondence
\[
 V\tau\simeq LM\times\mathfrak{gl}(m)
\]
given by
\[
(u,B)\mapsto \frac{\vec{{\rm d}}}{{\rm d}t}\bigg|_{t=0}(u\cdot\exp{(-tB)}).
\]
In coordinates it reads
\[
 \big(u=(X_1,\dots,X_m),B=\big(B_i^j\big)\big)\mapsto-B_j^ie_i^\rho\frac{\partial}{\partial e_j^\rho},
\]
and using equation~\eqref{eq:InvarianVectorExpression} it becomes
\[
\big(u= (X_1,\dots,X_m),B=\big(B_i^j\big)\big)\mapsto-B_j^ie_i^\rho e^j_\sigma\widetilde{E}^\sigma_\rho.
\]
Therefore, it results that
\[
 \widehat{A}_\rho^\sigma (u,B)=-e_\rho^iB_i^je_j^\sigma
\]
is a ${\rm GL}(m)$-invariant function on $LM\times\mathfrak{gl}(m)$ when ${\rm GL}(m)$ acts on $\mathfrak{gl}(m)$ by the adjoint action; then, it gives us the set of functions $A_\rho^\sigma$ on $\tau_\kf^{-1}(U)\subset\widetilde{\kf}$ that completes the coordinates $\phi_\kf$.

\begin{Lemma}\label{lem:Coords-On-KF}
 The map $\phi_\kf\colon \tau_\kf^{-1}(U)\to U\times\mathbb{R}^{m^2+\frac{m(m+1)}{2}}$ given by
 \[
 \phi_\kf([u,B]_K)=\big(x^\mu(u),\eta^{kl}e_k^\mu(u)e_l^\nu(u),-e_\rho^i(u)B_i^je_j^\sigma(u)\big)
 \]
 defines a set of coordinates on $\tau_\kf^{-1}(U)$.
\end{Lemma}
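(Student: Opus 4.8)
The plan is to read $\phi_\kf$ as a fibred chart for the vector bundle $\tau_\kf\colon\widetilde{\kf}\to\Sigma$ over the trivialising domain $U$, checking separately that it descends to $K$-orbits and that it restricts to a linear isomorphism on each fibre. First I would dispose of well-definedness. The components $x^\mu(u)$ and $g^{\mu\nu}=\eta^{kl}e_k^\mu(u)e_l^\nu(u)$ depend only on $[u]_K\in\Sigma$: the former factors through $\tau$, and the latter is exactly the $K$-invariant combination used in Section~\ref{sec:LFT} to identify $\Sigma$ with the bundle of metrics; thus these two blocks are the pullback along $\tau_\kf$ of the adapted coordinates $(x^\mu,g^{\mu\nu})$ on $\Sigma$ and account for the base directions. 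For the fibre block, recall that $[u,B]_K$ is unchanged under $(u,B)\mapsto(u\cdot k,{\rm Ad}_{k^{-1}}B)$ with $k\in K$; since $\widehat{A}_\rho^\sigma(u,B)=-e_\rho^i B_i^j e_j^\sigma$ was just shown to be invariant under the whole ${\rm GL}(m)$-action (canonical on $LM$, adjoint on $\mathfrak{gl}(m)$), it is a fortiori $K$-invariant, so $A_\rho^\sigma$ descends to a well-defined function on $\tau_\kf^{-1}(U)\subset\widetilde{\kf}$.

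The core of the argument is the fibrewise statement. Fixing $[u]_K\in\Sigma$, its fibre is $\{[u,B]_K\colon B\in\kf\}$, and for fixed $u$ the map $B\mapsto[u,B]_K$ is a linear isomorphism from $\kf$ onto $\tau_\kf^{-1}([u]_K)$ (surjective because every $u'$ with $[u']_K=[u]_K$ is $u\cdot k$ and $[u\cdot k,B']_K=[u,{\rm Ad}_kB']_K$, injective because the $K$-action is free). Composing with the fibre block of $\phi_\kf$ yields the linear map $B\mapsto(A_\rho^\sigma)$, which in matrix form is the conjugation $A=-eBe^{-1}$ by the frame matrix $e=(e_k^\mu(u))$; being a linear automorphism of $\mathfrak{gl}(m)$, it is injective on $\kf$. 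By the Lemma immediately preceding the statement, the matrices $(A_\rho^\sigma)$ coming from $B\in\kf$ are exactly those satisfying $g^{\sigma\alpha}A_\alpha^\rho+g^{\rho\alpha}A_\alpha^\sigma=0$: that lemma characterises the $p_K^{LM}$-vertical fields, namely those generated by an element of $\kf$, through this very relation. Hence $B\mapsto(A_\rho^\sigma)$ is a linear isomorphism from $\kf$ onto the constraint subspace, whose dimension $m^2-\frac{m(m+1)}{2}=\frac{m(m-1)}{2}$ matches $\dim\kf$.

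To finish I would assemble these observations. Over $U$ the frame field $u\mapsto(e_k^\mu(u))$ is the smooth trivialisation $t_U$, so every component of $\phi_\kf$ is smooth, and an inverse is obtained by first recovering $[u]_K$ from $(x^\mu,g^{\mu\nu})$ and then inverting the conjugation to recover $[u,B]_K$ from $(A_\rho^\sigma)$; this inverse is smooth as well. Therefore $\phi_\kf$ is a diffeomorphism onto its image, namely $U$ times the nondegenerate symmetric $(g^{\mu\nu})$ times the constraint subspace of $(A_\rho^\sigma)$ inside $\mathbb{R}^{m^2+\frac{m(m+1)}{2}}$, which is the asserted chart on $\tau_\kf^{-1}(U)$. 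I expect the only genuinely delicate point to be the fibrewise matching of the image of $\kf$ under conjugation with the constraint surface $g^{\sigma\alpha}A_\alpha^\rho+g^{\rho\alpha}A_\alpha^\sigma=0$; this is supplied verbatim by the preceding Lemma, so beyond careful bookkeeping of the frame-index conventions no further computation is required.
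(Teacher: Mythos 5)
Your proof is correct and follows essentially the same route as the paper's: the decisive point in both is that for $B\in\kf$ the quantities $A^\sigma_\rho=-e^i_\rho(u)B^j_ie^\sigma_j(u)$ land in the verticality constraint $g^{\rho\alpha}A^\sigma_\alpha+g^{\sigma\alpha}A^\rho_\alpha=0$ singled out by the preceding Lemma. The paper's proof consists of exactly that one computation and delegates the remaining bookkeeping (well-definedness on $K$-orbits, fibrewise linear bijectivity, dimension count, smoothness) to ``the previous discussion'', which you have simply made explicit.
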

\begin{proof}According to the previous discussion, it is only necessary to prove that for any $B\in\kf$, i.e., such that
 \[
 \eta^{ik}B_k^j+\eta^{jk}B_k^i=0,
 \]
 the corresponding element on $T_uLM$,
 \[
 Z=-e_\rho^i(u)B_i^je_j^\sigma(u)\widetilde{E}^\rho_\sigma
 \]
 verifies the constraint
 \[
 T_up_K^{LM} (Z)=0.
 \]
 But it follows that
 \begin{align*}
 g^{\rho\alpha}A_\alpha^\sigma+g^{\sigma\alpha}A_\alpha^\rho& =-g^{\rho\alpha}e_\alpha^iB_i^je_j^\sigma-g^{\sigma\alpha}e_\alpha^iB_i^je_j^\rho
 =-\eta^{ik}B_i^j\big(e_k^\sigma e_j^\rho+e_k^\rho e_j^\sigma\big)\\
 &=-\big(\eta^{ik}B_k^j+\eta^{jk}B_k^i\big)e_j^\rho e_k^\sigma =0,
 \end{align*}
 as required.
\end{proof}

Then, let us point out that Lemma~\ref{lem:Coords-On-KF} allows us to set coordinates on the bundle
\[
 \overline{p}\colon \ \mathop{{\rm Lin}}{\big(\tau_\Sigma^*TM,\widetilde{\kf}\big)}\to\Sigma.
\]
In fact, any element $\left(g_x,\alpha\right)\in\mathop{{\rm Lin}}{\big(\tau_\Sigma^*TM,\widetilde{\kf}\big)}$ admits coordinates $\big(x^\mu,g^{\mu\nu},A_{\sigma\rho}^\mu\big)$ if and only if $\big(x^\mu,g^{\mu\nu}\big)$ are the corresponding coordinates for $g_x\in\Sigma$ and
\[
 \alpha\left(\frac{\partial}{\partial x^\rho}\right)=A_{\sigma\rho}^\mu\widetilde{E}^\sigma_\mu (e_x),
\]
where $e_x\in LM$ is any element in $\big(p_K^{LM}\big)^{-1} (g_x)$.

\section[Torsion, metricity and contact structures on the quotient space]{Torsion, metricity and contact structures\\ on the quotient space}\label{sec:metr-cont-struct}

There are two tasks to carry out in order to understand the Routh reduction of Palatini gravity: We need to characterize the effects produced by the fact that we are working on the subbundle $\cT_0\subset J^1\tau$ instead of the full jet space~$J^1\tau$; additionally, we want to find the differential constraints for the reduced system. Accordingly, in this section we will prove that
\begin{itemize}\itemsep=0pt
\item When restricted to $\cT_0$, the quotient map
 \[
 J^1\tau\to J^1\tau_\Sigma\times_\Sigma\mathop{{\rm Lin}}{\big(\tau_\Sigma^*TM,\widetilde{\kf}\big)}
 \]
 will reduce to
 \[
 \cT_0\to J^1\tau_\Sigma,
 \]
 that is, we can forget about the ``vertical'' degrees of freedom related to the factor \linebreak $\mathop{{\rm Lin}}{\big(\tau_\Sigma^*TM,\widetilde{\kf}\big)}$. This is achieved in Propositions~\ref{prop:RestrictionToI0} and~\ref{prop:RestrictionLeviCivita}, and in Corollary~\ref{cor:IsomorphismI0PrimePrime}.
\item The metricity conditions are the pullback along the quotient map of the contact forms on~$J^1\tau_\Sigma$. This result is very interesting because it tells us that the reduction scheme implemented relates a Griffiths variational problem (Palatini gravity, see Definition~\ref{def:VariationalProblemPalatini}) with a classical variational problem (Einstein--Hilbert gravity as described in Section~\ref{sec:first-order-vari} below). This is accomplished in Proposition~\ref{prop:Metricity-Horizontal}.
\end{itemize}

So, let us use the following diagram
\begin{equation*}
 \begin{tikzcd}[row sep=1.3cm,column sep=3.1cm]
 J^1\tau \arrow{r}{\Upsilon_\omega}
 \arrow{d}{p_K^{J^1\tau}}
 &
 \big(p_K^{LM}\big)^*\big(J^1\tau_\Sigma\times_\Sigma\mathop{{\rm Lin}} {\big(\tau_\Sigma^*TM,\widetilde{\kf}\big)}\big)
 \arrow{d}{{\rm pr}_{23}}
 \\
 J^1\tau/K
 \arrow{r}{\overline{\Upsilon}_\omega}
 \arrow{d}{\sim}
 &
 J^1\tau_\Sigma\times_\Sigma\mathop{{\rm Lin}}{\big(\tau_\Sigma^*TM,\widetilde{\kf}\big)}
 \arrow{dl}{G_\omega}
 \\
 \Sigma\times C(LM)
 &
 \end{tikzcd}
\end{equation*}
in order to define the diffeomorphism $G_\omega$; here $\overline{\Upsilon}_\omega$ is the map induced by $\Upsilon_\omega$. Therefore, let us construct the pullback bundle ${\rm pr}_1\colon \big(p_K^{LM}\big)^*\big(J^1\tau_\Sigma\big)\to LM$ by means of the commutative diagram
\[
 \begin{tikzcd}[row sep=1.3cm,column sep=3.1cm]
 \big(p_K^{LM}\big)^*\big(J^1\tau_\Sigma\big)
 \arrow[swap]{r}{{\rm pr}_2}
 \arrow{d}{{\rm pr}_1}
 &
 J^1\tau_\Sigma
 \arrow{d}{(\tau_\Sigma)_{10}}
 \\
 LM
 \arrow{r}{p_K^{LM}}
 &
 \Sigma
 \end{tikzcd}
\]
In this setting, we can prove that the zero torsion submanifold~$\cT_0$ has some nice properties regarding the decomposition induced by the connection~$\omega_K$.
\begin{Proposition}\label{prop:RestrictionToI0}
 The canonical projection
 \begin{align*}
 {\rm pr}_\Sigma\colon \ \big(p_K^{LM}\big)^*&\big(J^1\tau_\Sigma\times_\Sigma\Lin \big(\tau_\Sigma^*TM,\widetilde{\kf}\big)\big)\longto\big(p_K^{LM}\big)^*\big(J^1\tau_\Sigma\big)\\
 &\big(e,j_x^1\overline{s},\big[e,\widehat{\xi}\big]_K\big)\xmapsto{\hspace{2.2cm}} \big(e,j_x^1\overline{s}\big)
 \end{align*}
 restricted to the submanifold
 \[
 \cT_0':=\Upsilon_\omega (\cT_0 )\subset\big(p_K^{LM}\big)^*\big(J^1\tau_\Sigma\times_\Sigma\Lin \big(\tau_\Sigma^*TM,\widetilde{\kf}\big)\big)
 \]
 is a diffeomorphism between $\cT_0'$ and $\big(p_K^{LM}\big)^*\big(J^1\tau_\Sigma\big)$.
\end{Proposition}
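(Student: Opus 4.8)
The plan is to realise $\mathrm{pr}_\Sigma|_{\cT_0'}$ as a diffeomorphism by constructing an explicit smooth inverse, namely the assignment sending a point of $\big(p_K^{LM}\big)^*\big(J^1\tau_\Sigma\big)$ to the \emph{unique} torsion-free jet lying above it. Since $\Upsilon_\omega$ is a bundle isomorphism and $\cT_0'=\Upsilon_\omega(\cT_0)$, it is equivalent to prove that
\[
 \Phi\colon \ \cT_0\longto \big(p_K^{LM}\big)^*\big(J^1\tau_\Sigma\big),\qquad j^1_xs\longmapsto\big(s(x),j^1_x[s]_K\big),
\]
is a diffeomorphism, because $\mathrm{pr}_\Sigma|_{\cT_0'}=\Phi\circ\big(\Upsilon_\omega|_{\cT_0}\big)^{-1}$. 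First I would describe the fibre of $\mathrm{pr}_\Sigma$ over a fixed $(e,j^1_x\overline s)$: by the parametrisation of $\Lin\big(\tau_\Sigma^*TM,\widetilde{\kf}\big)$ set up in Lemma~\ref{lem:Coords-On-KF}, this fibre is an affine space modelled on $\Lin(T_xM,\kf)$, the free variable being $\widehat\xi$.

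Next I would use the explicit inverse of $\Upsilon_\omega$,
\[
 \Upsilon_\omega^{-1}\big(e,j_x^1\overline{s},\big[e,\widehat{\xi}\big]_K\big)=\big[v_x\mapsto (T_x\overline{s} (v_x))_e^H+\big(\widehat{\xi} (v_x)\big)_{LM} (e)\big],
\]
to read off how the connection attached to $(e,j^1_x\overline s,\widehat\xi)$ depends on $\widehat\xi$. Because $\omega_K=\pi_\kf\circ\omega_0$, the horizontal lift $(\cdot)^H$ annihilates the $\kf$-part of $\omega_0$, so the relevant connection form decomposes as a $\pf$-valued piece $\omega^{\pf}$, completely fixed by the metric jet $j^1_x\overline s$, the frame $e$ and the reference connection $\omega_0$ (this is precisely what Corollary~\ref{cor:Coordinates-Lift} computes in coordinates), plus the free $\kf$-valued piece $\widehat\xi$. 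Substituting into the torsion $T=\mathrm{d}\theta^j+\omega^j_k\wedge\theta^k$ and using that $\theta$ depends only on $e$, the pulled-back torsion becomes affine in $\widehat\xi$, of the form $T_0+\widehat\xi\wedge\theta$, where $T_0$ (the value at $\widehat\xi=0$) is determined by $(e,j^1_x\overline s)$. Hence the torsion-zero constraint $i_0^*T=0$ defining $\cT_0$ turns into the linear equation $\widehat\xi\wedge\theta=-T_0$ for $\widehat\xi$.

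The heart of the argument, and the step I expect to be the main obstacle, is to show that this equation has a unique solution, i.e.\ that the contraction-and-wedge map
\[
 L_e\colon \ \kf\otimes T_x^*M\longto \R^m\otimes\Lambda^2 (T_x^*M),\qquad \widehat\xi\longmapsto\widehat\xi\wedge\theta,
\]
is a linear isomorphism. Both spaces have dimension $\tfrac{m^2(m-1)}{2}$, so it suffices to prove injectivity. Writing everything in the frame $e$ and lowering indices with $\eta$, an element of $\kf\otimes T_x^*M$ is a tensor $\xi_{ijk}$ antisymmetric in $(i,j)$, and $L_e$ sends it to its antisymmetrisation $\xi_{iab}-\xi_{iba}$ in the last two slots; if this vanishes then $\xi$ is symmetric in $(a,b)$, and the standard permutation argument (a tensor of this valence that is symmetric in one pair and antisymmetric in another must vanish) yields $\xi_{ijk}=-\xi_{ijk}$, hence $\xi=0$. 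This is nothing but the linear algebra underlying the fundamental theorem of (pseudo-)Riemannian geometry, with the Koszul formula furnishing $L_e^{-1}$ explicitly.

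Finally I would assemble the global statement. Since $L_e$ is invertible for every frame $e$ and $T_0$ depends smoothly (indeed polynomially, via Corollary~\ref{cor:Coordinates-Lift}) on $(e,j^1_x\overline s)$, the solution $\widehat\xi=L_e^{-1}(-T_0)$ defines a smooth section
\[
 (e,j^1_x\overline s)\longmapsto\big(e,j^1_x\overline s,[e,\widehat\xi]_K\big)
\]
of $\mathrm{pr}_\Sigma$ whose image lies in $\cT_0'$, by construction torsion-free. Uniqueness of $\widehat\xi$ makes this section a two-sided inverse of $\mathrm{pr}_\Sigma|_{\cT_0'}$, and both it and $\mathrm{pr}_\Sigma$ are smooth, so $\mathrm{pr}_\Sigma|_{\cT_0'}$ is a diffeomorphism onto $\big(p_K^{LM}\big)^*\big(J^1\tau_\Sigma\big)$, as claimed. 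A dimension count corroborates the picture: $\cT_0$ is cut out of $J^1\tau$ by the $\tfrac{m^2(m-1)}{2}$ independent torsion components, exactly the rank of the $\Lin$-fibres that $\mathrm{pr}_\Sigma$ forgets.
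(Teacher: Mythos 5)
Your proposal is correct and is in substance the paper's own proof: both use $\Upsilon_\omega$ to identify the fibre of ${\rm pr}_\Sigma$ over $(e,j^1_x\overline{s})$ with the $\kf$-valued $1$-forms and reduce the zero-torsion constraint on that fibre to an inhomogeneous linear system whose unique solvability is exactly the symmetric/antisymmetric permutation argument (Proposition~\ref{Prop:UniqueSolution} in the paper, the injectivity of your map $L_e$). The differences are presentational rather than structural: the paper computes in the adapted coordinates of Corollary~\ref{cor:Coordinates-Lift}, passes through the Christoffel symbols $\Gamma^\mu_{\rho\sigma}$ and concludes that each fibre meets $\cT_0'$ in a single point, whereas your version phrases the same computation invariantly via the affine dependence of the torsion on $\widehat{\xi}$ and, as a small bonus, makes the smoothness of the inverse section explicit.
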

\begin{proof} The proof of this proposition will be local.
 Using equation~\eqref{eq:HorizontalLift} and the coordinates introduced above, we have that
 \begin{align*}
 \frac{\partial}{\partial x^\sigma}+e^{\mu}_{k\sigma}\frac{\partial}{\partial e^\mu_k}&=
 \left(\frac{\partial}{\partial x^\sigma}+g^{\mu\nu}_\sigma\frac{\partial}{\partial g^{\mu\nu}}\right)^H+A_{\rho\sigma}^\mu\widetilde{E}^\rho_\mu (e_x )\\
 &=\frac{\partial}{\partial x^\sigma}+\frac{1}{2}g_{\beta\rho}e^\rho_k\big[g^{\mu\beta}_\sigma+ \big(g^{\alpha\mu}\overline{\Gamma}^\beta_{\alpha\sigma}-g^{\alpha\beta} \overline{\Gamma}^\mu_{\alpha\sigma}\big)\big]\frac{\partial}{\partial e^\mu_k}-A_{\rho\sigma}^\mu e^\rho_k\frac{\partial}{\partial e^\mu_k},
 \end{align*}
 namely
 \[
 e^{\mu}_{k\sigma}=\frac{1}{2}g_{\beta\rho}e^\rho_k\big[g^{\mu\beta}_\sigma +\big(g^{\alpha\mu}\overline{\Gamma}^\beta_{\alpha\sigma}-g^{\alpha\beta} \overline{\Gamma}^\mu_{\alpha\sigma}\big)\big]-A_{\rho\sigma}^\mu e^\rho_k.
 \]
 Then it follows that, for the $K$-invariant functions $\Gamma^\mu_{\nu\sigma}$,
 \begin{gather}
 \Gamma_{\rho\sigma}^\mu =-e^k_\rho e^{\mu}_{k\sigma}
 =-\frac{1}{2}g_{\beta\rho}\big[g^{\mu\beta}_\sigma +\big(g^{\alpha\mu}\overline{\Gamma}^\beta_{\alpha\sigma} -g^{\alpha\beta}\overline{\Gamma}^\mu_{\alpha\sigma}\big)\big]+A_{\rho\sigma}^\mu. \label{eq:GammaInTermsQuotient}
 \end{gather}
 It means that the set $\cT_0'$ is locally given by the equation
 \begin{gather*}
 \frac{1}{2}g_{\beta\sigma}\big[g^{\mu\beta}_\rho+\big(g^{\alpha\mu} \overline{\Gamma}^\beta_{\alpha\rho}-g^{\alpha\beta}\overline{\Gamma}^\mu_{\alpha\rho}\big)\big] -\frac{1}{2}g_{\beta\rho}\big[g^{\mu\beta}_\sigma+\big(g^{\alpha\mu} \overline{\Gamma}^\beta_{\alpha\sigma}-g^{\alpha\beta}\overline{\Gamma}^\mu_{\alpha\sigma}\big)\big]
 +A_{\rho\sigma}^\mu-A_{\sigma\rho}^\mu=0.
 \end{gather*}
 Let us define the set of quantities
 \[
 A_{\mu\nu\sigma}:=g_{\mu\rho}A^\rho_{\nu\sigma};
 \]
 then using this equation and the fact that
 \[
 A_{\mu\nu\sigma}+A_{\nu\mu\sigma}=g_{\mu\rho}A^\rho_{\nu\sigma}+g_{\nu\rho}A^\rho_{\mu\sigma}=0,
 \]
 we can conclude, from Proposition \ref{Prop:UniqueSolution}, that the elements $A^\mu_{\nu\sigma}$ are uniquely determined by the fact that they belong to $\cT_0'$. In other words, the set
 \[
({\rm pr}_\Sigma)^{-1}\big(e,j_x^1\overline{s}\big)\cap\cT_0'
 \]
 consists in a single element.
\end{proof}

Proposition \ref{prop:RestrictionToI0} can be geometrically interpreted: Recall that, through isomorphism
\[
 G_\omega\colon \ J^1\tau_\Sigma\times_\Sigma\mathop{{\rm Lin}}{\big(\tau_\Sigma^*TM,\widetilde{\kf}\big)}\to\Sigma\times C(LM),
\]
any section of the reduced bundle $J^1\tau_\Sigma\times_\Sigma\mathop{{\rm Lin}}{\big(\tau_\Sigma^*TM,\widetilde{\kf}\big)}$ can be seen as a pair ``metric'' plus ``connection''. With this interpretation in mind, the previous proposition tells us that, when projected to the quotient, the ``connection part'' of the section corresponds to Levi-Civita connection, and so, it is uniquely determined by its ``metric part''. The following result summarizes it.

\begin{Proposition}\label{prop:RestrictionLeviCivita}
 Let
 \[
 \sigma\colon \ M\rightarrow J^1\tau_\Sigma\times_\Sigma\mathop{{\rm Lin}}{\big(\tau_\Sigma^*TM,\widetilde{\kf}\big)}
 \]
 be a section of the composite map
 \[
 J^1\tau_\Sigma\times_\Sigma\mathop{{\rm Lin}}{\big(\tau_\Sigma^*TM,\widetilde{\kf}\big)}{\longto} \Sigma\stackrel{\tau_\Sigma}{\longto}M
 \]
 such that ${\rm pr}_1\circ\sigma\colon M\to J^1\tau_\Sigma$ is a holonomic section and
 \[
 \mathop{{\rm Im}}{\sigma}\subset{\rm pr}_{23} (\cT_0' ).
 \]
 Then
 \[
 \Gamma_\sigma:={\rm pr}_2\circ G_\omega\circ\sigma\colon \ M\rightarrow C_0(LM)
 \]
 is the Levi-Civita connection associated to the metric $g_\sigma:={\rm pr}_1\circ G_\omega\circ\sigma$.
\end{Proposition}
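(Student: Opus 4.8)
The plan is to deduce the statement from the uniqueness already established in Proposition~\ref{prop:RestrictionToI0}, together with the classical characterization of Levi-Civita as the unique torsionless metric connection. Throughout I work in an adapted chart, writing ${\rm pr}_1\circ\sigma$ in coordinates $\big(x^\mu,g^{\mu\nu},g^{\mu\nu}_\lambda\big)$. The holonomicity hypothesis means exactly that $g^{\mu\nu}_\lambda=\partial_\lambda g^{\mu\nu}$, i.e.\ the jet coordinates of the metric factor agree with the honest derivatives of $g_\sigma$; and by the isomorphism $G_\omega$ the Christoffel symbols of $\Gamma_\sigma={\rm pr}_2\circ G_\omega\circ\sigma$ are given by~\eqref{eq:GammaInTermsQuotient}, with the $\widetilde{\kf}$-component $A^\mu_{\rho\lambda}$ of $\sigma$ entering linearly.

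First I would record that $\Gamma_\sigma$ is torsionless. This is immediate from $\mathop{\rm Im}\sigma\subset{\rm pr}_{23}(\cT_0')$: since $\cT_0'=\Upsilon_\omega(\cT_0)$ and $\cT_0$ is the zero-torsion submanifold, the connection part of any point of ${\rm pr}_{23}(\cT_0')$ lies in $C_0(LM)$, so $(\Gamma_\sigma)^\mu_{\rho\lambda}=(\Gamma_\sigma)^\mu_{\lambda\rho}$. Next, Proposition~\ref{prop:RestrictionToI0} says that over a prescribed value of the $J^1\tau_\Sigma$-factor there is exactly one point of $\cT_0'$; hence the $\widetilde{\kf}$-component $A$ is fixed once the metric jet is fixed. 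Because the resulting connection is ${\rm GL}(m)$-invariant, the ambiguity in the $LM$-factor of $\cT_0'$ is irrelevant, and we conclude that $\Gamma_\sigma$ is a function of the metric jet $\big(g^{\mu\nu},\partial_\lambda g^{\mu\nu}\big)=j^1g_\sigma$ alone.

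It then remains to identify this unique torsionless connection, and the cleanest route is a comparison argument. Let $\Gamma^{\mathrm{LC}}$ be the Levi-Civita connection of $g_\sigma$ and choose, locally, a frame field $\widehat s\colon U\to LM$ lifting $g_\sigma$ (so $[\widehat s]_K=g_\sigma$) whose first derivatives are prescribed by $e^\lambda_{k\nu}=-e^\rho_k\big(\Gamma^{\mathrm{LC}}\big)^\lambda_{\rho\nu}$, i.e.\ whose connection part equals $\Gamma^{\mathrm{LC}}$. Since $\Gamma^{\mathrm{LC}}$ is symmetric, $j^1\widehat s$ takes values in $\cT_0$, so $\Upsilon_\omega\circ j^1\widehat s$ takes values in $\cT_0'$. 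Moreover, metricity of $\Gamma^{\mathrm{LC}}$ (equation~\eqref{eq:MetricityLocal}) forces the metric velocity of $j^1\widehat s$ to equal $\partial_\lambda g_\sigma^{\mu\nu}$, so the $J^1\tau_\Sigma$-part of $\Upsilon_\omega\circ j^1\widehat s$ is precisely the holonomic jet $j^1g_\sigma$ carried by $\sigma$. By the uniqueness of the previous paragraph the two determine the same connection, whence $\Gamma_\sigma=\Gamma^{\mathrm{LC}}$, as claimed.

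The main obstacle is the consistency check in the last step, namely that the Levi-Civita frame field $\widehat s$ realizes the \emph{given} metric jet rather than some other first-order datum; this is exactly where holonomicity of ${\rm pr}_1\circ\sigma$ enters, in tandem with metricity of $\Gamma^{\mathrm{LC}}$. Equivalently, one may bypass the comparison and verify metricity of $\Gamma_\sigma$ by hand: substitute~\eqref{eq:GammaInTermsQuotient} with $g^{\mu\nu}_\lambda=\partial_\lambda g^{\mu\nu}$ into the left-hand side of~\eqref{eq:MetricityLocal} and check, using $A_{\mu\nu\lambda}+A_{\nu\mu\lambda}=0$ and the zero-torsion relation fixing $A$, that $\nabla_\lambda g^{\mu\nu}=0$; torsionless plus metric then yields Levi-Civita. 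The index bookkeeping in this direct route is the only delicate piece, so I would keep the comparison argument as the primary proof and relegate the explicit computation to a remark.
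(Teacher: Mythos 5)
Your argument is correct, but it takes a genuinely different route from the paper's. The paper's proof is a direct computation: starting from the local expression \eqref{eq:GammaInTermsQuotient} for $G_\omega$, it observes that the lowered coefficients $\Gamma_{\mu\rho\sigma}=g_{\mu\alpha}\Gamma^\alpha_{\rho\sigma}$ satisfy the symmetric/antisymmetric linear system \eqref{eq:GInTermsGamma}, solves that system explicitly by the algebraic Proposition~\ref{Prop:UniqueSolution} to arrive at the Christoffel formula \eqref{eq:GammaInTermsQuotientJetVariables}, and only then uses holonomicity to replace $g^{\mu\nu}_\sigma$ by $\partial g^{\mu\nu}/\partial x^\sigma$. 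You instead exhibit the Levi-Civita connection as a competing point of $\cT_0'$ over the same metric jet (zero torsion puts it in $\cT_0$; metricity of $\Gamma^{\mathrm{LC}}$ together with holonomicity of ${\rm pr}_1\circ\sigma$ identifies its $J^1\tau_\Sigma$-part with $j^1g_\sigma$ via \eqref{eq:MetricityLocal}) and then invoke the uniqueness of Proposition~\ref{prop:RestrictionToI0}. Both proofs ultimately rest on that same uniqueness, which in turn rests on Proposition~\ref{Prop:UniqueSolution}; your ``exhibit a solution and quote uniqueness'' scheme is cleaner and avoids index gymnastics, but it does not produce the closed formula \eqref{eq:GammaInTermsQuotientJetVariables}, which the paper reuses afterwards (Corollary~\ref{cor:IsomorphismI0PrimePrime}, Lemmas~\ref{lem:DensityAndLagFormEH} and~\ref{lem:routhian_pr1_horizontal}), so the explicit computation is not wasted effort there. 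Two small points of hygiene, neither of which is a real gap: one cannot prescribe the first derivatives of an honest frame field $\widehat s$ independently of $\widehat s$ itself, so what you are really constructing is a (generally non-holonomic) section of $\tau_1$, i.e., a point of $\cT_0$ over each $x$ with the stated coordinates --- which is all the uniqueness argument needs; and the descent of the uniqueness from $\cT_0'$ to ${\rm pr}_{23}(\cT_0')$ is a matter of $K$-equivariance (the $\widetilde{\kf}$-component $\big[e,\widehat{\xi}\,\big]_K$ is already a $K$-orbit), not ${\rm GL}(m)$-invariance of the connection; this descent is precisely the content of Corollary~\ref{cor:IsomorphismI0PrimePrime}.
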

\begin{proof}Locally, the map $G_\omega$ is given by equation~\eqref{eq:GammaInTermsQuotient}. Therefore, from the proof of the previous Proposition and using Proposition~\ref{Prop:UniqueSolution}, we will have that the elements
 \[
 \Gamma_{\mu\nu\sigma}:=g_{\mu\rho}\Gamma^\rho_{\nu\sigma}
 \]
 are uniquely determined by the set of equations
 \begin{gather}
 \Gamma_{\mu\rho\sigma}-\Gamma_{\rho\mu\sigma}=0,\qquad
 \Gamma_{\mu\rho\sigma}+\Gamma_{\rho\mu\sigma} =-g_{\mu\alpha}g_{\rho\beta}g^{\alpha\beta}_\sigma.\label{eq:GInTermsGamma}
 \end{gather}
 It means that
 \[
 \Gamma_{\mu\rho\sigma}=-\frac{1}{2}\big(g_{\mu\alpha}g_{\rho\beta}g^{\alpha\beta}_\sigma +g_{\rho\alpha}g_{\sigma\beta}g^{\alpha\beta}_\mu-g_{\sigma\alpha}g_{\mu\beta}g^{\alpha\beta}_\mu\big).
 \]
 Now, using the definition
 \[
 g_{\mu\nu,\sigma}:=-g_{\mu\alpha}g_{\nu\beta}g^{\alpha\beta}_\sigma
 \]
 we obtain
 \begin{gather}\label{eq:GammaInTermsQuotientJetVariables}
 \Gamma_{\rho\sigma}^\mu=\frac{1}{2}g^{\mu\alpha}\big(g_{\alpha\nu,\sigma} +g_{\rho\sigma,\alpha}-g_{\sigma\alpha,\nu}\big).
 \end{gather}
 Because ${\rm pr}_1\circ\sigma$ is holonomic, we have that
 \[
 g^{\mu\nu}_\sigma=\frac{\partial g^{\mu\nu}}{\partial x^\sigma},
 \]
 as required.
\end{proof}

Let us define
\[
 \cT_0'':=G_\omega^{-1} (\Sigma\times C_0(LM) )=\overline{\Upsilon}_\omega\big(p_K^{J^1\tau}(\cT_0)\big);
\]
then, we need to draw our attention to the diagram in Fig.~\ref{fig:MapsInvolvedRouth}.
\begin{figure}[h!]
\scalebox{.85}
{
 $$
 \begin{tikzcd}[row sep=1cm,column sep=.1cm,ampersand replacement=\&]
 \cT_0
 \arrow{rrr}{\Upsilon_\omega|_{\cT_0}}
 \arrow[hookrightarrow]{rd}{}
 \arrow[swap]{d}{p_K^{J^1\tau}\big|_{\cT_0}}
 \&
 \&
 \&
 \cT_0'
 \arrow[hookrightarrow]{dl}{}
 \arrow{dd}{{\rm pr}_\Sigma|_{\cT_0'}}
 \\
 \cT_0/K
 \arrow[swap]{d}{\sim}
 \arrow{ddr}{\overline{\Upsilon}_\omega|_{\cT_0/K}}
 \&
 J^1\tau
 \arrow{r}{\Upsilon_\omega}
 \arrow{dd}{\overline{\Upsilon}_\omega\circ p_K^{J^1\tau}}
 \&
\big(p_k^{LM}\big)^*\big(J^1\tau_\Sigma\times_\Sigma\mathop{{\rm Lin}}{\big(\tau_\Sigma^*TM,\widetilde{\kf}\big)}\big)
 \arrow{ddl}{{\rm pr}_{23}}
 \arrow{dr}{{\rm pr}_\Sigma}
 \&
 \\
 \Sigma\times C_0(LM)
 \arrow[swap]{dd}{G_\omega^{-1}\big|_{\Sigma\times C_0(LM)}}
 \&
 \&
 \&
\big(p_k^{LM}\big)^*\big(J^1\tau_\Sigma\big)
 \arrow{dd}{{\rm pr}_2}
 \\
 \&
 J^1\tau_\Sigma\times_\Sigma\mathop{{\rm Lin}}{\big(\tau_\Sigma^*TM,\widetilde{\kf}\big)}
 \arrow{drr}{{\rm pr}_1}
 \&
 \&
 \\
 \cT_0''
 \arrow[hookrightarrow]{ru}{}
 \arrow{rrr}{{\rm pr}_1|_{\cT_0''}}
 \&
 \&
 \&
 J^1\tau_\Sigma
 \end{tikzcd}
 $$}
\caption{Maps involved in the Routh reduction of Palatini gravity.}\label{fig:MapsInvolvedRouth}
\end{figure}
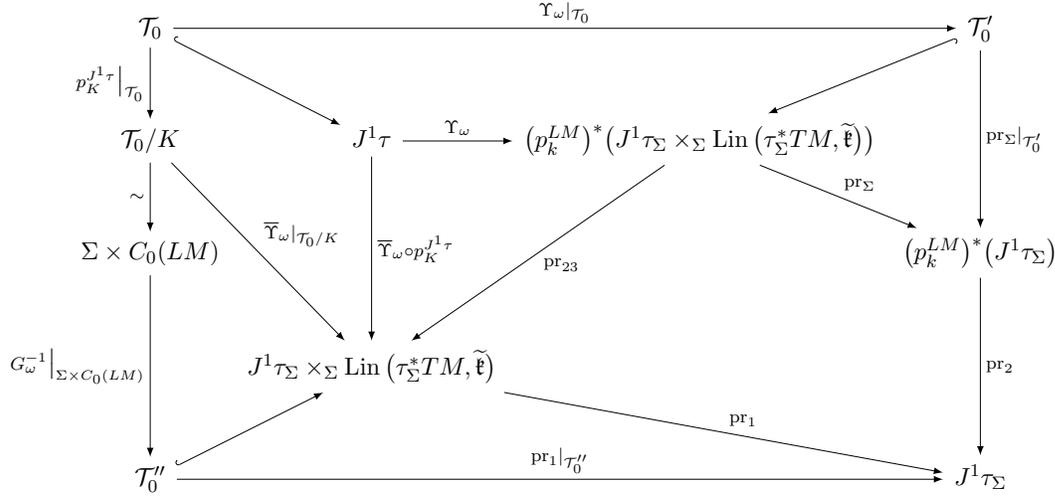

As a consequence of formula~\eqref{eq:GammaInTermsQuotientJetVariables}, we obtain the following corollary; in short, it says that in the reduced bundle $J^1\tau_\Sigma\times_\Sigma\mathop{{\rm Lin}}{\big(\tau_\Sigma^*TM,\widetilde{\kf}\big)}$, the degrees of freedom associated to the factor $\mathop{{\rm Lin}}{\big(\tau_\Sigma^*TM,\widetilde{\kf}\big)}$ are superfluous.
\begin{Corollary}\label{cor:IsomorphismI0PrimePrime}
 The map
 \[
{\rm pr}_1|_{\cT_0''}\colon \ \cT_0''\to J^1\tau_\Sigma
 \]
 is a bundle isomorphism over the identity on $\Sigma$.
\end{Corollary}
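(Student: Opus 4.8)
The plan is to obtain Corollary~\ref{cor:IsomorphismI0PrimePrime} as the $K$-quotient of the diffeomorphism already produced in Proposition~\ref{prop:RestrictionToI0}, reading the relevant maps off the right-hand square of Fig.~\ref{fig:MapsInvolvedRouth}. Concretely, I would organise the four spaces into the commutative square whose top arrow is ${\rm pr}_\Sigma|_{\cT_0'}\colon \cT_0'\to\big(p_K^{LM}\big)^*\big(J^1\tau_\Sigma\big)$, whose bottom arrow is the sought ${\rm pr}_1|_{\cT_0''}\colon \cT_0''\to J^1\tau_\Sigma$, and whose vertical arrows are ${\rm pr}_{23}|_{\cT_0'}$ and ${\rm pr}_2$. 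The first step is to identify both vertical arrows as principal $K$-bundles: the map ${\rm pr}_2\colon\big(p_K^{LM}\big)^*\big(J^1\tau_\Sigma\big)\to J^1\tau_\Sigma$ is exactly the quotient of the pullback by the $K$-action on the $LM$-factor, since $p_K^{LM}\colon LM\to\Sigma$ is a principal $K$-bundle; and because $\cT_0$ is ${\rm GL}(m)$-invariant, hence $K$-invariant, and $\Upsilon_\omega$ intertwines the actions, the restriction ${\rm pr}_{23}|_{\cT_0'}\colon \cT_0'\to\cT_0''$ is likewise a principal $K$-bundle, its image being $\cT_0''=\overline{\Upsilon}_\omega\big(p_K^{J^1\tau}(\cT_0)\big)$ by the very definition of $\cT_0''$.

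The second step is to check that ${\rm pr}_\Sigma|_{\cT_0'}$ is $K$-equivariant for these two actions. This is immediate from the description of the $K$-action under $\Upsilon_\omega$ recorded just before Corollary~\ref{cor:identification}, namely $g\cdot\big(e,j_x^1\overline{s},[e,\widehat{\xi}]_K\big)=\big(g\cdot e,j_x^1\overline{s},[e,\widehat{\xi}]_K\big)$: the action touches only the $LM$-component $e$, which ${\rm pr}_\Sigma$ preserves, while it forgets the $\widetilde{\kf}$-component that is in any case $K$-invariant as a class. Since ${\rm pr}_\Sigma|_{\cT_0'}$ is a diffeomorphism by Proposition~\ref{prop:RestrictionToI0}, it is a $K$-equivariant diffeomorphism of total spaces of principal $K$-bundles, and therefore descends to a diffeomorphism of the base spaces. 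By commutativity of the square this descended map is precisely ${\rm pr}_1|_{\cT_0''}$, which is thus a diffeomorphism. That it covers the identity on $\Sigma$ is clear because every arrow in the square respects the projection to $\Sigma$ (each fixes the $\Sigma$-point of the underlying metric), so ${\rm pr}_1|_{\cT_0''}$ is a bundle isomorphism over ${\rm id}_\Sigma$, as claimed.

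I expect the only genuinely delicate point to be the bookkeeping of the two $K$-actions and the verification that ${\rm pr}_{23}|_{\cT_0'}$ really is the $K$-quotient onto $\cT_0''$ (rather than merely a surjective $K$-equivariant submersion); this is exactly where one must invoke the $K$-invariance of $\cT_0$, so that no orbit is collapsed or split. As a purely computational alternative that sidesteps the descent argument, one may argue locally: formula~\eqref{eq:GammaInTermsQuotientJetVariables} pins down the Levi-Civita symbols $\Gamma^\mu_{\rho\sigma}$ as explicit smooth functions of the jet data $\big(g^{\mu\nu},g^{\mu\nu}_\sigma\big)$ on $\cT_0''$, and then~\eqref{eq:GammaInTermsQuotient} expresses the fibre coordinates $A^\mu_{\rho\sigma}$ of the $\Lin$-factor as smooth functions of the same data. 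Hence ${\rm pr}_1|_{\cT_0''}$ reads $\big(x^\mu,g^{\mu\nu},g^{\mu\nu}_\sigma,A^\mu_{\rho\sigma}(g,g_\sigma)\big)\mapsto\big(x^\mu,g^{\mu\nu},g^{\mu\nu}_\sigma\big)$, with smooth inverse given by reinstating those same functions; since $\big(x^\mu,g^{\mu\nu}\big)$ are exactly the coordinates pulled back from $\Sigma$, the map is a fibrewise bijection with smooth inverse over ${\rm id}_\Sigma$, completing the proof chart by chart.
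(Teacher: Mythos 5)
Your proposal is correct, and it is worth noting that your ``purely computational alternative'' at the end is in fact essentially the paper's own proof: the paper simply writes the composite $\overline{\Upsilon}_\omega\circ p_K^{J^1\tau}$ in the adapted coordinates and observes that on $\cT_0''$ the fibre data of the $\Lin\big(\tau_\Sigma^*TM,\widetilde{\kf}\big)$-factor are pinned down by $\big(x^\mu,g^{\mu\nu},g^{\mu\nu}_\sigma\big)$ through equations~\eqref{eq:GInTermsGamma} and~\eqref{eq:GammaInTermsQuotientJetVariables}, so that forgetting them is invertible. Your main argument is a genuinely different and more structural route: you take the diffeomorphism ${\rm pr}_\Sigma|_{\cT_0'}$ of Proposition~\ref{prop:RestrictionToI0}, observe that it is $K$-equivariant between the total spaces of the two principal $K$-bundles ${\rm pr}_{23}|_{\cT_0'}\colon\cT_0'\to\cT_0''$ and ${\rm pr}_2\colon\big(p_K^{LM}\big)^*\big(J^1\tau_\Sigma\big)\to J^1\tau_\Sigma$, and descend. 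This checks out: the square commutes on the nose, the identification of the $K$-action recorded before Corollary~\ref{cor:identification} gives equivariance immediately, and the one delicate point --- that the fibres of ${\rm pr}_{23}|_{\cT_0'}$ are full $K$-orbits rather than proper subsets, so that $\cT_0''$ really is the quotient $\cT_0'/K$ --- is correctly handled by the $K$-invariance of $\cT_0$ together with ${\rm pr}_{23}\circ\Upsilon_\omega=\overline{\Upsilon}_\omega\circ p_K^{J^1\tau}$, which you flag explicitly. What the descent argument buys is coordinate independence and a clean conceptual statement (the corollary is literally the $K$-quotient of Proposition~\ref{prop:RestrictionToI0}); what the paper's local computation buys is an explicit inverse, which is what is actually used later when identifying $\cL_{\rm EH}^{(1)}$ and the Routhian. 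Either version is acceptable as a proof.
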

\begin{proof}
 Locally, composite map $\overline{\Upsilon}_\omega\circ p_K^{J^1\tau}$ is given by
 \[
 \overline{\Upsilon}_\omega\circ p_K^{J^1\tau}\big(\big[x^\mu,e_k^\nu,e^\sigma_{k\rho}\big]_K\big)=\big(x^\mu,\eta^{ij}e_i^\mu e_j^\nu,g^{\mu\nu}_\sigma\big),
 \]
 where coordinates $g_\sigma^{\mu\nu}$ are calculated using equation~\eqref{eq:GInTermsGamma}.
\end{proof}

For the last result of the section, we need any of the composite maps
\[
 \begin{tikzcd}[row sep=1.3cm,column sep=.7cm]
 J^1\tau
 \arrow{r}{\Upsilon_\omega}
 \arrow{d}{\overline{\Upsilon}_\omega\circ p^{J^1\tau}_K}
 &
\big(p_{K}^{LM}\big)^*\big(J^1\tau_\Sigma\times_\Sigma\mathop{{\rm Lin}}{\big(\tau_\Sigma^*TM,\widetilde{\kf}\big)}\big)
 \arrow{r}{{\rm pr}_\Sigma}
 &
 \big(p_K^{LM}\big)^*\big(J^1\tau_\Sigma\big)
 \arrow{d}{{\rm pr}_2}
 \\
 J^1\tau_\Sigma\times_\Sigma\mathop{{\rm Lin}}{\big(\tau_\Sigma^*TM,\widetilde{\kf}\big)}
 \arrow{rr}{{\rm pr}_1}
 &
 &
 J^1\tau_\Sigma
 \end{tikzcd}
\]

So far, we have obtained a result allowing us to reduce the bundle $J^1\tau_\Sigma\times_\Sigma\mathop{{\rm Lin}}{\big(\tau_\Sigma^*TM,\widetilde{\kf}\big)}$ further down to $J^1\tau_\Sigma$; thus, we are halfway to connect the reduced variational problem defined on this bundle with the Einstein--Hilbert variational problem. Now, as we mentioned above, the splitting induced by the connection form $\omega_K$ allows us to relate the metricity forms with a contact structure on the quotient bundle; it will make possible to complete this connection by showing that the contact structure on $J^1\tau_\Sigma$ is a sort of reduction structure for metricity conditions.
\begin{Proposition}\label{prop:Metricity-Horizontal}
 The metricity forms are $({\rm pr}_2\circ{\rm pr}_\Sigma\circ\Upsilon_\omega)$-horizontal $($also $\big({\rm pr}_1\circ\overline{\Upsilon}_\omega\circ p^{J^1\tau}_K\big)$-horizontal$)$. In fact,
 \[
 Tp_K^{LM}\circ\omega_\pf= ({\rm pr}_2\circ{\rm pr}_\Sigma\circ\Upsilon_\omega )^*\overline{\omega}=\big({\rm pr}_1\circ\overline{\Upsilon}_\omega\circ p^{J^1\tau}_K\big)^*\overline{\omega}
 \]
 where $\overline{\omega}$ is the contact form on $J^1\tau_\Sigma$.
\end{Proposition}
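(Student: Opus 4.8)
The statement is an identity between $V\tau_\Sigma$-valued one-forms on $J^1\tau$, so the plan is to verify it in the adapted coordinates $(x^\mu,e_k^\mu,e^\mu_{k\sigma})$. First I would note that the commutative diagram preceding the statement shows that the two maps ${\rm pr}_2\circ{\rm pr}_\Sigma\circ\Upsilon_\omega$ and ${\rm pr}_1\circ\overline{\Upsilon}_\omega\circ p_K^{J^1\tau}$ coincide; call the common map $F\colon J^1\tau\to J^1\tau_\Sigma$. From the definition of $\Upsilon_\omega$ one has $F(j^1_xs)=j^1_x\big(p_K^{LM}\circ s\big)$, so $F$ is just the first prolongation of the bundle map $p_K^{LM}$. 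Consequently $g^{\alpha\beta}\circ F=\eta^{ij}e_i^\alpha e_j^\beta$, and because the jet coordinate records the total derivative of the composite section, $g^{\alpha\beta}_\sigma\circ F=\eta^{ij}\big(e^\alpha_{i\sigma}e_j^\beta+e_i^\alpha e^\beta_{j\sigma}\big)$ at every point of $J^1\tau$. This reduces the whole proposition to a single coordinate identity, and it suffices to prove it for $F$.

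Next I would compute the right-hand side. Writing the contact one-forms $\vartheta_j^\mu:={\rm d}e_j^\mu-e^\mu_{j\sigma}{\rm d}x^\sigma$ on $J^1\tau$, the pullback of the contact form $\overline{\omega}=\big({\rm d}g^{\alpha\beta}-g^{\alpha\beta}_\sigma{\rm d}x^\sigma\big)\otimes\partial/\partial g^{\alpha\beta}$ along $F$, after inserting the two expressions above and factoring out the contact forms, gives
\[
 F^*\overline{\omega}=\eta^{ij}\big(\vartheta_i^\alpha e_j^\beta+e_i^\alpha\vartheta_j^\beta\big)\otimes\frac{\partial}{\partial g^{\alpha\beta}}.
\]
For the left-hand side I would first record the local form of the canonical connection, $\omega^i_j=-e^i_\mu\vartheta_j^\mu$ (the minus sign being the one fixed by the normalisation $\omega_0=-e^l_\mu(\cdots)E^k_l$), and then apply the projector $A_\pf=\tfrac12\big(A+\eta A^T\eta\big)$ to obtain $(\omega_\pf)^i_j=-\tfrac12\big(e^i_\mu\vartheta_j^\mu+\eta_{ia}\eta_{bj}e^b_\mu\vartheta_a^\mu\big)$.

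I would then view $\omega_\pf$ as a $V\tau$-valued form through the identification $V\tau\simeq LM\times\mathfrak{gl}(m)$, whose coordinate form reads $(u,B)\mapsto-B_j^i e_i^\rho\,\partial/\partial e_j^\rho$, and push it forward by $Tp_K^{LM}$ using \eqref{eq:E_Projection}. Collecting the four resulting terms with the contractions $e^i_\mu e_i^\nu=\delta^\nu_\mu$ and $\eta_{bj}\eta^{jq}=\delta^q_b$, each of the two pieces of $(\omega_\pf)^i_j$ contributes a symmetric pair; after relabelling summation indices and using the symmetry of $\eta^{ij}$ and of $g^{\alpha\beta}$, I expect to obtain $\eta^{ij}\big(e_i^\alpha\vartheta_j^\beta+e_i^\beta\vartheta_j^\alpha\big)\otimes\partial/\partial g^{\alpha\beta}$, which equals $F^*\overline{\omega}$ by one more index swap. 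This would close the argument.

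The main obstacle is purely bookkeeping: getting the two sign conventions to cancel. The canonical connection carries a minus sign inherited from the normalisation of $\omega_0$, while the vertical identification $V\tau\simeq LM\times\mathfrak{gl}(m)$ carries a second minus from the $\exp(-tB)$ convention; both must be tracked so that the two forms agree on the nose rather than up to sign. A secondary point worth stating explicitly is that $Tp_K^{LM}$ annihilates the $\kf$-directions (the tangent directions to the $K$-orbits), which is exactly why only the $\pf$-component $\omega_\pf$ survives the push-forward, and hence why it is the metricity forms—and not the full set of contact forms $\omega^i_j$—that descend to the contact structure $\overline{\omega}$ on $J^1\tau_\Sigma$.
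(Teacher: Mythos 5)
Your proof is correct, and it takes a genuinely different route from the paper's. The paper works in the quotient-adapted coordinates $\big(x^\mu,g^{\mu\nu},\Gamma^\mu_{\rho\sigma},A^\mu_{\rho\sigma}\big)$: it quotes the local expression \eqref{eq:GammaInLocalTerms} for the metricity forms from an earlier reference and substitutes the relation \eqref{eq:GammaInTermsQuotient}; in the symmetrized combination $g^{\mu\sigma}\Gamma^\nu_{\gamma\sigma}+g^{\nu\sigma}\Gamma^\mu_{\gamma\sigma}$ the $\overline{\Gamma}$-terms cancel by antisymmetry and the $A$-terms cancel because $A$ is $\kf$-valued, leaving exactly $-g^{\mu\nu}_\gamma$ and hence $e^i_\mu e^j_\nu\big({\rm d}g^{\mu\nu}-g^{\mu\nu}_\sigma{\rm d}x^\sigma\big)$. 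You instead stay in the frame coordinates $\big(x^\mu,e^\mu_k,e^\mu_{k\sigma}\big)$, recognize $F_\omega$ as the jet prolongation of $p_K^{LM}$, and compute both sides from first principles; your calculation of the push-forward is, in effect, the one the paper carries out in the proof of Lemma~\ref{lem:cont-bundle-decomp} for the full Cartan form, using \eqref{eq:E_Projection}. Your closing observation that $Tp_K^{LM}$ annihilates the $\kf$-directions is worth promoting from a side remark to the main device: it lets you replace $\omega_\pf$ by the whole Cartan form $\widetilde{\omega}$ on the left-hand side, so the explicit application of the projector $A_\pf$ and the recombination of its two pieces (where your sign bookkeeping lives) become unnecessary, and the identity $Tp_K^{LM}\circ\widetilde{\omega}=F_\omega^*\overline{\omega}$ drops out in two lines. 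What your route buys is self-containedness (no appeal to the quoted formula \eqref{eq:GammaInLocalTerms}) and validity on all of $J^1\tau$ without the zero-torsion constraint; what the paper's buys is brevity, since the substantive computation was already packaged into \eqref{eq:GammaInTermsQuotient}.
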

\begin{proof}
 In local coordinates, we have that
 \[
 ({\rm pr}_2\circ{\rm pr}_\Sigma\circ\Upsilon_\omega )\big(x^\mu,e^\nu_k,e^\nu_{k\sigma}\big)=\big(x^\mu,g^{\mu\nu},g^{\mu\nu}_{\sigma}\big),
 \]
 where $g^{\mu\nu}_{\sigma}$ is calculated using equation~\eqref{eq:GammaInTermsQuotient}. On the other hand, the metricity forms have the following local expression \cite{capriotti14:_differ_palat}
 \begin{equation}\label{eq:GammaInLocalTerms}
 \eta^{ik}\omega_k^j+\eta^{jk}\omega_k^i=e^i_\mu e^j_\nu\big[{\rm d}g^{\mu\nu}+\big(g^{\mu\sigma}\Gamma_{\sigma\rho}^\nu+g^{\nu\sigma} \Gamma_{\sigma\rho}^\mu\big){\rm d}x^\rho\big].
 \end{equation}
 Using equation~\eqref{eq:GammaInTermsQuotient}, it follows that
 \[
 \eta^{ik}\omega_k^j+\eta^{jk}\omega_k^i=e^i_\mu e^j_\nu\big({\rm d}g^{\mu\nu}-g^{\mu\nu}_\sigma {\rm d}x^\sigma\big),
 \]
 namely, the metricity condition is horizontal with respect to the projection
 \[
 {\rm pr}_2\circ{\rm pr}_\Sigma\circ\Upsilon_\omega\colon \ J^1\tau\longto J^1\tau_\Sigma,
 \]
 and the form in the base manifold is nothing, but the generator of the contact structure.
\end{proof}

\section{First order variational problem for Einstein--Hilbert gravity}\label{sec:first-order-vari}

Einstein--Hilbert variational problem is a classical second order variational problem on the bundle of metrics $\tau_\Sigma\colon \Sigma\to M$ of signature~$\eta$ on~$M$~-- see \cite{doi:10.1063/1.4890555, doi:10.1063/1.4998526} and references therein. It means that its dynamics is dictated by a Lagrangian density
\[
 \cL_{\rm EH}\colon J^2\tau_\Sigma\to\wedge^m (T^*M ),
\]
given essentially by the scalar curvature of the Levi-Civita connection associated to a metric. In terms of our nomenclature regarding variational problems, this variational problem is prescribed by the triple{\samepage
\[
 \big((\tau_{\Sigma})_2\colon J^2\tau_\Sigma\to M,\cL_{\rm EH},\mathcal{I}_{{\rm con}}^\Sigma\big),
\]
with $\mathcal{I}_{{\rm con}}^\Sigma$ referring to the exterior differential system on $J^2\tau_\Sigma$ associated to its contact structure.}

The main objective of this section is to provide a first order variational problem for Einstein--Hilbert gravity. Namely, we have proved in Section~\ref{sec:metr-cont-struct} that the quotient bundle given by Definition~\ref{def:reduc-bundle-palat} can be further reduced to $J^1\tau_\Sigma$; also, it was shown (see Proposition~\ref{prop:Metricity-Horizontal}) that contact structure on this jet space is related to metricity conditions via reduction map. Therefore, we will construct a variational problem on this bundle $J^1\tau_\Sigma$ and we will prove that this variational problem describes Einstein--Hilbert gravity. Later on, we will prove that this variational problem can be interpreted as Routh reduction of the variational problem for Palatini gravity as defined in Section~\ref{def:VariationalProblemPalatini}.
\begin{Definition}[Einstein--Hilbert Lagrangian form]
The \emph{Einstein--Hilbert Lagrangian form} is the unique $2$-horizontal $m$-form $\cL_{\rm EH}^{(1)}$ on $J^1\tau_\Sigma$ such that
\[
\big({\rm pr}_1\circ\overline{\Upsilon}_\omega\circ p_K^{J^1\tau}\big)^*\cL_{\rm EH}^{(1)}=i_0^*\cL_{\rm PG}.
\]
\end{Definition}
Recall also that in local terms, Palatini Lagrangian~\eqref{eq:PalatiniLagrangianInvariant} can be written as
\begin{gather}\label{eq:QuotientPalatiniLag}
 {\cL_{\rm PG}}=\epsilon_{\mu_1\cdots\mu_{n-2}\gamma\kappa}\sqrt{|\det{g}|}g^{\kappa\phi} {\rm d} x^{\mu_1}\wedge\!\cdots\!\wedge {\rm d} x^{\mu_{n-2}}\!\wedge\!\big({\rm d}\Gamma^\gamma_{\rho\phi}\wedge {\rm d} x^\rho+\Gamma^\sigma_{\delta\phi}\Gamma^\gamma_{\beta\sigma}{\rm d} x^\beta\wedge {\rm d} x^\delta\big).\!\!\!
\end{gather}
Using equation~\eqref{eq:GammaInTermsQuotientJetVariables}
we see that $\cL_{\rm EH}^{(1)}$ has the same form than $\cL_{\rm PG}$, but replacing $\Gamma_{\mu\nu}^\sigma$ by their expressions in terms of the jet variables $g_{\sigma}^{\mu\nu}$. Nevertheless, it is not yet Einstein--Hilbert Lagrangian density because is it neither a density nor a functional on $J^2\tau_\Sigma$.

We are pursuing here to establish the equivalence between the classical variational problem associated to the Lagrangian density $\cL_{\rm EH}\colon J^2{\tau}_\Sigma\rightarrow\wedge^m\left(T^*M\right)$ and the variational problem $\big(J^1{\tau}_\Sigma,\cL_{\rm EH}^{(1)},\mathcal{I}^\Sigma_{{\rm con}}\big)$. As we have said above, the main difference between these variational problems is related to the nature of the Lagrangian form. In the latter, this form is not a horizontal form on $J^1{\tau}_\Sigma$, whereas in the former case, the Lagrangian form on~$J^2{\tau}_\Sigma$ is specified through a~Lagrangian density, giving rise to a horizontal form on this jet bundle. The following lemma tells us how these Lagrangians are related. In order to formulate this result, the definition of the \emph{horizontalization operator} $h\colon \Omega^m\left(J^1{\tau}_\Sigma\right)\rightarrow\Omega^m\big(J^2{\tau}_\Sigma\big)$ should be kept in mind \cite{krupka15:_introd_global_variat_geomet_atlan}; in fact, defining the map
 \[
 h_{j_x^2s}:=T_xj^1s\circ T_{j^2_xs} ({\tau}_\Sigma )_2\colon \ T_{j^2_xs}\big(J^2\tau_\Sigma\big)\to T_{j^1_xs}\big(J^1\tau_\Sigma\big),
 \]
 where $(\tau_\Sigma)_2\colon J^2\tau_\Sigma\to M$ is the canonical projection of the $2$-jet bundle of the metric bundle onto $M$, we have
 \[
 h (\alpha)\big|_{j_x^2s}:= \alpha\big|_{j_x^1s}\circ h_{j_x^2s}
 \]
 for every $\alpha\in\Omega^m\big(J^1\tau_\Sigma\big)$ and $j_x^2s\in J^2\tau_\Sigma$.
\begin{Lemma}\label{lem:DensityAndLagFormEH}
 It results that
 \[
 \cL_{\rm EH}=h\big(\cL_{\rm EH}^{(1)}\big)
 \]
 for $h\colon \Omega^m\big(J^1{\tau}_\Sigma\big)\rightarrow\Omega^m\left(J^2{\tau}_\Sigma\right)$ the horizontalization operator.
\end{Lemma}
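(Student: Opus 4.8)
The plan is to prove the identity by a direct computation in adapted coordinates, using the explicit local description of $\cL_{\rm EH}^{(1)}$ already at hand together with the action of the horizontalization operator on basic forms. Recall that, locally, $\cL_{\rm EH}^{(1)}$ is obtained from the Palatini Lagrangian \eqref{eq:QuotientPalatiniLag} by replacing the Christoffel symbols $\Gamma^\mu_{\rho\sigma}$ with their Levi-Civita expressions \eqref{eq:GammaInTermsQuotientJetVariables} in terms of the first-order jet coordinates $g^{\mu\nu}_\sigma$; in particular $\cL_{\rm EH}^{(1)}$ is a genuine form on $J^1\tau_\Sigma$. First I would record how $h$ acts: from its definition $h_{j_x^2s}=T_xj^1s\circ T_{j_x^2s}(\tau_\Sigma)_2$ and the chain rule, $h$ is multiplicative on decomposable forms, leaves coefficient functions and the factors ${\rm d}x^\mu$ unchanged, and sends ${\rm d}g^{\mu\nu}\mapsto g^{\mu\nu}_\sigma\,{\rm d}x^\sigma$ and ${\rm d}g^{\mu\nu}_\sigma\mapsto g^{\mu\nu}_{\sigma\lambda}\,{\rm d}x^\lambda$. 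Consequently, for any function $f$ on $J^1\tau_\Sigma$ one has $h({\rm d}f)=(D_\lambda f)\,{\rm d}x^\lambda$, where $D_\lambda=\partial_{x^\lambda}+g^{\mu\nu}_\lambda\,\partial_{g^{\mu\nu}}+g^{\mu\nu}_{\lambda\sigma}\,\partial_{g^{\mu\nu}_\sigma}$ is the total derivative operator on $J^2\tau_\Sigma$.

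Next I would horizontalize the local expression for $\cL_{\rm EH}^{(1)}$. Since the coefficient $\sqrt{|\det g|}\,g^{\kappa\phi}$ depends only on the $g^{\mu\nu}$ and all the factors ${\rm d}x^{\mu_i}$, ${\rm d}x^\rho$, ${\rm d}x^\beta$, ${\rm d}x^\delta$ are already horizontal, the sole nontrivial action of $h$ is on the differential of the connection coefficients, giving $h({\rm d}\Gamma^\gamma_{\rho\phi})=(D_\lambda\Gamma^\gamma_{\rho\phi})\,{\rm d}x^\lambda$. Thus $h\big(\cL_{\rm EH}^{(1)}\big)$ is obtained from \eqref{eq:QuotientPalatiniLag} by the substitution ${\rm d}\Gamma^\gamma_{\rho\phi}\mapsto(D_\lambda\Gamma^\gamma_{\rho\phi})\,{\rm d}x^\lambda$, and it is manifestly a $2$-horizontal $m$-form on $J^2\tau_\Sigma$, i.e., a Lagrangian density.

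The crux is then to recognize the resulting density as the scalar curvature. The antisymmetry forced by the wedge ${\rm d}x^\lambda\wedge{\rm d}x^\rho$ assembles $D_\lambda\Gamma^\gamma_{\rho\phi}$ together with the quadratic term $\Gamma^\sigma_{\delta\phi}\Gamma^\gamma_{\beta\sigma}\,{\rm d}x^\beta\wedge{\rm d}x^\delta$ into the components of the curvature two-form $\Omega^\gamma_\phi={\rm d}\Gamma^\gamma_\phi+\Gamma^\gamma_\sigma\wedge\Gamma^\sigma_\phi$ of the Levi-Civita connection, now living on $J^2\tau_\Sigma$ because $D_\lambda\Gamma$ carries the second-order jet coordinates $g^{\mu\nu}_{\sigma\lambda}$. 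Contracting with $\epsilon_{\mu_1\cdots\mu_{n-2}\gamma\kappa}\sqrt{|\det g|}\,g^{\kappa\phi}$ and wedging with ${\rm d}x^{\mu_1}\wedge\cdots\wedge{\rm d}x^{\mu_{n-2}}$ yields $\sqrt{|\det g|}\,R\,{\rm d}x^1\wedge\cdots\wedge{\rm d}x^m$, which is by definition $\cL_{\rm EH}$. Since this identity holds in every adapted chart and both sides are globally defined horizontal $m$-forms on $J^2\tau_\Sigma$, it holds globally.

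I expect the main obstacle to be this last recognition step: tracking the index contractions carefully enough to see that the horizontalized first-order form is exactly the Ricci scalar density, with the correct normalization and sign. This is precisely the classical computation underlying the on-shell equivalence of the Palatini and Einstein--Hilbert actions; the only genuinely new point is that horizontalization promotes the first-order form to the second-order density by converting ${\rm d}\Gamma$ into the total derivative $D_\lambda\Gamma\,{\rm d}x^\lambda$, so that the second derivatives of the metric reappear linearly exactly as in $R$. A cleaner but less explicit alternative would be to invoke the universal property $(j^2s)^*h(\alpha)=(j^1s)^*\alpha$, valid for every holonomic section $s$: combined with the defining relation of $\cL_{\rm EH}^{(1)}$ and the classical fact that the Palatini action evaluated on the Levi-Civita lift of a metric reproduces the Einstein--Hilbert action, this forces $(j^2s)^*h\big(\cL_{\rm EH}^{(1)}\big)=(j^2s)^*\cL_{\rm EH}$ for all $s$, whence equality of the two densities follows because a horizontal $m$-form on $J^2\tau_\Sigma$ is determined by its pullbacks along all holonomic sections.
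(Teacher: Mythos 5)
Your proposal is correct and follows essentially the same route as the paper: the paper's proof likewise records the action of $h$ on ${\rm d}g^{\mu\nu}$ and ${\rm d}g^{\mu\nu}_\alpha$ in adapted coordinates and then appeals to a ``rather lengthy calculation'' combining the local expression \eqref{eq:QuotientPalatiniLag} with the Levi--Civita formula \eqref{eq:GammaInTermsQuotientJetVariables}. You actually supply more of the intermediate structure (the total-derivative substitution ${\rm d}\Gamma^\gamma_{\rho\phi}\mapsto(D_\lambda\Gamma^\gamma_{\rho\phi})\,{\rm d}x^\lambda$ and the reassembly into the curvature two-form) than the paper does, and your alternative argument via holonomic pullbacks is a valid, if not independent, repackaging of the same classical computation.
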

\begin{proof}
 In terms of the coordinates $\big(x^\mu,g^{\mu\nu},g^{\mu\nu}_{\alpha},g^{\mu\nu}_{\alpha\beta}\big)$ on $J^2{\tau}_\Sigma$, we have that
 \[
 h\big({\rm d}g^{\mu\nu}\big)=g^{\mu\nu}_\alpha {\rm d}x^\alpha,\qquad h\big({\rm d}g^{\mu\nu}_\alpha\big)=g^{\mu\nu}_{\alpha\beta}{\rm d}x^\beta.
 \]
 The result follows from a (rather lenghty) calculation, using expression~\eqref{eq:QuotientPalatiniLag} and the formula for the Christoffel symbols~\eqref{eq:GammaInTermsQuotientJetVariables}.
\end{proof}

The occurrence of the horizontalization operator in this lemma is crucial for our purposes, as the following proposition shows.

\begin{Theorem}\label{thm:IntegralsAndHorizontalization}
 Let $\pi\colon E\rightarrow M$ be a bundle on a $($compact$)$ manifold $M$ of dimension $m$. For any $\alpha\in\Omega^m\big(J^k\pi\big)$ and any section $s\in\Gamma\pi$, we have that
 \[
 \int_M\big(j^ks\big)^*\alpha=\int_M\big(j^{k+1}s\big)^*h (\alpha).
 \]
\end{Theorem}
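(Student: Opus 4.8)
The plan is to prove the stronger pointwise statement that $\big(j^{k+1}s\big)^*h(\alpha)=\big(j^ks\big)^*\alpha$ as $m$-forms on $M$; the asserted equality of integrals is then immediate, and in fact the integration is inessential. So the real content is a fibrewise identity between two pullbacks, which I would establish by unwinding the definition of horizontalization along the holonomic prolongation.

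First I would fix $x\in M$ and tangent vectors $v_1,\dots,v_m\in T_xM$, write $p:=j_x^{k+1}s$, and apply the defining formula for $h$ (the evident generalization to $J^k\to J^{k+1}$ of the operator recalled before Lemma~\ref{lem:DensityAndLagFormEH}). This gives
\[
\big(j^{k+1}s\big)^*h(\alpha)\big|_x(v_1,\dots,v_m)=\alpha\big|_{j_x^ks}\big(h_p\big(T_xj^{k+1}s\,v_1\big),\dots,h_p\big(T_xj^{k+1}s\,v_m\big)\big),
\]
where $h_p=T_xj^ks\circ T_p\pi_{k+1}$ and $\pi_{k+1}\colon J^{k+1}\pi\to M$ is the projection onto the base.

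The key step, and the whole point of the argument, is that $j^{k+1}s$ is by construction a section of $\pi_{k+1}$, so that $\pi_{k+1}\circ j^{k+1}s=\mathrm{id}_M$ and hence, differentiating at $x$, $T_p\pi_{k+1}\circ T_xj^{k+1}s=\mathrm{id}_{T_xM}$. Substituting this into $h_p$ collapses the composition to $h_p\circ T_xj^{k+1}s=T_xj^ks$, so the right-hand side above becomes $\alpha|_{j_x^ks}\big(T_xj^ks\,v_1,\dots,T_xj^ks\,v_m\big)=\big(j^ks\big)^*\alpha\big|_x(v_1,\dots,v_m)$. Since $x$ and the $v_i$ are arbitrary, the pointwise identity follows, and integrating over $M$ finishes the proof.

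I do not expect a serious obstacle; the difficulty is conceptual rather than computational. The only delicate point is the tacit claim, used merely to make $h(\alpha)$ a genuine form on $J^{k+1}\pi$, that $h_p$ depends on the point $p=j_x^{k+1}s$ alone (which determines $j_x^ks$ together with the $1$-jet of $j^ks$ at $x$, hence $T_xj^ks$) and not on the representative section $s$; this is standard, and I would simply cite \cite{krupka15:_introd_global_variat_geomet_atlan}. Notably, the cancellation driving the computation never invokes this well-definedness, resting solely on the section property $\pi_{k+1}\circ j^{k+1}s=\mathrm{id}_M$, so the argument is robust.
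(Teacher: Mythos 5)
Your proof is correct and is essentially the paper's own argument: the paper's entire proof is the identity $T_xj^ks=T_xj^ks\circ T_{j^{k+1}_xs}\pi_{k+1}\circ T_xj^{k+1}s$, which is precisely your collapse $h_p\circ T_xj^{k+1}s=T_xj^ks$ coming from the section property, yielding the pointwise equality $\big(j^{k+1}s\big)^*h(\alpha)=\big(j^ks\big)^*\alpha$. You have merely written out in full what the paper states in one line.
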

\begin{proof} It follows from the formula
 \[
 T_xj^ks=T_xj^ks\circ T_{j^{k+1}_xs}\pi_{k+1}\circ T_xj^{k+1}s,
 \]
 that holds for every $x\in M$and $s\in\Gamma\pi$.
\end{proof}

It is immediate to prove the desired equivalence.

\begin{Corollary} The classical variational problem specified by the Lagrangian density $\cL_{\rm EH}$ \linebreak on~$J^2{\tau}_\Sigma$ and the variational problem $\big(J^1{\tau}_\Sigma,\cL_{\rm EH}^{(1)},\mathcal{I}^\Sigma_{{\rm con}}\big)$ have the same set of extremals.
\end{Corollary}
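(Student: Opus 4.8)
The plan is to reduce the statement to the two preceding results, after observing that in both variational problems the admissible sections are precisely the holonomic ones, and are therefore parametrized by the sections $s\colon M\to\Sigma$ of the underlying metric bundle. Indeed, for the first order problem $\big(J^1\tau_\Sigma,\cL_{\rm EH}^{(1)},\mathcal{I}^\Sigma_{\rm con}\big)$ the sections integral for the contact structure $\mathcal{I}^\Sigma_{\rm con}$ are exactly the prolongations $j^1 s$; likewise, for the classical second order problem the sections integral for $\mathcal{I}^\Sigma_{\rm con}$ on $J^2\tau_\Sigma$ are the prolongations $j^2 s$. Thus in each case the action restricts to a functional of a single base section $s\in\Gamma\tau_\Sigma$, and the admissible variations are the prolonged variations induced by a $\delta s$ vanishing on the boundary.

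First I would compare the two action functionals. For any $s\in\Gamma\tau_\Sigma$, applying Theorem~\ref{thm:IntegralsAndHorizontalization} to $\alpha=\cL_{\rm EH}^{(1)}\in\Omega^m\big(J^1\tau_\Sigma\big)$ (so $k=1$) yields
\[
 \int_M\big(j^1 s\big)^*\cL_{\rm EH}^{(1)}=\int_M\big(j^2 s\big)^*h\big(\cL_{\rm EH}^{(1)}\big).
\]
By Lemma~\ref{lem:DensityAndLagFormEH} one has $h\big(\cL_{\rm EH}^{(1)}\big)=\cL_{\rm EH}$, whence
\[
 \int_M\big(j^1 s\big)^*\cL_{\rm EH}^{(1)}=\int_M\big(j^2 s\big)^*\cL_{\rm EH}.
\]
Therefore the two action functionals coincide as functions of the base section $s$.

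Finally, since both problems amount to finding the critical points of one and the same functional $s\mapsto\int_M\big(j^1 s\big)^*\cL_{\rm EH}^{(1)}=\int_M\big(j^2 s\big)^*\cL_{\rm EH}$ over the same space of sections $\Gamma\tau_\Sigma$, and with the same class of admissible variations, their extremals are identical. The only point requiring care---which I regard as the main (though mild) obstacle---is to verify that being an extremal in the Griffiths sense for $\big(J^1\tau_\Sigma,\cL_{\rm EH}^{(1)},\mathcal{I}^\Sigma_{\rm con}\big)$ really does coincide with criticality of the reduced functional under variations of $s$. This follows from the standard fact that the integral manifolds of the contact ideal are exactly the prolongations of sections of $\tau_\Sigma$: here the metricity-type subtleties present in the Palatini problem are absent, so the allowed infinitesimal variations are precisely the prolonged variations $j^1(\delta s)$ (respectively $j^2(\delta s)$) of a variation $\delta s$ of the base section. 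With this identification of competitor sections and variations in place, the equality of functionals established above gives the equivalence of extremals at once.
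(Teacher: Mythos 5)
Your proposal is correct and follows essentially the same route as the paper: both reduce the claim to the identity $\int_M(j^1s)^*\cL_{\rm EH}^{(1)}=\int_M(j^2s)^*h\big(\cL_{\rm EH}^{(1)}\big)=\int_M(j^2s)^*\cL_{\rm EH}$ via Theorem~\ref{thm:IntegralsAndHorizontalization} and Lemma~\ref{lem:DensityAndLagFormEH}, so that the two problems become one functional on $\Gamma\tau_\Sigma$. Your explicit remark that the integral sections of the contact ideal are exactly the holonomic prolongations is a useful clarification that the paper leaves implicit, but it does not change the argument.
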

\begin{proof}From Lemma \ref{lem:DensityAndLagFormEH} and using Theorem \ref{thm:IntegralsAndHorizontalization}, we see that $g\colon M\rightarrow\Sigma$ is an extremal for the action integral
 \[
 g\mapsto\int_M\big(j^2g\big)^*\cL_{\rm EH}
 \]
 if and only if it is an extremal for the action integral
 \[
 g\mapsto\int_M\big(j^1g\big)^*\cL_{\rm EH}^{(1)},
 \]
 as required.
\end{proof}

As usual \cite{GotayCartan}, the equations of motion of this variational problem can be lifted to a space of forms on $J^1\tau_\Sigma$. Let us define the affine subbundle
\[
 W_{\rm EH}:=\cL_{\rm EH}^{(1)}+{I}^\Sigma_{{\rm con},2}\subset\wedge^m\big(J^1\tau_\Sigma\big).
\]
Here, for every $j_x^1s\in J^1\tau_\Sigma$,
\begin{gather*}
 I^\Sigma_{{\rm con},2}\big|_{j_x^1s} =\mathcal{L}\big\{{\alpha}_{s(x)}\circ\big(T_{j_x^1s} (\tau_\Sigma )_{10}-T_xs\circ T_{j_x^1s} (\tau_\Sigma )_1\big)\wedge\beta\colon\\
 \hphantom{I^\Sigma_{{\rm con},2}\big|_{j_x^1s} =\mathcal{L}\big\{}{} {\alpha}_{s(x)}\in T_{s(x)}^*\Sigma,\beta\in\big(\Lambda^{m-1}_1\big(J^1\tau_\Sigma\big)\big)_{j_x^1s}\big\}
\end{gather*}
is the corresponding fiber for the contact subbundle on $J^1\tau_\Sigma$. The canonical map will be denoted by
\[
 \tau_{\rm EH}\colon \ W_{\rm EH}\to J^1\tau_\Sigma.
\]

We will indicate with $\lambda_{\rm EH}$ the pullback of the canonical $m$-form on $\wedge^m\big(J^1\tau_\Sigma\big)$ to $W_{\rm EH}$. Then we have a result analogous to Proposition~\ref{prop:FieldTheoryEqsWL} in the context of (first order) Einstein--Hilbert formulation.
\begin{Proposition}\label{prop:FieldTheoryEqsEinsteinHilbertCase}
A section $s\colon U\subset M\rightarrow J^1\tau_\Sigma$ is a critical holonomic section for the variational problem $\big(J^1{\tau}_\Sigma,\cL_{\rm EH}^{(1)},\mathcal{I}^\Sigma_{{\rm con}}\big)$ if and only if there exists a section $\Gamma\colon U\subset M\rightarrow W_{\rm EH}$ such that
\begin{enumerate}\itemsep=0pt
\item[$1)$] $\Gamma$ covers $s$, i.e., $\tau_{\rm EH}\circ\Gamma=s$, and
\item[$2)$] $\Gamma^* (X\lrcorner {\rm d}\lambda_{\rm EH} )=0$, for all $X\in\mathfrak{X}^{V ( (\tau_\Sigma\ )_1\circ\tau_{\rm EH} )}(W_{\rm EH})$.
\end{enumerate}
\end{Proposition}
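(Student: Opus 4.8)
The plan is to recognize $\big(W_{\rm EH}\to M,\lambda_{\rm EH},0\big)$ as exactly the \emph{canonical Lepage-equivalent variational problem}, in the sense recalled in Section~\ref{sec:vari-probl-unif}, associated to the Griffiths variational problem $\big(J^1\tau_\Sigma,\cL_{\rm EH}^{(1)},\mathcal{I}^\Sigma_{{\rm con}}\big)$. Indeed, the contact ideal $\mathcal{I}^\Sigma_{{\rm con}}$ is generated by the sections of a subbundle of $\wedge^\bullet\big(T^*J^1\tau_\Sigma\big)$, its $m$-form part is precisely $I^\Sigma_{{\rm con},2}$, and the affine subbundle $W_{\rm EH}=\cL_{\rm EH}^{(1)}+I^\Sigma_{{\rm con},2}$ together with the tautological $m$-form $\lambda_{\rm EH}$ fit the template $W'|_w=\{\lambda(w)+\beta\colon\beta\in I\cap\wedge^m(T^*_wW)\}$ verbatim, with $q=\tau_{\rm EH}$. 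The statement is then the instance of Gotay's general equivalence theorem \cite{GotayCartan} for this datum, and its proof runs entirely parallel to that of Proposition~\ref{prop:FieldTheoryEqsWL}; I would present it by adapting the latter.

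Concretely, the lever is the structure of $\lambda_{\rm EH}$ as the restriction of the canonical $m$-form. For a section $\Gamma$ covering $s=\tau_{\rm EH}\circ\Gamma$ one has $\Gamma^*\lambda_{\rm EH}=s^*\cL_{\rm EH}^{(1)}+s^*\beta_\Gamma$ for the contact $m$-form $\beta_\Gamma\in I^\Sigma_{{\rm con},2}$ selected by $\Gamma$, so the two actions agree on holonomic sections. I would then split the test fields $X\in\mathfrak{X}^{V((\tau_\Sigma)_1\circ\tau_{\rm EH})}(W_{\rm EH})$ into (i) those vertical for $\tau_{\rm EH}$, tangent to the ``momentum'' fibres, and (ii) those whose image under $T\tau_{\rm EH}$ is a genuine $(\tau_\Sigma)_1$-vertical field on $J^1\tau_\Sigma$. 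Since $X\lrcorner\lambda_{\rm EH}=0$ for class~(i), contracting ${\rm d}\lambda_{\rm EH}$ with these fields and pulling back by $\Gamma$ reproduces exactly the conditions $s^*\gamma=0$ for the contact generators $\gamma$, i.e.\ forces $s$ to be integral for $\mathcal{I}^\Sigma_{{\rm con}}$, hence holonomic; contracting with class~(ii) yields the Euler--Lagrange equations of the Griffiths problem for $\cL_{\rm EH}^{(1)}$. This delivers the implication $(\Leftarrow)$: if $\Gamma$ satisfies (1)--(2), then $s$ is holonomic and critical.

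For the converse $(\Rightarrow)$, given a critical holonomic $s$ I would construct the lift $\Gamma$ by fixing the multiplier $\beta_\Gamma\in I^\Sigma_{{\rm con},2}$ from the class~(ii) equations, exactly as the momenta $p^\sigma_{\mu\nu}$ are determined in the local version of Proposition~\ref{prop:FieldTheoryEqsWL}; the affine structure of $W_{\rm EH}$ guarantees that such a $\beta_\Gamma$ exists, and the resulting $\Gamma$ then automatically satisfies the class~(i) equations because $s$ is already holonomic. \textbf{The main obstacle} is precisely this construction of the lift: one must check that the Legendre-type relation coming from the vertical derivative of $\cL_{\rm EH}^{(1)}$ can be solved for $\beta_\Gamma$ \emph{within} $I^\Sigma_{{\rm con},2}$, and that the class~(i) variations recover the full contact ideal rather than a proper subideal, so that no spurious non-holonomic integral sections arise. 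Both points are guaranteed by Gotay's analysis, but verifying them in the present coordinates---using expression~\eqref{eq:QuotientPalatiniLag} together with the Christoffel formula~\eqref{eq:GammaInTermsQuotientJetVariables}---is where the only genuine computation lies; everything else is bookkeeping of the tautological form.
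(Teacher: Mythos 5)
Your proposal is correct and follows essentially the same route as the paper, which offers no separate proof of Proposition~\ref{prop:FieldTheoryEqsEinsteinHilbertCase} but simply declares it ``analogous to Proposition~\ref{prop:FieldTheoryEqsWL}'', i.e., an instance of Gotay's canonical Lepage-equivalent construction applied to $\big(J^1\tau_\Sigma,\cL_{\rm EH}^{(1)},\mathcal{I}^\Sigma_{{\rm con}}\big)$. Your identification of $W_{\rm EH}=\cL_{\rm EH}^{(1)}+I^\Sigma_{{\rm con},2}$ with the template $W'|_w=\{\lambda(w)+\beta\}$, and the splitting of the vertical test fields into $\tau_{\rm EH}$-vertical directions (recovering holonomy) and directions projecting onto $(\tau_\Sigma)_1$-vertical fields (recovering the Euler--Lagrange equations), is exactly the argument the paper is implicitly invoking.
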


\begin{Remark} This proposition provides us with a unified formalism for Einstein--Hilbert gravity, based on the first order formulation. For the corresponding formalism associated to the second order formulation, see~\cite{doi:10.1063/1.4998526}.
\end{Remark}

\section{Contact bundle decomposition for Palatini gravity}\label{sec:cont-bundle-decomp}

We return here to the discussion initiated in Section~\ref{sec:reduc-bundle-palat}, regarding the splitting of the quotient~$W_{\rm PG}/K$; it will be shown that the connection $\omega_K$ is useful also for the identification of elements in~$W_{\rm PG}/K$ that can be seen as elements of an space of forms. Intuitively, it means that the associated degrees of freedom can be interpreted as multimomenta.

In order to perform this identification, we will recall some general facts regarding the decomposition induced for the connection $\omega_K$ \cite{Capriotti2019} on the bundle of forms $W_{\rm PG}$ defined in equation~\eqref{eq:BundleOfFormsOnT0}. The contact structure on $J^1\tau$ gives rise to the contact subbundle on $\cT_0$ given by
\begin{gather}
 I^m_{{\rm con},2}\big|_{j_x^1s} =\mathcal{L}\big\{{\alpha}_{s(x)}\circ(T_{j_x^1s}\tau_{10}'-T_xs\circ T_{j_x^1s}\tau_1')\wedge\beta\colon \nonumber\\
 \hphantom{I^m_{{\rm con},2}\big|_{j_x^1s} =\mathcal{L}\big\{}{} {\alpha}_{s(x)}\in T_{s(x)}^*(LM),\beta\in\big(\Lambda^{m-1}_1 (\cT_0 )\big)_{j_x^1s}\big\},\label{eq:ContactFields}
\end{gather}
where $\mathcal{L}$ indicates linear closure. There is an splitting of $I^m_{{\rm con},2}$ induced by the choice of a~connection on the principal bundle $p^{LM}_K\colon LM\to \Sigma$. Its construction proceeds as follows. We denote by $\omega_K\in\Omega^1(LM,\kf)$ the chosen connection and consider the following splitting of the cotangent bundle:
\[
T^*(LM)=\big(p^{LM}_K\big)^* (T^*\Sigma ) \oplus (LM\times \kf^*).
\]
The identification is obtained as follows:
\begin{align*}
\big(p^{LM}_K\big)^* (T^*\Sigma ) \oplus (LM\times \kf^*)&\to T^*(LM),\\
(e,\widehat{\alpha}_{[u]},\sigma)&\mapsto \alpha_u=\widehat{\alpha}_{[u]}\circ T_up_K^{LM}+\langle\sigma,\omega_K(\cdot)\rangle.
\end{align*}
Accordingly, we have an splitting of contact bundle~\eqref{eq:ContactFields}
\begin{gather*}
I^m_{{\rm con},2}=\widetilde{I^m_{{\rm con},2}}\oplus I^m_{\kf^*,2},
\end{gather*}
with
\begin{gather*}
 \widetilde{I^m_{{\rm con},2}}\big|_{j_x^1s} =\mathcal{L}\big\{\widehat{\alpha}_{[s(x)]}\circ T_{s(x)}p_K^{LM}\circ\big(T_{j_x^1s}\tau_{10}'-T_xs\circ T_{j_x^1s}\tau_1'\big)\wedge\beta\colon\\
 \hphantom{\widetilde{I^m_{{\rm con},2}}\big|_{j_x^1s} =\mathcal{L}\big\{}{}
 \widehat{\alpha}_{[s(x)]}\in T_{[s(x)]}^*\Sigma,\beta\in\big(\Lambda^{m-1}_1\cT_0\big)_{j_x^1s}\big\},\\
I_{\kf^*,2}^m\big|_{j_x^1s} =\big\{\big\langle\sigma\stackrel{\wedge}{,}\omega_K\circ\big(T_{j_x^1s}\tau_{10}'-T_xs\circ T_{j_x^1s}\tau_1'\big)\big\rangle\colon \sigma\in\big(\Lambda^{m-1}_1\cT_0\otimes\kf^*\big)_{j_x^1s}\big\}.
\end{gather*}
The symbol $\langle\cdot\stackrel{\wedge}{,} \cdot\rangle$ denotes the natural contraction, defined as follows: For elements of the form $\alpha_1\otimes\nu$, $\alpha_2\otimes \eta$ with $\nu,\eta\in\kf$ and $\alpha_1,\alpha_2$ forms, we have $\langle\alpha_1\otimes\nu \stackrel{\wedge}{,} \alpha_2\otimes\eta \rangle=\langle\nu,\eta\rangle \alpha_1\wedge\alpha_2$. For a~general element in the linear closure, the definition extends linearly.

We can split our metricity subbundle $I^m_{\rm PG}$ using the inclusion
\[
 I^m_{\rm PG}\subset I^m_{{\rm con},2},
\]
namely
\[
 I^m_{\rm PG}=\big(I^m_{\rm PG}\cap\widetilde{I^m_{{\rm con},2}}\big)\oplus\big(I^m_{\rm PG}\cap I^m_{\kf^*,2}\big).
\]
But we have the following fact.
\begin{Lemma}\label{lem:cont-bundle-decomp}
 For every $j_x^1s\in\cT_0$
 \[
 I^m_{\rm PG}\subset\widetilde{I^m_{{\rm con},2}}.
 \]
\end{Lemma}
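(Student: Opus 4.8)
The plan is to exploit the inclusion $I^m_{\rm PG}\subset I^m_{{\rm con},2}$ noted just above, together with the $\omega_K$-induced direct sum $I^m_{{\rm con},2}=\widetilde{I^m_{{\rm con},2}}\oplus I^m_{\kf^*,2}$, reducing the lemma to the claim that every generator of $I^m_{\rm PG}$ has trivial component in the second summand $I^m_{\kf^*,2}$. First I would rewrite a generic generator of the metricity subbundle: since the coefficients $\beta_{ip}$ are symmetric, symmetrizing the $\mathfrak{gl}(m)$-indices yields
\[
 \eta^{ik}\beta_{ip}\wedge\omega^p_k=\tfrac12\,\beta_{ip}\wedge\big(\eta^{ik}\omega^p_k+\eta^{pk}\omega^i_k\big),
\]
so that the only $1$-forms entering a generator of $I^m_{\rm PG}$ are the symmetric combinations $\eta^{ik}\omega^p_k+\eta^{pk}\omega^i_k$, that is, the metricity forms appearing on the left-hand side of~\eqref{eq:GammaInLocalTerms}.

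The heart of the argument is then Proposition~\ref{prop:Metricity-Horizontal}, which I would invoke to show that these metricity $1$-forms carry no $\kf^*$-component. Concretely, that proposition (combined with \eqref{eq:GammaInLocalTerms} and \eqref{eq:GammaInTermsQuotient}) gives
\[
 \eta^{ik}\omega^p_k+\eta^{pk}\omega^i_k=e^i_\mu e^p_\nu\big({\rm d}g^{\mu\nu}-g^{\mu\nu}_\sigma {\rm d}x^\sigma\big),
\]
exhibiting each metricity form as $e^i_\mu e^p_\nu$ times the pullback along $q:={\rm pr}_2\circ{\rm pr}_\Sigma\circ\Upsilon_\omega$ of the components $\overline\omega^{\mu\nu}$ of the contact form $\overline\omega$ on $J^1\tau_\Sigma$. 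Using the commutativities $(\tau_\Sigma)_{10}\circ q=p_K^{LM}\circ\tau_{10}'$ and $(\tau_\Sigma)_1\circ q=\tau_1'$ (which hold on $\cT_0$ and follow from the definition of $\Upsilon_\omega$, together with $Tp_K^{LM}\circ Ts=T(p_K^{LM}\circ s)$), the pullback $q^*\overline\omega^{\mu\nu}$ takes exactly the form $\widehat\alpha\circ T p_K^{LM}\circ(T\tau_{10}'-Ts\circ T\tau_1')$ with $\widehat\alpha={\rm d}g^{\mu\nu}\in T^*\Sigma$. In particular it factors through $Tp_K^{LM}$, hence lies in the $\big(p^{LM}_K\big)^*(T^*\Sigma)$-summand of the cotangent splitting $T^*(LM)=\big(p^{LM}_K\big)^*(T^*\Sigma)\oplus(LM\times\kf^*)$ and has vanishing $\kf^*$-part.

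It then remains to wedge. Writing $\widehat\beta^{\mu\nu}:=\tfrac12\,e^i_\mu e^p_\nu\beta_{ip}$, which is still a $\tau_1'$-horizontal $(m-1)$-form because the $e^i_\mu$ are functions and $\beta_{ip}\in\wedge^{m-1}_1(T^*\cT_0)$, the generator becomes $\widehat\beta^{\mu\nu}\wedge q^*\overline\omega^{\mu\nu}$, a sum of terms precisely of the shape defining $\widetilde{I^m_{{\rm con},2}}$; since that summand is taken as a linear closure, this sum again lies in it. This gives $I^m_{\rm PG}\subset\widetilde{I^m_{{\rm con},2}}$ fibrewise at every $j_x^1s\in\cT_0$, as required.

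I expect the main obstacle to be the intrinsic matching in the second step: verifying that a pullback of the contact generator of $J^1\tau_\Sigma$ along the quotient map genuinely has the defining shape $\widehat\alpha\circ Tp_K^{LM}\circ(T\tau_{10}'-Ts\circ T\tau_1')$ of the $\Sigma$-summand, rather than acquiring a spurious $\kf^*$-term. This amounts to checking that the contact operator $T\tau_{10}'-Ts\circ T\tau_1'$ intertwines correctly with $p_K^{LM}$ and $q$, which is exactly the compatibility recorded by the identity $Tp_K^{LM}\circ\omega_\pf=({\rm pr}_2\circ{\rm pr}_\Sigma\circ\Upsilon_\omega)^*\overline\omega$ in Proposition~\ref{prop:Metricity-Horizontal}; once that identity is in hand the obstacle dissolves and the remaining manipulations are routine.
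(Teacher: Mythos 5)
Your proposal is correct and follows essentially the same route as the paper: the paper's proof computes $T_{s(x)}p_K^{LM}\circ\big(T_{j_x^1s}\tau_{10}'-T_xs\circ T_{j_x^1s}\tau_1'\big)$ in local coordinates and recognizes the resulting $1$-forms ${\rm d}g^{\rho\sigma}+\big(g^{\rho\beta}\Gamma^\sigma_{\beta\alpha}+g^{\sigma\beta}\Gamma^\rho_{\beta\alpha}\big){\rm d}x^\alpha$ as the generators of $I^m_{\rm PG}$ via equation~\eqref{eq:GammaInLocalTerms}, which is exactly the identity you obtain by citing Proposition~\ref{prop:Metricity-Horizontal} instead of redoing the computation. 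The only presentational difference is that you make the symmetrization of the $\beta_{ip}$ and the vanishing of the $\kf^*$-component explicit, while the paper works in the other direction (from the projected contact operator to the metricity generators); the content is the same.
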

\begin{proof}Let us work in the coordinates considered above; therefore, we have equation~\eqref{eq:E_Projection} for the projector $Tp_K^{LM}$ and also
 \[
 T_{j_x^1s}\tau_{10}'-T_xs\circ T_{j_x^1s}\tau_1'=\frac{\partial}{\partial e^\mu_k}\otimes\big({\rm d}e^\mu_k-e^\mu_{k\alpha}{\rm d}x^\alpha\big).
 \]
 Then
 \begin{gather*}
 T_{s(x)}p_K^{LM}\circ \big(T_{j_x^1s}\tau_{10}'-T_xs\circ T_{j_x^1s}\tau_1'\big) =T_{s(x)}p_K^{LM}\left(\frac{\partial}{\partial e^\mu_k}\right)\otimes\big({\rm d}e^\mu_k-e^\mu_{k\alpha}{\rm d}x^\alpha\big)\\
 \qquad{} =\frac{\partial}{\partial g^{\rho\sigma}}\otimes\big[\eta^{kq}\big(e_q^\rho {\rm d}e_k^\sigma+e_q^\sigma {\rm d}e_k^\rho\big)-\eta^{kq}\big(e_q^\rho e_{k\alpha}^\sigma+e_q^\sigma e_{k\alpha}^\rho\big){\rm d}x^\alpha\big]\\
 \qquad{} =\frac{\partial}{\partial g^{\rho\sigma}}\otimes\big[{\rm d}g^{\rho\sigma}-\big(g^{\rho\beta}e_\beta^ke^\sigma_{k\alpha} +g^{\sigma\beta}e_\beta^ke^\rho_{k\alpha}\big){\rm d}x^\alpha\big]\\
 \qquad{} =\frac{\partial}{\partial g^{\rho\sigma}}\otimes\big[{\rm d}g^{\rho\sigma}+\big(g^{\rho\beta}\Gamma^\sigma_{\beta\alpha} +g^{\sigma\beta}\Gamma^\rho_{\beta\alpha}\big){\rm d}x^\alpha\big]
 \end{gather*}
 that will define to a set of generators of the bundle $I^m_{\rm PG}$ (see equation~\eqref{eq:GammaInLocalTerms}).
\end{proof}

This result is compatible with the fact that the whole subbundle $W_{\rm PG}$ is in the zero level set for the momentum map. We will return to that below.

\section{First order Einstein--Hilbert Lagrangian as Routhian}\label{sec:routhian}

As we mentioned in the introductory sections, a crucial role in Routh reduction is played by the \emph{Routhian}, which replaces the Lagrangian in determining the dynamics of the mechanical system. This replacement is unavoidable, because the dynamics for the Routh reduced problem should happen on a level set for the momentum map of the theory
\[
 J\colon \ TQ\rightarrow\mathfrak{g}^*,
\]
and the constraints imposed by this requirement must be included in the Lagrangian.

To fix ideas, let us briefly describe how this construction proceeds in the case of a classical variational problem
\[
 \big({\rm pr}_1\colon \mathbb{R}\times TQ\to\mathbb{R},L{\rm d}t,\langle {\rm d}q-\dot{q}{\rm d}t\rangle \big)
\]
for a mechanical system with configuration space~$Q$ and a symmetry described by a Lie group $G$ acting freely on $Q$. The idea is to incorporate the constraints imposed by the momentum map into the Lagrangian through a family of Lagrange multipliers; this prescription tells us that the Routhian becomes
\[
 R_\mu:=L-\langle \mu,\omega_Q\rangle
\]
where $\mu\in\mathfrak{g}^*$ is the level chosen for the momentum map and~$\omega_Q$ is a connection form for the principal bundle $p_G^Q\colon Q\to Q/G$. A similar construction can be done when working for classical variational problems describing a first order field theory~\cite{Capriotti2019}.

Now, let us try to particularize these considerations for the case of Palatini gravity. The Routhian form is expected to coincide with the Lagrangian $\cL_{\rm PG}$ because~$J\equiv0$. Nevertheless, as we have stressed above, this result would be valid if we were working within the range of~\cite{Capriotti2019}; on the contrary, the variational problem describing Palatini gravity is not reached by these results, and so this should be properly verified in this particular case.

First, we write $\overline{p}\colon {\rm Lin}{\big(\tau_\Sigma^*TM,\widetilde{\kf}\big)}\to\Sigma$ for the obvious projection. In principle, the bundle ${\rm Lin}{\big(\tau_\Sigma^*TM,\widetilde{\kf}\big)}$ would be the field bundle for the reduced system; nevertheless, we will show in Lemma~\ref{lem:routhian_pr1_horizontal} that the Routhian, namely, the Lagrangian form for this reduced system, will be horizontal for the projection onto the jet space of the base bundle~$\Sigma$.

In particular, one can consider the map:
\begin{align*}
 q\colon \ J^1\big(\tau_\Sigma\circ\overline{p}\big)&\longto J^1\tau_\Sigma\times {\rm Lin}{\big(\tau_\Sigma^*TM,\widetilde{\kf}\big)},\\
 j_x^1\sigma&\longmapsto\big(j_x^1\big(\overline{p}\circ\sigma\big),\sigma(x)\big)
\end{align*}
projecting onto the quotient bundle for Palatini gravity. So, we can formulate the reduced system as a first order field theory by taking the bundle $\tau_\Sigma\circ\overline{p}\colon {\rm Lin}{\big(\tau_\Sigma^*TM,\widetilde{\kf}\big)}\to M$ as the basic field bundle. Nevertheless, there are some identifications that will permit us to simplify this basic bundle further.

In order to proceed, let use the connection $\omega_K$ to define the maps fitting in the following diagram:
\begin{equation*}
 \begin{tikzcd}[ampersand replacement=\&, column sep=1.2cm, row
 sep=.9cm]
 {\cT_0} \arrow[swap]{d}{p_{K}^{J^1\tau}\big|_{\cT_0}} \arrow{r}{f_\omega} \&
 {{\rm Lin}{\big(\tau_\Sigma^*TM,\widetilde{\kf}\big)}} \&
 {J^1\big(\tau_\Sigma\circ\overline{p}\big)}
 \arrow[swap]{l}{\big(\tau_\Sigma\circ\overline{p}\big)_{10}}
 \arrow{d}{q}
 \\
 {\cT_0/K}
 \arrow[swap]{rr}{g_\omega:= \overline{\Upsilon}_\omega |_{\cT_0/K}}
 \&
 \&
 {J^1\tau_\Sigma\times{\rm Lin}{\big(\tau_\Sigma^*TM,\widetilde{\kf}\big)}}
 \end{tikzcd}
\end{equation*}
The definitions are as follows:
\begin{align*}
f_\omega\colon \ \cT_0& \longto {\rm Lin}{\big(\tau_\Sigma^*TM,\widetilde{\kf}\big)},\\
j_x^1s&\longmapsto [s(x),\omega_K\circ T_xs ]_K,\\
 g_\omega\colon \ \cT_0/{K}& \longto J^1\tau_\Sigma\times{\rm Lin}{\big(\tau_\Sigma^*TM,\widetilde{\kf}\big)},\\
\big[j_x^1s\big]_{K} &\longmapsto\big(j_x^1\big(p_K^{LM}\circ s\big), [s(x),\omega_K\circ T_xs ]_K\big).
\end{align*}

The map $g_\omega$ is the identification from Corollary~\ref{cor:identification}. Since the Lagrangian form $\cL_{\rm PG}$ is basic for the projection $p_{\cT_0}^K\colon \cT_0\to\cT_0/K$, it defines a reduced form on $\cT_0/K$ which can be seen as a~form on $J^1\tau_\Sigma\times\mathop{\rm Lin}{\big(\tau_\Sigma^*TM,\widetilde{\kf}\big)}$. We will denote it by $\overline{\cL}_{\rm PG}$:
\[
 \big(g_\omega\circ p_{K}^{J^1\tau}\big)^*\overline{\cL}_{\rm PG}=\cL_{\rm PG},\qquad \overline{\cL}_{\rm PG} \in \Omega_2^m\big(J^1\tau_\Sigma\times\mathop{\rm Lin}{\big(\tau_\Sigma^*TM,\widetilde{\kf}\big)}\big).
\]
\begin{Definition}
 The $m$-form $\overline{\cL}_{\rm PG}\in\Omega^m\big(J^1\tau_\Sigma\times\mathop{\rm Lin}{\big(\tau_\Sigma^*TM,\widetilde{\kf}\big)}\big)$ is the \emph{Routhian} for the variational problem $\big(\cT_0,\cL_{\rm PG},\mathcal{I}_{\rm PG}^m\big)$.
\end{Definition}

Then, we are ready to prove a characteristic property for the Routhian associated to the reduction of Palatini gravity.
\begin{Lemma}\label{lem:routhian_pr1_horizontal}
 The Routhian $\overline{\cL}_{\rm PG}$ is ${\rm pr}_1$-horizontal, where
 \[
 {\rm pr}_1\colon \ J^1\tau_\Sigma\times\mathop{\rm Lin}{\big(\tau_\Sigma^*TM,\widetilde{\kf}\big)}\longto J^1\tau_\Sigma
 \]
 is the projection onto the first factor of the fibred product.
\end{Lemma}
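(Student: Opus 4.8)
The plan is to verify the assertion in the adapted local coordinates, reducing it to a single interior-product computation. On the fibred product $J^1\tau_\Sigma\times_\Sigma\Lin\big(\tau_\Sigma^*TM,\widetilde{\kf}\big)$ I would use the coordinates $\big(x^\mu,g^{\mu\nu},g^{\mu\nu}_\sigma,A^\mu_{\rho\sigma}\big)$ introduced in Section~\ref{sec:local-coord-expr} and Corollary~\ref{cor:Coordinates-Lift}, where the fibre variables $A^\mu_{\rho\sigma}$ are subject to the $\widetilde{\kf}$-constraint $A_{\mu\rho\sigma}+A_{\rho\mu\sigma}=0$ with $A_{\mu\rho\sigma}:=g_{\mu\lambda}A^\lambda_{\rho\sigma}$, exactly as in the proof of Proposition~\ref{prop:RestrictionToI0}. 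In these coordinates the ${\rm pr}_1$-vertical vector fields are precisely those spanned by the $\partial/\partial A^\mu_{\rho\sigma}$, so the claim is equivalent to
\[
 \frac{\partial}{\partial A^\mu_{\rho\sigma}}\lrcorner\overline{\cL}_{\rm PG}=0
\]
for every admissible index combination.

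First I would make the dependence of the Routhian on the fibre variables explicit. By the defining relation $\big(g_\omega\circ p_K^{J^1\tau}\big)^*\overline{\cL}_{\rm PG}=\cL_{\rm PG}$, together with the local expression \eqref{eq:QuotientPalatiniLag} for $\cL_{\rm PG}$ and the formula \eqref{eq:GammaInTermsQuotient} for the identification $G_\omega$, the form $\overline{\cL}_{\rm PG}$ is obtained from the Palatini $m$-form by substituting $\Gamma^\mu_{\rho\sigma}=\Gamma^\mu_{\rho\sigma}(g,g_\sigma,A)$, where $A$ enters \emph{affinely} and with unit coefficient. Inspecting \eqref{eq:QuotientPalatiniLag}, the connection appears in two ways: algebraically, through the quadratic term $\Gamma^\sigma_{\delta\phi}\Gamma^\gamma_{\beta\sigma}\,{\rm d}x^\beta\wedge {\rm d}x^\delta$, and through the exact term ${\rm d}\Gamma^\gamma_{\rho\phi}\wedge {\rm d}x^\rho$. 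Since $\partial/\partial A^\mu_{\rho\sigma}$ annihilates ${\rm d}x^\nu$, ${\rm d}g^{\mu\nu}$ and ${\rm d}g^{\mu\nu}_\sigma$, the quadratic term and the prefactor $\sqrt{|\det g|}\,g^{\kappa\phi}$ contribute nothing to the contraction; thus the computation collapses to the exact term, and the problem reduces to showing that the genuine ${\rm d}A$-part of $\eta^{ip}\theta_{ik}\wedge\Omega^k_p$ is annihilated by $\partial/\partial A$.

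The decisive step, and the one I expect to be the main obstacle, is the vanishing of this ${\rm d}A$-part. The mechanism I would rely on is structural: because $\theta_{ik}$ is antisymmetric in $(i,k)$, the contraction $\eta^{ip}\theta_{ik}\wedge\Omega^k_p$ retains only the $\kf$-antisymmetric part of the curvature matrix, which is consistent with $\cL_{\rm PG}$ being $K$-basic (recall $\xi_{J^1\tau}\lrcorner\cL_{\rm PG}=0$ for $\xi\in\kf$, as computed after Lemma~\ref{lem:mmap}). On the other hand, the fibre variable takes values in $\widetilde{\kf}$, i.e.\ the lowered tensor $A_{\mu\rho\sigma}$ is antisymmetric in $(\mu,\rho)$. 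The plan is therefore to write out $\partial/\partial A^\mu_{\rho\sigma}\lrcorner\big({\rm d}\Gamma^\gamma_{\rho\phi}\wedge {\rm d}x^\rho\big)$ after lowering indices with $g$ and to check that the symmetric metric contraction inherited from \eqref{eq:QuotientPalatiniLag} is paired against the antisymmetric index pair carried by $A$, forcing the expression to cancel; the conceptual reason this is forced is that, on the relevant locus $\cT_0''$, the connection part is the Levi-Civita connection, hence completely determined by the metric jet (Proposition~\ref{prop:RestrictionLeviCivita} and Corollary~\ref{cor:IsomorphismI0PrimePrime}), leaving no honest $\Lin$-velocity in the reduced dynamics. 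Carrying out the bookkeeping of the $\epsilon$-tensor, of the raising and lowering with $g$, and of the $\widetilde{\kf}$-antisymmetry of $A$ is the technical heart of the argument; the remaining manipulations — collecting the horizontal ${\rm d}g$, ${\rm d}g_\sigma$ and ${\rm d}x$ contributions and discarding them — are routine. Once ${\rm pr}_1$-horizontality is established, the immediate payoff is that the pullback of the Routhian along $q$ contains no derivatives of the $\widetilde{\kf}$-valued field, so that $\Lin\big(\tau_\Sigma^*TM,\widetilde{\kf}\big)$ enters only algebraically and the reduced theory is effectively a first-order Lagrangian on the metric jet bundle $J^1\tau_\Sigma$, which is precisely the identification with $\cL_{\rm EH}^{(1)}$ exploited in the reduction theorem.
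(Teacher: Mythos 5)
Your plan has a genuine gap, and it sits exactly at the step you defer as ``the technical heart''. You reduce the Lemma to the fibrewise contraction $\partial/\partial A^\mu_{\rho\sigma}\lrcorner\overline{\cL}_{\rm PG}=0$ on the full product, and you predict that the ${\rm d}A$-part of the ${\rm d}\Gamma$-term cancels against the $\widetilde{\kf}$-antisymmetry of $A$. It does not. Substituting \eqref{eq:GammaInTermsQuotient} into \eqref{eq:QuotientPalatiniLag} and using the identity $\epsilon_{\mu_1\cdots\mu_{n-2}\gamma\kappa}\,{\rm d}x^{\mu_1}\wedge\cdots\wedge{\rm d}x^{\mu_{n-2}}\wedge{\rm d}x^\rho\propto\delta^\rho_\gamma\eta_\kappa-\delta^\rho_\kappa\eta_\gamma$, the contraction along a fibre direction $\delta A$ leaves (up to a factor) $\sqrt{|\det g|}\,\big(g^{\kappa\phi}\delta A^\gamma_{\gamma\phi}-g^{\gamma\phi}\delta A^\kappa_{\gamma\phi}\big)\eta_\kappa$. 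The constraint $A_{\mu\nu\sigma}+A_{\nu\mu\sigma}=0$ (antisymmetry in the \emph{first two} indices of $g_{\mu\lambda}A^\lambda_{\nu\sigma}$) kills only the trace term $\delta A^\gamma_{\gamma\phi}$; the second term contracts the \emph{second and third} indices with $g^{\gamma\phi}$ and survives (e.g., $\delta A_{121}=-\delta A_{211}=1$, all else zero, gives a nonzero result already for $m=2$). So the form obtained by naive substitution on $J^1\tau_\Sigma\times_\Sigma\Lin\big(\tau_\Sigma^*TM,\widetilde{\kf}\big)$ is not even semi-basic, and in addition the quadratic term $\Gamma^\sigma_{\delta\phi}\Gamma^\gamma_{\beta\sigma}$ retains $A$-dependence in its coefficients, which an interior-product computation cannot see at all — and what the Lemma actually asserts (and what is used afterwards, via ${\rm pr}_1^*\cL_{\rm EH}^{(1)}=\overline{\cL}_{\rm PG}$) is the stronger, basic property.

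The point you relegate to ``conceptual motivation'' is in fact the entire proof. The Routhian is only canonically defined on $\cT_0''=\overline{\Upsilon}_\omega\big(p_K^{J^1\tau}(\cT_0)\big)$, the image of $\cT_0/K$, by the relation $\big(g_\omega\circ p_K^{J^1\tau}\big)^*\overline{\cL}_{\rm PG}=\cL_{\rm PG}$; the directions $\partial/\partial A^\mu_{\rho\sigma}$ are not tangent to that locus, so there is nothing to contract against them. On $\cT_0''$ the zero-torsion constraint determines the $A$'s uniquely from $(g^{\mu\nu},g^{\mu\nu}_\sigma)$ (Proposition~\ref{prop:RestrictionToI0} via Proposition~\ref{Prop:UniqueSolution}), so that \eqref{eq:GammaInTermsQuotient} collapses to the Levi-Civita formula \eqref{eq:GammaInTermsQuotientJetVariables}: every $\Gamma^\mu_{\rho\sigma}$, hence every ingredient of \eqref{eq:QuotientPalatiniLag}, becomes a function of the $J^1\tau_\Sigma$-coordinates alone. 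Since ${\rm pr}_1|_{\cT_0''}$ is an isomorphism onto $J^1\tau_\Sigma$ (Corollary~\ref{cor:IsomorphismI0PrimePrime}) and $\cL_{\rm EH}^{(1)}$ is by definition the form on $J^1\tau_\Sigma$ pulling back to $i_0^*\cL_{\rm PG}$, one reads off ${\rm pr}_1^*\cL_{\rm EH}^{(1)}=\overline{\cL}_{\rm PG}$ directly, which is the paper's one-line argument. To repair your write-up, replace the fibrewise contraction by this factorization through $\cT_0''$; no $\epsilon$-tensor bookkeeping is needed, and the antisymmetry of $A$ enters only through Proposition~\ref{Prop:UniqueSolution}, not through a cancellation inside the Lagrangian.
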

\begin{proof} It follows from equations~\eqref{eq:GammaInTermsQuotient} and~\eqref{eq:GammaInTermsQuotientJetVariables} that
 \begin{equation}\label{eq:EinsteinLagVsPalatiniLag}
 {\rm pr}_1^*\cL_{\rm EH}^{(1)}=\overline{\cL}_{\rm PG},
 \end{equation}
 as required.
\end{proof}

In short, Routhian $\overline{\cL}_{\rm PG}$ does not depend on the fiber coordinates $A^\sigma_{\mu\rho}$ of the bundle \[
\overline{p}\colon \ {\rm Lin}{\big(\tau_\Sigma^*TM,\widetilde{\kf}\big)}\to\Sigma;
\] it is just the pullback along ${\rm pr}_1$ of the first order Lagrangian for Einstein--Hilbert gravity.

In the usual Routh reduction, the reduced Routhian is a $m$-form on $J^1\big(\tau_\Sigma\circ\overline{p}\big)$; in this case, Lemma~\ref{lem:routhian_pr1_horizontal} allows us to consider the form $\cL_{\rm EH}^{(1)}$ on $J^1\tau_\Sigma$ as the Routhian. Therefore, we can forget about the degrees of freedom associated to the factor $\mathop{{\rm Lin}}{\big(\tau_\Sigma^*TM,\widetilde{\kf}\big)}$ in the quotient bundle, and take as the quotient bundle for Palatini gravity the jet bundle~$J^1\tau_\Sigma$; this is the way in which we will proceed from this point.

\section[Einstein--Hilbert gravity as Routh reduction of Palatini gravity]{Einstein--Hilbert gravity as Routh reduction\\ of Palatini gravity}\label{sec:einst-hilb-grav}

We will devote the present section to establish the two main results of the article, namely, Theo\-rem~\ref{thm:routh-reduct-palat} regarding reduction of Palatini gravity and Theorem~\ref{thm:conn-induc-pullb} dealing with reconstruction of metrics verifying Einstein equations of gravity. The strategy, as we mention in the introduction, is to compare the equations of motion (lifted to the corresponding spaces of forms~$W_{\rm EH}$ and~$W_{\rm PG}$) in a bundle containing every relevant degree of freedom; this role is played below by a pullback bundle of the bundle~$W_{\rm EH}$ along a suitable map. So, let us define
\[
 F_\omega:={\rm pr}_1\circ g_\omega\circ p_K^{J^1\tau}\colon \ \cT_0\longto J^1\tau_\Sigma,
\]
namely
\[
 F_\omega\left(j_x^1s\right)=j_x^1\left(p_K^{LM}\circ s\right)
\]
for every $j_x^1s\in\cT_0$. Then we have the diagram
\begin{equation}\label{eq:DiagramEHVsPG}
 \begin{tikzcd}[row sep=1.3cm,column sep=1cm]
 W_{\rm PG}
 \arrow[swap]{dr}{\pi_{\rm PG}}
 \arrow[hook]{r}{}
 &
 \wedge_2^m\left(T^*\cT_0\right)
 \arrow{d}{\overline{\tau}^m_{\cT_0}}
 &
 F_\omega^*\left(W_{\rm EH}\right)
 \arrow{dl}{{\rm pr}_1^\omega}
 \arrow[swap]{l}{\widetilde{F_\omega}}
 \arrow{r}{{\rm pr}_2^\omega}
 &
 W_{\rm EH}
 \arrow{d}{\pi_{\rm EH}}
 \\
 &
 \cT_0
 \arrow[swap]{rr}{F_\omega}
 &
 &
 J^1\tau_\Sigma
 \end{tikzcd}
\end{equation}
where
\[
 \widetilde{F_\omega}\colon \ F_\omega^* (W_{\rm EH} )\longto\wedge_2^m (T^*\cT_0 )\colon \ \big(j_x^1s,\rho\big)\mapsto\rho\circ T_{j_x^1s}F_\omega
\]
and
\begin{gather*}
 {\rm pr}_1^\omega\colon \ F_\omega^* (W_{\rm EH} )\longto \cT_0,\qquad{\rm pr}_2^\omega\colon \ F_\omega^* (W_{\rm EH} )\longto W_{\rm EH}
\end{gather*}
are the canonical projections of the pullback bundle.

\begin{Remark}\label{rem:LocalFomega} Let us give a local version of the map $F_\omega$; recall that locally $\cT_0$ is described by the set of equations
 \[
 e^k_\sigma e^\nu_{k\rho}=e^k_\rho e^\nu_{k\sigma},
 \]
 where $\big(x^\mu,e_k^\sigma,e^\rho_{k\mu}\big)$ is a set of adapted coordinates induced by a coordinate chart $ (\phi= (x^\mu ),U )$ on $M$. In these coordinates, the canonical quotient map $p_{K}^{LM}\colon LM\to\Sigma$ is given by
 \[
 p_{K}^{LM}\big(x^\mu,e_k^\sigma\big)=\big(x^\mu,\eta^{kl}e_k^\rho e_l^\sigma\big);
 \]
 accordingly, map $F_\omega$ will become
 \[
 F_\omega\big(x^\mu,e_k^\sigma,e^\rho_{k\mu}\big)=\big(x^\mu,\eta^{kl}e_k^\sigma e_l^\rho,\eta^{kl}\big(e^\sigma_{k\mu}e_l^\rho+e^\sigma_{k}e_{l\mu}^\rho\big)\big).
 \]
\end{Remark}

\begin{Lemma}\label{lem:FwVsWPG} The bundle map $\widetilde{F_\omega}$ is an affine bundle isomorphism on $\cT_0$ between $W_{\rm PG}$ and $F_\omega^* (W_{\rm EH})$.
\end{Lemma}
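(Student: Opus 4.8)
The plan is to verify the three defining features of an affine bundle isomorphism over $\cT_0$ — covering the identity, fiberwise affinity, and fiberwise bijectivity onto $W_{\rm PG}$ — with nearly all the work concentrated in the last. First, $\widetilde{F_\omega}$ covers the identity on $\cT_0$: by construction $\widetilde{F_\omega}(j_x^1s,\rho)=\rho\circ T_{j_x^1s}F_\omega$ is an $m$-covector based at $j_x^1s$, so $\overline{\tau}^m_{\cT_0}\circ\widetilde{F_\omega}={\rm pr}_1^\omega$. Moreover, on each fiber $\widetilde{F_\omega}$ is the restriction of the pullback $\rho\mapsto\rho\circ T_{j_x^1s}F_\omega$, a linear — hence affine — map; thus it suffices to track separately the image of the affine base point $\cL_{\rm EH}^{(1)}$ and of the model vector space $I^\Sigma_{{\rm con},2}$.

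For the base point, recall that $F_\omega={\rm pr}_1\circ\overline{\Upsilon}_\omega\circ p_K^{J^1\tau}$ restricted to $\cT_0$, so the defining equation $\big({\rm pr}_1\circ\overline{\Upsilon}_\omega\circ p_K^{J^1\tau}\big)^*\cL_{\rm EH}^{(1)}=i_0^*\cL_{\rm PG}$ gives at once $F_\omega^*\cL_{\rm EH}^{(1)}=\cL_{\rm PG}$; hence $\widetilde{F_\omega}$ sends the base point of the fiber of $F_\omega^*(W_{\rm EH})$ to the base point $\cL_{\rm PG}$ of the fiber of $W_{\rm PG}$. The heart of the argument is then to show that the pullback carries the contact model $I^\Sigma_{{\rm con},2}$ isomorphically onto the metricity model $I^m_{\rm PG}$. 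This is precisely the content of Proposition~\ref{prop:Metricity-Horizontal}: the local identity $\eta^{ik}\omega_k^j+\eta^{jk}\omega_k^i=e^i_\mu e^j_\nu\big({\rm d}g^{\mu\nu}-g^{\mu\nu}_\sigma {\rm d}x^\sigma\big)$ shows that $F_\omega$ pulls the generating contact $1$-forms of $J^1\tau_\Sigma$ back to the metricity forms on $\cT_0$, the invertible factors $e^i_\mu e^j_\nu$ amounting only to a relabeling of indices. Since $F_\omega$ covers the identity on $M$, the horizontal factors $\eta_\rho$ pull back to themselves; wedging the pulled-back contact forms with $\eta_\rho$ therefore reproduces exactly the local generators of $I^m_{\rm PG}$ exhibited in Remark~\ref{rem:LocalDescriptionMetricity}, and the symmetry constraint $p_{\mu\nu}^\rho=p_{\nu\mu}^\rho$ matches on both sides. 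Read off coefficientwise, $F_\omega^*$ acts as the identity on the arrays $(p_{\mu\nu}^\rho)$, so $F_\omega^*|_{I^\Sigma_{{\rm con},2}}\colon I^\Sigma_{{\rm con},2}\to I^m_{\rm PG}$ is a linear bijection.

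Combining the two, $\widetilde{F_\omega}$ maps each fiber $\cL_{\rm EH}^{(1)}+I^\Sigma_{{\rm con},2}$ affinely and bijectively onto $\cL_{\rm PG}+I^m_{\rm PG}=W_{\rm PG}$, and in particular the image lands inside $\wedge^m_2(T^*\cT_0)$. For a coordinate-free confirmation of injectivity one may note that $F_\omega$ is a surjective submersion — it factors as the principal $K$-bundle projection $p_K^{J^1\tau}|_{\cT_0}$ followed by the diffeomorphism $\overline{\Upsilon}_\omega|_{\cT_0/K}$ and the bundle isomorphism ${\rm pr}_1|_{\cT_0''}$ of Corollary~\ref{cor:IsomorphismI0PrimePrime} — so that $T_{j_x^1s}F_\omega$ is surjective and pullback of $m$-forms is injective. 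I expect the main obstacle to be the bookkeeping in the second step: one must confirm that the pullback of the contact subbundle lands exactly on, and exhausts, $I^m_{\rm PG}$, including the precise matching of the symmetry conditions on the coefficient tensors, rather than merely within a larger space of contact-type forms. Proposition~\ref{prop:Metricity-Horizontal} together with the local description of Remark~\ref{rem:LocalDescriptionMetricity} is what makes this identification exact.
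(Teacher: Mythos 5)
Your proof is correct and follows essentially the same route as the paper, whose own (one\nobreakdash-line) proof cites exactly the two ingredients you use: the identity $F_\omega^*\cL_{\rm EH}^{(1)}=i_0^*\cL_{\rm PG}$ for the affine base point and Proposition~\ref{prop:Metricity-Horizontal} for matching the model spaces $I^\Sigma_{{\rm con},2}$ and $I^m_{\rm PG}$. Your expansion of the fiberwise bookkeeping (horizontal factors pulling back along a map covering ${\rm id}_M$, the symmetry of the coefficient arrays) is a faithful filling-in of the details the paper leaves implicit.
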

\begin{proof}
 It is consequence of equation~\eqref{eq:EinsteinLagVsPalatiniLag} and Proposition~\ref{prop:Metricity-Horizontal}.
\end{proof}

We will use diagram~\eqref{eq:DiagramEHVsPG} as a means to compare the equations of motion of Palatini gravity and Einstein--Hilbert gravity; the idea is to use Propositions~\ref{prop:FieldTheoryEqsWL} and~\ref{prop:FieldTheoryEqsEinsteinHilbertCase} in order to represent these equations in terms of the spaces of forms $W_{\rm PG}$ and $W_{\rm EH}$ respectively, and to pull them back to the common space $F_\omega^* (W_{\rm EH})$. Crucial to this strategy is the following result.

\begin{Proposition}\label{prop:RelationBetweenCanonicalForms}
 The following relation holds
 \[
 \widetilde{F_\omega}^*\lambda_{\rm PG}=\big({\rm pr}_2^\omega\big)^*\lambda_{\rm EH}.
 \]
\end{Proposition}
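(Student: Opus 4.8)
The plan is to recognize that neither $\lambda_{\rm PG}$ nor $\lambda_{\rm EH}$ carries any genuinely new information: both are restrictions of the \emph{tautological} $m$-form living on the ambient bundle of $m$-forms, and the whole identity reduces to one naturality property of that tautological form under pullback. Write $\Theta_{\cT_0}$ for the canonical $m$-form on $\wedge^m\left(T^*\cT_0\right)$ and $\Theta_{J^1\tau_\Sigma}$ for the canonical $m$-form on $\wedge^m\left(T^*J^1\tau_\Sigma\right)$. The defining formula for $\lambda_{\rm PG}$ says precisely that $\lambda_{\rm PG}=\iota_{\rm PG}^*\Theta_{\cT_0}$, where $\iota_{\rm PG}\colon W_{\rm PG}\hookrightarrow\wedge^m\left(T^*\cT_0\right)$ is the inclusion, and similarly $\lambda_{\rm EH}=\iota_{\rm EH}^*\Theta_{J^1\tau_\Sigma}$ for $\iota_{\rm EH}\colon W_{\rm EH}\hookrightarrow\wedge^m\left(T^*J^1\tau_\Sigma\right)$. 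So the task is to transport these tautological forms through the pullback construction of diagram~\eqref{eq:DiagramEHVsPG}.

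Next I would prove the naturality statement at the level of the full bundles of forms, which is the heart of the argument. Let $\Phi\colon F_\omega^*\left(\wedge^m\left(T^*J^1\tau_\Sigma\right)\right)\to\wedge^m\left(T^*\cT_0\right)$ be the map $\left(j_x^1s,\rho\right)\mapsto\rho\circ T_{j_x^1s}F_\omega$, so that $\widetilde{F_\omega}$ is its restriction to the subbundle $F_\omega^*\left(W_{\rm EH}\right)$, and let ${\rm pr}_1,{\rm pr}_2$ be the two projections of this pullback bundle. By construction one has the two structural identities $\overline{\tau}^m_{\cT_0}\circ\Phi={\rm pr}_1$ and $F_\omega\circ{\rm pr}_1=\overline{\tau}^m_{J^1\tau_\Sigma}\circ{\rm pr}_2$. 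The claim is $\Phi^*\Theta_{\cT_0}={\rm pr}_2^*\Theta_{J^1\tau_\Sigma}$, which I would check pointwise at $p=\left(j_x^1s,\rho\right)$ on tangent vectors $W_1,\dots,W_m$: the definition of the tautological form plus the first identity, via the chain rule, turns the left-hand side into $\rho\bigl(T_{j_x^1s}F_\omega\circ T_p{\rm pr}_1(W_1),\dots\bigr)=\rho\bigl(T_p\left(F_\omega\circ{\rm pr}_1\right)(W_1),\dots\bigr)$, and the second identity rewrites this as $\rho\bigl(T\overline{\tau}^m_{J^1\tau_\Sigma}\circ T_p{\rm pr}_2(W_1),\dots\bigr)$, which is exactly ${\rm pr}_2^*\Theta_{J^1\tau_\Sigma}$ evaluated at $p$. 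No local coordinates are needed.

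Finally I would restrict this identity to the subbundle $F_\omega^*\left(W_{\rm EH}\right)$ along the inclusion $\iota_0\colon F_\omega^*\left(W_{\rm EH}\right)\hookrightarrow F_\omega^*\left(\wedge^m\left(T^*J^1\tau_\Sigma\right)\right)$. Here Lemma~\ref{lem:FwVsWPG} is exactly the ingredient that makes this legitimate: it guarantees that $\Phi\circ\iota_0=\widetilde{F_\omega}$ really takes values in $W_{\rm PG}$, so that $\iota_{\rm PG}\circ\widetilde{F_\omega}=\Phi\circ\iota_0$, while on the other side ${\rm pr}_2\circ\iota_0=\iota_{\rm EH}\circ{\rm pr}_2^\omega$. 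Pulling the naturality identity back along $\iota_0$ and combining with the first two paragraphs yields
\begin{align*}
\widetilde{F_\omega}^{\,*}\lambda_{\rm PG}
&=\widetilde{F_\omega}^{\,*}\iota_{\rm PG}^*\Theta_{\cT_0}
=\left(\Phi\circ\iota_0\right)^*\Theta_{\cT_0}
=\left({\rm pr}_2\circ\iota_0\right)^*\Theta_{J^1\tau_\Sigma}\\
&=\left({\rm pr}_2^\omega\right)^*\iota_{\rm EH}^*\Theta_{J^1\tau_\Sigma}
=\left({\rm pr}_2^\omega\right)^*\lambda_{\rm EH},
\end{align*}
which is the desired relation. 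The only genuinely delicate point is the pointwise naturality computation of the second paragraph; everything else is bookkeeping with the inclusions $\iota_{\rm PG},\iota_{\rm EH},\iota_0$ and the structure maps of the pullback bundle, with Lemma~\ref{lem:FwVsWPG} supplying the one fact (that $\widetilde{F_\omega}$ lands in $W_{\rm PG}$) that the purely formal argument does not see.
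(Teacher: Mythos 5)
Your proof is correct and is essentially the paper's own argument: both amount to evaluating the tautological $m$-form pointwise and chasing it through the commutative square $F_\omega\circ{\rm pr}_1^\omega=\pi_{\rm EH}\circ{\rm pr}_2^\omega$ together with $\overline{\tau}^m_{\cT_0}\circ\widetilde{F_\omega}={\rm pr}_1^\omega$. The only (harmless) difference is organizational: you first establish the naturality identity on the ambient bundles of forms and then restrict via the inclusions, whereas the paper performs the same chain of equalities directly at points of $F_\omega^*(W_{\rm EH})$.
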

\begin{proof} Let $\big(j_x^1s,\rho\big)\in F^*_\omega (W_{\rm EH})$ be an arbitrary element in this pullback bundle; then using diagram~\eqref{eq:DiagramEHVsPG} we will have that
 \begin{align*}
 \lambda_{\rm PG}\big|_{\rho\circ T_{j_x^1s}F_\omega}\circ T_{(j_x^1s,\rho)}\widetilde{F_\omega}&=\big(\rho\circ T_{j_x^1s}F_\omega\big)\circ T_{\rho\circ T_{j_x^1s}F_\omega}\overline{\tau}^m_{\cT_0}\circ T_{(j_x^1s,\rho)}\widetilde{F_\omega}\\
 &=\big(\rho\circ T_{j_x^1s}F_\omega\big)\circ T_{(j_x^1s,\rho)}{\rm pr}_1^\omega
 =\rho\circ T_\rho\pi_{\rm EH}\circ T_{(j_x^1s,\rho)}{\rm pr}_2^\omega\\
 & =\lambda_{\rm EH}|_\rho\circ T_{(j_x^1s,\rho)}{\rm pr}_2^\omega,
 \end{align*}
 using the fact that $\pi_{\rm EH}\colon W_{\rm EH}\to J^1\tau_\Sigma$ is the restriction of the canonical projection
 \[
 \overline{\tau}^m_{J^1\tau_\Sigma}\colon \ \wedge^m_2\big(T^*J^1\tau_\Sigma\big)\longto J^1\tau_\Sigma
 \]
 to $W_{\rm EH}$. This identity proves the proposition.
\end{proof}

\subsection{Routh reduction of Palatini gravity}\label{sec:routh-reduct-palat}

We are now ready to prove the first result on Routh reduction of Palatini gravity; essentially, we will prove that any section obeying the equations of motion for Palatini gravity projects along the map $F_\omega$ to a holonomic section obeying Einstein--Hilbert equations of motion.

\begin{Theorem}\label{thm:routh-reduct-palat}
 Let $\widehat{Z}\colon U\subset M\to \cT_0$ be a section that obeys the Palatini gravity equations of motion. Then the section
 \[
 F_\omega\circ\widehat{Z}\colon \ U\to J^1\tau_\Sigma
 \]
 is holonomic and obeys the Einstein--Hilbert gravity equations of motion.
\end{Theorem}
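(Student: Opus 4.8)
The plan is to transport a Palatini solution across the affine isomorphism $\widetilde{F_\omega}$ into a solution of the (first order, unified) Einstein--Hilbert system, and then read off the conclusion from Proposition~\ref{prop:FieldTheoryEqsEinsteinHilbertCase}. Since $\widehat Z$ obeys the Palatini equations of motion, Proposition~\ref{prop:FieldTheoryEqsWL} furnishes a section $\Gamma_{\rm PG}\colon U\to W_{\rm PG}$ covering $\widehat Z$ (i.e.\ $\tau_{\rm PG}\circ\Gamma_{\rm PG}=\widehat Z$) with $\Gamma_{\rm PG}^*(X\lrcorner {\rm d}\lambda_{\rm PG})=0$ for every $\tau_1'\circ\tau_{\rm PG}$-vertical $X$ on $W_{\rm PG}$. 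Using Lemma~\ref{lem:FwVsWPG}, which identifies $W_{\rm PG}$ with the pullback bundle $F_\omega^*(W_{\rm EH})$ through the affine isomorphism $\widetilde{F_\omega}$, I set $\widehat\Gamma:=\widetilde{F_\omega}^{-1}\circ\Gamma_{\rm PG}\colon U\to F_\omega^*(W_{\rm EH})$ and define the candidate Einstein--Hilbert lift $\Gamma_{\rm EH}:={\rm pr}_2^\omega\circ\widehat\Gamma\colon U\to W_{\rm EH}$.

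Next I would check that $\Gamma_{\rm EH}$ covers $F_\omega\circ\widehat Z$. Reading off diagram~\eqref{eq:DiagramEHVsPG}, $\widetilde{F_\omega}$ covers the identity on $\cT_0$, so ${\rm pr}_1^\omega=\overline{\tau}^m_{\cT_0}\circ\widetilde{F_\omega}$ gives ${\rm pr}_1^\omega\circ\widehat\Gamma=\tau_{\rm PG}\circ\Gamma_{\rm PG}=\widehat Z$; combined with the pullback identity $\pi_{\rm EH}\circ{\rm pr}_2^\omega=F_\omega\circ{\rm pr}_1^\omega$ this yields $\pi_{\rm EH}\circ\Gamma_{\rm EH}=F_\omega\circ\widehat Z$ (so $\Gamma_{\rm EH}$ covers $F_\omega\circ\widehat Z$ in the sense of Proposition~\ref{prop:FieldTheoryEqsEinsteinHilbertCase}), and projecting further to $M$ shows that $\Gamma_{\rm EH}$ and $F_\omega\circ\widehat Z$ are genuine sections over $U$.

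The core step is to verify the Einstein--Hilbert equations $\Gamma_{\rm EH}^*(Y\lrcorner {\rm d}\lambda_{\rm EH})=0$ for every $(\tau_\Sigma)_1\circ\pi_{\rm EH}$-vertical vector field $Y$ on $W_{\rm EH}$. Since $F_\omega$ is a surjective submersion (it factors as the free quotient $p_K^{J^1\tau}$ followed by the isomorphisms of Corollary~\ref{cor:IsomorphismI0PrimePrime}), so is the pullback projection ${\rm pr}_2^\omega$, and I can lift $Y$ to a vector field $\widehat Y$ on $F_\omega^*(W_{\rm EH})$ that is ${\rm pr}_2^\omega$-related to $Y$; because the projection to $M$ factors through ${\rm pr}_2^\omega$, this lift is again vertical over $M$. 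Pushing $\widehat Y$ forward by the diffeomorphism $\widetilde{F_\omega}$ onto $W_{\rm PG}$ produces $X:=(\widetilde{F_\omega})_*\widehat Y$, which is $\tau_1'\circ\tau_{\rm PG}$-vertical because $\widetilde{F_\omega}$ intertwines the projections to $M$. Invoking Proposition~\ref{prop:RelationBetweenCanonicalForms} in the form $\widehat\lambda:=({\rm pr}_2^\omega)^*\lambda_{\rm EH}=\widetilde{F_\omega}^*\lambda_{\rm PG}$ and the naturality $f^*(\iota_Y\beta)=\iota_{\widehat Y}f^*\beta$ for $f$-related fields, I then compute
\[
\Gamma_{\rm EH}^*(Y\lrcorner {\rm d}\lambda_{\rm EH})=\widehat\Gamma^*\big(\widehat Y\lrcorner {\rm d}\widehat\lambda\big)=\widehat\Gamma^*\big(\widetilde{F_\omega}^*(X\lrcorner {\rm d}\lambda_{\rm PG})\big)=\Gamma_{\rm PG}^*(X\lrcorner {\rm d}\lambda_{\rm PG})=0,
\]
using $\widetilde{F_\omega}\circ\widehat\Gamma=\Gamma_{\rm PG}$ at the last step. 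With $\Gamma_{\rm EH}$ covering $F_\omega\circ\widehat Z$ and satisfying these equations, Proposition~\ref{prop:FieldTheoryEqsEinsteinHilbertCase} immediately gives that $F_\omega\circ\widehat Z$ is critical and holonomic. Holonomicity can moreover be seen directly: by Proposition~\ref{prop:Metricity-Horizontal} the metricity forms pull back to the contact forms of $J^1\tau_\Sigma$ along $F_\omega$, so the constraint $\widehat Z^*\omega_\pf=0$ is exactly $(F_\omega\circ\widehat Z)^*\overline\omega=0$.

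The main obstacle is the bookkeeping in the core step: guaranteeing a lift $\widehat Y$ that is simultaneously ${\rm pr}_2^\omega$-related to $Y$ and vertical over $M$, and keeping straight the two relatedness relations (under ${\rm pr}_2^\omega$ and under $\widetilde{F_\omega}$) so that the interior products commute correctly with the pullbacks. This rests on $F_\omega$ being a surjective submersion and on the compatibility of the canonical forms recorded in Proposition~\ref{prop:RelationBetweenCanonicalForms}; granting those earlier results, the remaining verification is a formal manipulation of related vector fields and pullbacks.
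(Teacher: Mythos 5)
Your proposal is correct and follows essentially the same route as the paper: construct the lift $\Gamma_{\rm PG}$ via Proposition~\ref{prop:FieldTheoryEqsWL}, transport it through the affine isomorphism $\widetilde{F_\omega}$ of Lemma~\ref{lem:FwVsWPG}, use Proposition~\ref{prop:RelationBetweenCanonicalForms} together with the submersion ${\rm pr}_2^\omega$ to obtain the Einstein--Hilbert equations, and conclude with Proposition~\ref{prop:FieldTheoryEqsEinsteinHilbertCase}. The only additions are that you spell out the vertical-lift bookkeeping which the paper leaves implicit, and you supply a direct holonomicity check via Proposition~\ref{prop:Metricity-Horizontal}; both are consistent with the paper's argument.
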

\begin{proof}The idea of the proof is encoded in the following diagram
 \begin{equation}\label{eq:EqsMotionEHPG}
 \begin{tikzcd}[row sep=1.7cm,column sep=1.1cm]
 W_{\rm PG}
 \arrow{d}{\pi_{\rm PG}}
 &
 F_\omega^*(W_{\rm EH})
 \arrow[swap]{l}{\widetilde{F_\omega}}
 \arrow{dl}{{\rm pr}_1^\omega}
 \arrow{r}{{\rm pr}_2^\omega}
 &
 W_{\rm EH}
 \arrow[swap]{d}{\pi_{\rm EH}}
 \\
 \cT_0
 \arrow[near end]{rr}{F_\omega}
 \arrow{dr}{\tau_1}
 &
 &
 J^1\tau_\Sigma
 \arrow{dl}{(\tau_\Sigma)_1}
 \\
 &
 M
 \arrow[dashed,bend left=25]{ul}{\widehat{Z}}
 \arrow[dashed,bend left=75]{uul}{\Gamma}
 \arrow[dashed,bend right=10,swap,near end,crossing over]{uu}{\Gamma'}
 \arrow[dashed,bend right=70,swap]{uur}{\widetilde{\Gamma}}
 &
 \end{tikzcd}
 \end{equation}
 Using Proposition~\ref{prop:FieldTheoryEqsWL}, we construct $\Gamma\colon U\to W_{\rm PG}$ out of $\widehat{Z}$; the Palatini gravity equations of motion will become
 \[
 \Gamma^* (Z\lrcorner {\rm d}\lambda_{\rm PG} )=0
 \]
 for any $Z\in\mathfrak{X}^{V (\tau_1\circ\pi_{\rm PG} )} (W_{\rm PG} )$. Using Lemma~\ref{lem:FwVsWPG} we can define
 \[
 \Gamma':=\big(\widetilde{F}_\omega\big)^{-1}\circ\Gamma\colon \ U\longto F_\omega^* (W_{\rm EH} );
 \]
 then the Palatini equations of motion translate into
 \[
 ({\Gamma}' )^* \big({Z}'\lrcorner {\rm d}\widetilde{F_\omega}^*\lambda_{\rm PG}\big)=0
 \]
 for any ${Z}'\in\mathfrak{X}^{V(\tau_1\circ{\rm pr}_1^{\omega})}(F^*(W_{\rm EH}))$. Then using Proposition~\ref{prop:RelationBetweenCanonicalForms} and the fact that ${\rm pr}_2^\omega\colon F^*_\omega (W_{\rm EH})\allowbreak \to W_{\rm EH}$ is a submersion, we can conclude that the section
 \[
 \widetilde{\Gamma}:={\rm pr}_2^\omega\circ\Gamma'\colon \ U\to W_{\rm EH}
 \]
 obeys the equations of motion
 \[
 \widetilde{\Gamma}^*\big(\widetilde{Z}\lrcorner {\rm d}\lambda_{\rm EH}\big)=0,
 \]
 where $\widetilde{Z}\in\mathfrak{X}^{V((\tau_\Sigma)_1\circ\pi_{\rm EH})}(W_{\rm EH})$ is an arbitrary vertical vector field on $W_{\rm EH}$. Also, using diagram~\eqref{eq:EqsMotionEHPG}, we have that
 \[
 \pi_{\rm EH}\circ\widetilde{\Gamma}=\pi_{\rm EH}\circ{\rm pr}_2^\omega\circ\Gamma'=F_\omega\circ{\rm pr}_1^\omega\circ\Gamma'=F_\omega\circ\widehat{Z}.
 \]
 The theorem then follows from Proposition~\ref{prop:FieldTheoryEqsEinsteinHilbertCase}.
\end{proof}

\begin{Remark}[reduction theorem in local coordinates] Let us look at the reduction theorem in local terms. First, we have that the equations of motion on $J^1\tau$ can be written as
 \begin{gather}
 \frac{\partial\Gamma^\beta_{\mu\nu}}{\partial x^\beta}-\frac{\partial\Gamma^\beta_{\mu\beta}}{\partial x^\nu}+\Gamma^\beta_{\sigma\beta}\Gamma^\sigma_{\mu\nu}-\Gamma^\beta_{\sigma\nu}\Gamma^\sigma_{\mu\beta}=0,\nonumber\\\label{eq:EinsteinPalatiniEqs}
 \Gamma^\sigma_{\mu\nu}=-e_\mu^k e^\sigma_{k\nu},\nonumber\\
 e_\mu^k e_{k\nu}^\sigma -e_\nu^k e_{k\mu}=0.\label{eq11-5}
 \end{gather}
 On the other hand, from Remark~\ref{rem:LocalFomega} we have that
 \[
 F_\omega\big(x^\mu,e_k^\sigma,e^\rho_{k\mu}\big)=\big(x^\mu,\eta^{kl}e_k^\sigma e_l^\rho,\eta^{kl}\big(e^\sigma_{k\mu}e_l^\rho+e^\sigma_{k}e_{l\mu}^\rho\big)\big).
 \]
 Thus, Theorem \ref{thm:routh-reduct-palat} tells us that any local section
 \[
 \widehat{Z}\big(x^\mu\big)=\big(x^\mu,e^\mu_k(x),e^\mu_{k\nu}(x)\big)
 \]
 obeying equations~\eqref{eq11-5} will gives rise to a solution of Einstein equations of motion when composed with~$F_\omega$.
\end{Remark}

\subsection{\dots\ and reconstruction}\label{sec:...and-reconstr}

We will give now a (somewhat partial) converse to Theorem \ref{thm:routh-reduct-palat}. That is, given a section $\zeta\colon U\subset M\to\Sigma$ such that $j^1\zeta\colon U\to J^1\tau_\Sigma$ is extremal for the Einstein--Hilbert variational problem, find a section
\[
 \widehat{Z}\colon \ U\to\cT_0
\]
such that $F_\omega\circ\widehat{Z}=j^1\zeta$ and $\widehat{Z}$ is an extremal for the Palatini variational problem. From Fig.~\ref{fig:MapsInvolvedRouth} it is clear that we need to lift the section~$j^1\zeta$ through the quotient map $ p_K^{J^1\tau}\big|_{\cT_0}\colon \cT_0\to\cT_0/K$, which has the structure of a principal bundle on~$\cT_0/K$. It is clear that any principal bundle can be trivialized by a convenient restriction of the base space. As discussed in~\cite{Capriotti2019}, it is not the way in which this kind of reconstruction problems are solved. Rather, the problem of lifting sections along the projection in a principal bundle is reduced to the problem of deciding if certain connection is flat; moreover, it is expected that this connection is related to the connection used to define the Routhian. We will present in this section a theory of reconstruction along these lines. With this goal in mind, we will recall here some of the details developed in~\cite{Capriotti2019}; for proofs we refer to the original article. We begin with a pair of diagrams~\eqref{dia:covering}:
\begin{equation}\label{dia:covering}
 \begin{tikzcd}[column sep=1cm, row
 sep=1.4cm]
 P \arrow[rr,"p^P_G"]\arrow[dr,swap,"\pi"] & & [1ex]P/G \arrow[dl,"\overline{\pi}"]\\
 & M \arrow[ur,dashed,bend right=45,swap,"\zeta"]\arrow[ul,dashed,bend left=45,"s"]&
 \end{tikzcd}\hspace{3em}
 \begin{tikzcd}[column sep=1.8cm, row
 sep=1.4cm]
 \zeta^*P \arrow[r,"{\rm pr}_2"]\arrow[d,swap,"{\rm pr}_1"] &[1ex] P\arrow[d,"p^P_G"]\\
 M\arrow[r,swap,"\zeta"] & P/G
 \end{tikzcd}
\end{equation}

Then we have the following result.

\begin{Lemma}\label{lem:TrivialToSections} There exists a section $s\colon M\rightarrow P$ covering the section $\zeta\colon M\to P/G$ if and only if $\zeta^*P$ is a trivial bundle.
\end{Lemma}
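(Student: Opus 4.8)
The plan is to reduce the statement to the classical fact that a principal $G$-bundle is trivial if and only if it admits a global section; this is part of the standard theory of principal bundles, as in~\cite{KN1}. Concretely, $p^P_G\colon P\to P/G$ is a principal $G$-bundle because the $G$-action on $P$ is free, and pulling back along $\zeta$ yields the principal $G$-bundle ${\rm pr}_1\colon \zeta^*P\to M$ (the pullback of a principal bundle is again principal with the same structure group). Thus it suffices to exhibit a bijection between sections $s\colon M\to P$ of $\pi$ that cover $\zeta$ (i.e.\ $p^P_G\circ s=\zeta$) and global sections $\sigma\colon M\to \zeta^*P$ of ${\rm pr}_1$.

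For the direction assuming a covering section exists, suppose $s\colon M\to P$ satisfies $p^P_G\circ s=\zeta$. First I would define $\sigma\colon M\to M\times P$ by $\sigma(x):=(x,s(x))$. Because $p^P_G(s(x))=\zeta(x)$, the pair $(x,s(x))$ satisfies the defining condition of the fibred product $\zeta^*P=\{(x,p)\colon \zeta(x)=p^P_G(p)\}$, so $\sigma$ takes values in $\zeta^*P$; moreover ${\rm pr}_1\circ\sigma={\rm id}_M$, so $\sigma$ is a global section of the principal bundle ${\rm pr}_1$. Hence $\zeta^*P$ is trivial.

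For the converse, suppose $\zeta^*P$ is trivial, hence admits a global section $\sigma\colon M\to\zeta^*P$ with ${\rm pr}_1\circ\sigma={\rm id}_M$. I would set $s:={\rm pr}_2\circ\sigma\colon M\to P$. The pullback square gives the identity $p^P_G\circ{\rm pr}_2=\zeta\circ{\rm pr}_1$, and therefore
\[
 p^P_G\circ s=p^P_G\circ{\rm pr}_2\circ\sigma=\zeta\circ{\rm pr}_1\circ\sigma=\zeta,
\]
so $s$ is a section of $\pi$ covering $\zeta$, as desired.

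The argument involves no genuine obstacle; the only point requiring care is the justification that $\zeta^*P$ is a principal $G$-bundle (so that the equivalence between triviality and the existence of a global section is available), which follows from freeness of the $G$-action on $P$ together with the naturality of the principal bundle structure under pullback. Everything else is the routine verification, carried out above, that the assignments $s\mapsto\bigl(x\mapsto(x,s(x))\bigr)$ and $\sigma\mapsto{\rm pr}_2\circ\sigma$ are mutually inverse and respect the covering/section conditions.
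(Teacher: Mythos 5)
Your proof is correct: the paper itself omits the argument (deferring to the reference \cite{Capriotti2019}), and your reduction to the standard fact that a principal bundle is trivial iff it admits a global section, together with the universal-property correspondence between sections of ${\rm pr}_1\colon\zeta^*P\to M$ and maps $s\colon M\to P$ with $p^P_G\circ s=\zeta$, is exactly the intended route. The only cosmetic omission is the one-line check that such an $s$ is indeed a section of $\pi$, which follows from $\pi=\overline{\pi}\circ p^P_G$ and $\overline{\pi}\circ\zeta={\rm id}_M$.
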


Using that $\zeta^*P$ is a principal bundle, being trivial can be characterized in terms of a flat connection~\cite{KN1}:

\begin{Theorem}\label{thm:conn-induc-pullb} Let $\pi\colon P\rightarrow M$ be a $G$-principal bundle with $M$ simply connected. Then~$P$ is trivial if and only if there exists a flat connection on~$P$.
\end{Theorem}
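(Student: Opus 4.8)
The plan is to prove the two implications separately; the simply-connectedness of $M$ is needed only for the implication ``flat connection $\Rightarrow$ trivial'', so I would organize the argument around that asymmetry.

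For the easy direction, suppose $P$ is trivial and fix an isomorphism $P \simeq M \times G$ of principal bundles. Let $\theta$ be the left Maurer--Cartan form of $G$, a $\g$-valued $1$-form, and set $\omega := \mathrm{pr}_2^*\theta$ where $\mathrm{pr}_2\colon M \times G \to G$ is the projection. First I would check that $\omega$ is a connection form: it reproduces the infinitesimal generators of the $G$-action (because $\theta$ does on $G$) and it is $\mathrm{Ad}$-equivariant (because $\theta$ is). Its curvature is $\Omega = \mathrm{d}\omega + \tfrac12[\omega \wedge \omega] = \mathrm{pr}_2^*\bigl(\mathrm{d}\theta + \tfrac12[\theta \wedge \theta]\bigr) = 0$ by the Maurer--Cartan equation, so $\omega$ is flat. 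Equivalently, its horizontal distribution consists of the tangent spaces to the slices $M \times \{g\}$, which is manifestly integrable.

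For the converse, suppose $\omega$ is a flat connection form on $P$ with horizontal distribution $H = \ker \omega$. The key observation is that flatness makes $H$ involutive: for horizontal vector fields $X,Y$ the structure equation gives $\omega([X,Y]) = -\mathrm{d}\omega(X,Y) = -\Omega(X,Y) = 0$, using that $[\omega \wedge \omega]$ vanishes on horizontal arguments, so $[X,Y]$ is again horizontal. By the Frobenius theorem $P$ is then foliated by maximal integral leaves of $H$; each leaf meets every fibre discretely and projects by $\pi$ as a local diffeomorphism to $M$. I would then promote this to a global horizontal section using parallel transport. Fix $p_0 \in P$ over a basepoint $x_0 \in M$; horizontal lifting of curves assigns to each path $\gamma$ from $x_0$ to $x$ a parallel-transported point in the fibre over $x$. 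Flatness means the holonomy around any contractible loop is trivial (the horizontal lift of a null-homotopic loop closes up, since $\Omega = 0$), and since $M$ is simply connected every loop is contractible; hence parallel transport from $p_0$ depends only on the endpoint $x$. This yields a smooth map $s\colon M \to P$ with $\pi \circ s = \mathrm{id}_M$, i.e.\ a global section, and a global section of a principal $G$-bundle trivializes it via $(x,g)\mapsto s(x)\cdot g$.

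The main obstacle is the well-definedness of $s$: one must argue that horizontal lifts of paths that are homotopic rel endpoints share the same endpoint. This is exactly where flatness and simple connectedness are used together — flatness gives invariance of the lift under a small homotopy (by covering a homotopy square $[0,1]^2 \to M$ with the horizontal leaves of $H$ and checking the lift agrees along the boundary, a fibred Frobenius argument), while simple connectedness upgrades local invariance to genuine path-independence. Making that leaf-covering step on the homotopy square precise is the technical heart of the argument; the remaining passages (equivariance and flatness of the Maurer--Cartan connection, involutivity of $H$, and the section-implies-triviality step) are formal.
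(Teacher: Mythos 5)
Your proof is correct and follows the standard argument that the paper itself defers to (it gives no proof of Theorem~\ref{thm:conn-induc-pullb}, citing Kobayashi--Nomizu~\cite{KN1} instead): the Maurer--Cartan connection on a trivial bundle for one direction, and for the converse the Frobenius-integrable horizontal distribution together with trivial holonomy on a simply connected base, yielding a global horizontal section. Your construction also recovers the observation the paper relies on afterwards, namely that the resulting section has horizontal image with respect to the given connection.
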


If $M$ is not simply connected, then one can ask for a flat connection with trivial holonomy and obtain a similar result. For the sake of simplicity, we will assume that~$M$ is simply connected to apply Theorem~\ref{thm:conn-induc-pullb} when needed. For later use, we also observe that the section constructed in the proof of Theorem~\ref{thm:conn-induc-pullb} has horizontal image with respect to the given connection.

We wish now to apply the previous discussion to the case of the bundle $p_K^{\cT_0}:= p_K^{J^1\tau}\big|_{\cT_0}\colon \cT_0\to\cT_0/K$. We have the situation depicted in diagram~\eqref{dia:covering2} (left): $Z\colon M\to\cT_0/K$ is a given section and $\zeta\colon M\to \Sigma$ is the induced section. The basic question we want to address is whether there exists a section $\widehat{Z}\colon M\rightarrow \cT_0$ {such that} $p_K^{\cT_0}\circ\widehat{Z}=Z$:
\begin{equation}\label{dia:covering2}
 \begin{tikzcd}[column sep=1cm, row
 sep=1cm]
 \cT_0 \arrow[rr,"p^{\cT_0}_K"]\arrow[d,"\tau_{10}'"] &&\cT_0/K\arrow[d,swap,"\overline{\tau}'_{10}"]\\
 LM\arrow[rr,"p^{LM}_K"]\arrow[dr,"\tau'"] & & \Sigma \arrow[dl,swap,"\tau_\Sigma"] \\
 & M\arrow[ur,swap,dashed,"\zeta",bend right=25]\arrow[uul,dashed,"\widehat{Z}",bend left=85]\arrow[uur,swap,dashed,"Z",bend right=85]&
 \end{tikzcd}
 \hspace{1em}
 \begin{tikzcd}[column sep=2cm, row
 sep=1.8cm]
 \zeta^*(LM) \arrow[r,"{\rm pr}_2"]\arrow[d,swap,"{\rm pr}_1"] &LM\arrow[d,"p_K^{LM}"]\\
 M\arrow[r,swap,"\zeta"] & \Sigma
 \end{tikzcd}
\end{equation}
Now, using the fact that
\[
 \cT_0/K\simeq\Sigma\times C_0(LM),\qquad\cT_0\simeq LM\times C_0(LM)
\]
we have that $Z= (\zeta,\Gamma)$ is composed by the metric plus the Levi-Civita connection $\Gamma$; therefore, we will have that
\[
 \widehat{Z}=\big(\widehat{\zeta},\Gamma\big)
\]
where $\widehat{\zeta}\colon M\to LM$ is some lift of the section $\zeta\colon M\to\Sigma$. Then, we can then construct the pullback bundle $\zeta^*(LM)$ (diagram~\eqref{dia:covering2}, right) and particularize Lemma~\ref{lem:TrivialToSections} to conclude the following:
\begin{Lemma}\label{lem:FlatConnectionIffSectionJ1pi} Assume that $M$ is simply connected. If $\zeta^*(LM)$ admits a flat connection then there exists a section $\widehat{Z}\colon M\rightarrow \cT_0$ such that
 \[
\big(p_K^{LM}\circ\tau_{10}'\big)\circ\widehat{Z}=\zeta
 \]
 and $\widehat{Z}^*\omega_\pf=0$. Conversely, every such section gives rise to a flat connection on $\zeta^* (TM)$.
\end{Lemma}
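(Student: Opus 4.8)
The plan is to reduce the reconstruction problem to the abstract lifting criterion already packaged in Lemma~\ref{lem:TrivialToSections} and Theorem~\ref{thm:conn-induc-pullb}, applied to the $K$-principal bundle $p_K^{LM}\colon LM\to\Sigma$. The guiding observation is that, under the identification $\cT_0\simeq LM\times_M C_0(LM)$, any candidate section splits as $\widehat{Z}=(\widehat{\zeta},\widehat{\Gamma})$, and the requirement $\big(p_K^{LM}\circ\tau_{10}'\big)\circ\widehat{Z}=\zeta$ says exactly that $\widehat{\zeta}:=\tau_{10}'\circ\widehat{Z}\colon M\to LM$ covers $\zeta$. Since every point of $\cT_0$ is torsionless, $\widehat{\Gamma}$ is automatically torsion-free; the first task is to show that the metricity condition $\widehat{Z}^*\omega_\pf=0$ then \emph{forces} $\widehat{\Gamma}$ to be the Levi-Civita connection of $g:=\overline{q}\circ\zeta$. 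Consequently the connection factor carries no independent data and the whole problem collapses to lifting $\zeta$ along $p_K^{LM}$.

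For the forward implication I would proceed as follows. Assuming $\zeta^*(LM)$ admits a flat connection and that $M$ is simply connected, Theorem~\ref{thm:conn-induc-pullb} gives that $\zeta^*(LM)$ is trivial, and Lemma~\ref{lem:TrivialToSections} (with $P=LM$, $G=K$) then produces a section $\widehat{\zeta}\colon M\to LM$ with $p_K^{LM}\circ\widehat{\zeta}=\zeta$; by the observation following Theorem~\ref{thm:conn-induc-pullb} this $\widehat{\zeta}$ can even be taken horizontal for the flat connection. Setting $\widehat{Z}:=\big(\widehat{\zeta},\Gamma_{\mathrm{LC}}\big)$ with $\Gamma_{\mathrm{LC}}$ the Levi-Civita connection of $g=\overline{q}\circ\zeta$, one checks at once that $\widehat{Z}$ takes values in $\cT_0$ (Levi-Civita is torsion-free), that $\tau_{10}'\circ\widehat{Z}=\widehat{\zeta}$ covers $\zeta$, and, by the computation below, that $\widehat{Z}^*\omega_\pf=0$.

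The technical core is the equivalence between metricity and metric-compatibility. Writing $\widehat{Z}=\sigma_{\widehat{\Gamma}}\circ\widehat{\zeta}$, one has $\widehat{Z}^*\omega=\widehat{\zeta}^*\big(\sigma_{\widehat{\Gamma}}^*\omega\big)$, where $\sigma_{\widehat{\Gamma}}^*\omega$ is the $\mathfrak{gl}(m)$-valued connection form on $LM$ attached to $\widehat{\Gamma}$; hence $\widehat{Z}^*\omega_\pf=\widehat{\zeta}^*\big(\sigma_{\widehat{\Gamma}}^*\omega\big)_\pf$. Using the local expression~\eqref{eq:GammaInLocalTerms}, the vanishing of this $\pf$-part is precisely the compatibility relation~\eqref{eq:MetricityLocal} of $\widehat{\Gamma}$ with $g$; together with the torsion-free property inherited from $\cT_0$ this pins down $\widehat{\Gamma}$ as the Levi-Civita connection (cf.\ Proposition~\ref{prop:RestrictionLeviCivita}). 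The delicate point here is simply that the frame factors $e^i_\mu$ occurring in~\eqref{eq:GammaInLocalTerms} are invertible, so that the vanishing of the left-hand side is genuinely equivalent to the bracketed compatibility expression on the right, rather than merely implied by it.

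For the converse, a section $\widehat{Z}$ with the stated properties yields through $\widehat{\zeta}=\tau_{10}'\circ\widehat{Z}$ a section of $LM\to\Sigma$ covering $\zeta$; by the universal property of the pullback (diagram~\eqref{dia:covering2}, right) this is the same datum as a global section of the principal $K$-bundle $\zeta^*(LM)\to M$, which trivializes it, and the associated product connection is flat. Equivalently one invokes Lemma~\ref{lem:TrivialToSections} and Theorem~\ref{thm:conn-induc-pullb} in reverse. The main obstacle throughout is conceptual rather than computational: one must verify that fixing the connection part to be Levi-Civita is \emph{forced} by $\omega_\pf$, not merely permitted, so that the reconstruction truly reduces to a single lifting problem in $LM\to\Sigma$ and the triviality/flatness dictionary of Lemma~\ref{lem:TrivialToSections} and Theorem~\ref{thm:conn-induc-pullb} applies verbatim.
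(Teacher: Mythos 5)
Your proposal is correct and follows essentially the same route as the paper: reduce the problem to lifting $\zeta$ along the $K$-bundle $p_K^{LM}\colon LM\to\Sigma$ via the pullback $\zeta^*(LM)$, apply Lemma~\ref{lem:TrivialToSections} and Theorem~\ref{thm:conn-induc-pullb} to obtain $\widehat{\zeta}$, and set $\widehat{Z}=\big(\widehat{\zeta},\Gamma_{\mathrm{LC}}\big)$; for the converse, the flat connection you obtain from the trivialization is exactly the $K$-invariant distribution the paper writes down explicitly. Your added verification that $\widehat{Z}^*\omega_\pf=0$ via equation~\eqref{eq:GammaInLocalTerms} only makes explicit a step the paper leaves implicit.
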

\begin{proof} Because $\zeta^*(LM)$ is a $K$-principal bundle, Theorem \ref{thm:conn-induc-pullb} and Lemma \ref{lem:TrivialToSections} allow us to find a section $\widehat{\zeta}\colon M\to LM$ iff there exists a flat connection on it. Thus if $\omega^\zeta$ is flat, we can construct a lift $\widehat{\zeta}\colon M\to LM$ for $\zeta$ and so
 \[
 \widehat{Z}=\big(\widehat{\zeta},\Gamma\big)
 \]
 is the desired lift to $\cT_0$, where $\Gamma\colon M\to C(LM)$ is the Levi-Civita connection for $\zeta$.

 Conversely, let us suppose that we have a lift
 \[
 \widehat{\zeta}:=\tau_{10}'\circ\widehat{Z}\colon \ M\to LM
 \]
 for the metric $\zeta\colon M\to\Sigma$. Recall that, for every $(x,u)\in\zeta^*(LM)$,
 \[
 T_{(x,u)}\zeta^*(LM)=\big\{(v_x,V_u)\colon T_x\zeta(v_x)=T_up_K^{LM} (V_u)\big\}\subset T_xM\times T_u(LM).
 \]
 Then we construct the following $K$-invariant distribution $H$ on $\zeta^*(LM)$: If $k\in K$ fullfils the condition $u=\widehat{\zeta}(x)\cdot k$, then
 \[
 H_{(x,u)}:=\big\{\big(v_x,T_{\widehat{\zeta}(x)}R_k\big(T_x\widehat{\zeta}(v_x)\big)\big)\colon v_x\in T_xM\big\}.
 \]
 It can be shown that it defines a flat connection on $\zeta^*(LM)$.
\end{proof}

So, in order to find a lift for the section $\zeta$, it is sufficient to construct a flat connection on the $K$-principal bundle $\zeta^*(LM)$.

To this end, we will define
\[
 \omega^\zeta:=\pi_{\kf}\circ ({\rm pr}_2 )^*\omega_0\in\Omega^1 (\zeta^*(LM),\kf )
\]
where $\omega_0\in\Omega^1 (LM,\mathfrak{gl}(m) )$ is a principal connection on $LM$ and $\pi_\kf\colon \mathfrak{gl}(m)\to\kf$ is the canonical projection onto~$\kf$. Lemma~\ref{lem:FlatConnectionIffSectionJ1pi} allows us to establish the following definition, inspired in the analogous concept from regular Routh reduction.

\begin{Definition}[flat condition for Palatini gravity]
 We will say that a metric $\zeta\colon M\to\Sigma$ satisfies the \emph{flat condition regarding the principal connection $\omega_0\in\Omega^1(LM,\mathfrak{gl}(m))$} if and only if the associated connection $\omega^\zeta$ is flat.
\end{Definition}

\begin{Remark}[flat condition and parallelizability]\label{rem:FlatConditionAndParallel}
 This condition yields to a relationship between the metric $\zeta\colon M\to\Sigma$ and the principal connection $\omega_0$; the physical relevance of this relationship remains unclear for the author. In order to get this condition mathematically, let us discuss the meaning of the bundle $\zeta^*(LM)$. By definition, $(x,u= (X_1,\dots,X_m))\in M\times LM$ belongs to $\zeta^*(LM)$ if and only $X_i\in T_xM$ for all $i=1,\dots,m$ and also
 \begin{gather}\label{eq:ZetaVsBasis}
 \zeta(x)^\sharp=\eta^{ij}X_i\otimes X_j.
 \end{gather}
 Here $g^\sharp$ indicates the contravariant $2$-tensor associated to the metric $g$; an equivalent way to express this is given by the formula
 \[
 \zeta(x)\circ (T_u\tau\otimes T_u\tau )=\eta_{ij} \theta^i\big|_u\otimes \theta^j\big|_u.
 \]
 It means in particular that we can identify
 \begin{gather*}
 \iota\colon \ \zeta^*(LM)\stackrel{\sim}{\longrightarrow} O_\zeta(M),
 \end{gather*}
 where $O_\zeta (M)\subset LM$ is the orthogonal subbundle associated to the metric $\zeta$. This identification goes as follows: Given $(x,u)\in\zeta^*(LM)$, equation~\eqref{eq:ZetaVsBasis} tells us that $u\in O_\zeta (M)$; conversely, given $u\in O_\zeta (M)$, the pair $ (\tau(u),u)$ belongs to $\zeta^*(LM)$. Therefore, we have the commutative diagram
 \[
 \begin{tikzcd}[row sep=1.3cm,column sep=1.5cm]
 \zeta^*(LM)
 \arrow[swap]{d}{\iota}
 \arrow{r}{{\rm pr}_2}
 &
 LM
 \\
 O_\zeta(M)
 \arrow[hook]{ur}{\text{inc}}
 &
 \end{tikzcd}
 \]
 Thus, under the previous identification, the $\mathfrak{gl}(m)$-valued $1$-form ${\rm pr}_2^*\omega_0$ is the pullback of the form $\omega_0$ to the subbundle $O_\zeta(M)$; because the decomposition
 \[
 \mathfrak{gl}(m)=\kf\oplus\pf
 \]
 is $K$-invariant, the $\kf$-valued $1$-form $\pi_{\kf}\circ({\rm pr}_2)^*\omega_0$ is a connection form on $O_\zeta(M)$. Thus, flatness of this connection is equivalent to $O_\zeta(M)$ being trivial. Now, it means that $M$ is parallelizable; therefore, we have proved that the flat condition for Palatini gravity is equivalent to parallelizability of the spacetime manifold~$M$. In particular, the reconstruction scheme can be carried out only locally.
\end{Remark}

Also, it is necessary to establish the following result regarding the map $F_\omega$.
\begin{Lemma}
 The following diagram commutes
 \begin{equation}\label{eq:F_OmegaCommutative_Diagram}
 \begin{tikzcd}[row sep=1.3cm,column sep=1.5cm]
 \cT_0
 \arrow[swap]{d}{p_K^{\cT_0}}
 \arrow{r}{F_\omega}
 &
 J^1\tau_\Sigma
 \arrow{d}{(\tau_\Sigma)_{10}}
 \\
 \cT_0/K
 \arrow{r}{\overline{\tau}'_{10}}
 &
 \Sigma
 \end{tikzcd}
 \end{equation}
\end{Lemma}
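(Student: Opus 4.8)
The plan is to reduce the commutativity of diagram~\eqref{eq:F_OmegaCommutative_Diagram} to a single pointwise identity and then verify that identity directly from the definition of $F_\omega$. First I would recall that the map $\overline{\tau}'_{10}\colon \cT_0/K\to\Sigma$ is, by its very construction (the bottom square of diagram~\eqref{dia:covering2}), the unique map satisfying
\[
\overline{\tau}'_{10}\circ p_K^{\cT_0}=p_K^{LM}\circ\tau_{10}',
\]
which makes sense because $p_K^{LM}\circ\tau_{10}'$ is $K$-invariant and $p_K^{\cT_0}$ is the quotient projection. Composing the claimed identity with the surjection $p_K^{\cT_0}$ therefore shows that the lemma is equivalent to the statement $(\tau_\Sigma)_{10}\circ F_\omega=p_K^{LM}\circ\tau_{10}'$, an equality of maps defined directly on $\cT_0$, with no quotient left to handle.

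Next I would evaluate both sides on an arbitrary $j_x^1 s\in\cT_0$. Using the explicit formula $F_\omega(j_x^1 s)=j_x^1(p_K^{LM}\circ s)$ together with the fact that $(\tau_\Sigma)_{10}$ sends a $1$-jet $j_x^1\zeta$ to its source value $\zeta(x)\in\Sigma$, the left-hand side becomes $(p_K^{LM}\circ s)(x)=p_K^{LM}(s(x))$. On the other side, $\tau_{10}'(j_x^1 s)=s(x)$, so the right-hand side equals $p_K^{LM}(s(x))$ as well; the two agree and the identity holds. Equivalently, one can read this off in coordinates using Remark~\ref{rem:LocalFomega}: the $\Sigma$-component of $F_\omega\big(x^\mu,e_k^\sigma,e^\rho_{k\mu}\big)$ is precisely $\big(x^\mu,\eta^{kl}e_k^\sigma e_l^\rho\big)$, which is exactly the metric coordinate produced by $p_K^{LM}$, so both composites strip away the jet data and return the same metric.

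I expect essentially no obstacle here: the content is just the compatibility of the source projection of the prolongation $s\mapsto j^1(p_K^{LM}\circ s)$ with $p_K^{LM}$ itself. The only point requiring a little care is the bookkeeping of the two distinct quotient/projection maps, $p_K^{\cT_0}$ on $\cT_0$ and $(\tau_\Sigma)_{10}$ on $J^1\tau_\Sigma$; this is dispatched at the outset by collapsing the quotient via the defining relation of $\overline{\tau}'_{10}$, after which only the elementary jet-theoretic identity remains.
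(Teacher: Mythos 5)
Your proposal is correct and follows essentially the same route as the paper: both arguments evaluate the two composites on an arbitrary $j_x^1s\in\cT_0$, using $F_\omega\big(j_x^1s\big)=j_x^1\big(p_K^{LM}\circ s\big)$ and the fact that $\overline{\tau}'_{10}\big(\big[j_x^1s\big]_K\big)=[s(x)]_K$, to find that each side equals $p_K^{LM}(s(x))$. Your preliminary reduction via the defining relation $\overline{\tau}'_{10}\circ p_K^{\cT_0}=p_K^{LM}\circ\tau_{10}'$ is a harmless reorganization of the same pointwise check.
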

\begin{proof}
 In fact, for $j_x^1s\in\cT_0$ we have
 \begin{gather*}
 (\tau_\Sigma)_{10}\big(F_\omega\big(j_x^1s\big)\big)= (\tau_\Sigma)_{10}\big(j_x^1\big(p_K^{LM}\circ s\big)\big) =[s(x)]_K
 \end{gather*}
 and also
 \[
 \overline{\tau}'_{10}\big(p_K^{\cT_0}\big(j_x^1s\big)\big) =\overline{\tau}'_{10}\big(\big[j_x^1s\big]_K\big)= [s(x)]_K,
 \]
 and the lemma follows.
\end{proof}

With this in mind, we are ready to formulate the reconstruction side of this version of Routh reduction for Palatini gravity.

\begin{Theorem}[reconstruction in Palatini gravity]\label{thm:Reconstruction}
 Let $\zeta\colon M\to\Sigma$ be a metric satisfying the flat condition and the Einstein--Hilbert equations of motion. Then there exists a section
 \[
 \widehat{Z}\colon \ M\to\cT_0
 \]
 that is extremal of the Griffiths variational problem for Palatini gravity.
\end{Theorem}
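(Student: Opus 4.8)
The plan is to run the argument of Theorem~\ref{thm:routh-reduct-palat} backwards: use the flat condition to manufacture an admissible section of $\cT_0$, then recognize its $F_\omega$-image as $j^1\zeta$, and finally transport the Einstein--Hilbert equations of motion for $j^1\zeta$ back to the Palatini side through the isomorphism $\widetilde{F_\omega}$. First I would produce the candidate section. Since $\zeta$ satisfies the flat condition, the connection $\omega^\zeta$ on $\zeta^*(LM)$ is flat, so Lemma~\ref{lem:FlatConnectionIffSectionJ1pi} (together with the standing assumption that $M$ is simply connected, whence the construction is local as explained in Remark~\ref{rem:FlatConditionAndParallel}) yields a section $\widehat{Z}\colon M\to\cT_0$ with $\big(p_K^{LM}\circ\tau_{10}'\big)\circ\widehat{Z}=\zeta$ and $\widehat{Z}^*\omega_\pf=0$. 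This last identity already shows that $\widehat{Z}$ is integral for $\langle\omega_\pf\rangle$, hence an admissible section for the Griffiths variational problem of Definition~\ref{def:VariationalProblemPalatini}; it remains to prove that it is an extremal.

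Next I would identify $F_\omega\circ\widehat{Z}$ with $j^1\zeta$. By the commutative diagram~\eqref{eq:F_OmegaCommutative_Diagram} the section $F_\omega\circ\widehat{Z}$ of $J^1\tau_\Sigma$ covers $\zeta$. By Proposition~\ref{prop:Metricity-Horizontal} the metricity form maps to the pullback of the contact form $\overline{\omega}$ on $J^1\tau_\Sigma$ along $F_\omega$; in local terms $F_\omega^*\overline{\omega}$ reads $e^i_\mu e^j_\nu\big({\rm d}g^{\mu\nu}-g^{\mu\nu}_\sigma\,{\rm d}x^\sigma\big)$ with the vielbein matrix invertible, so $\widehat{Z}^*\omega_\pf=0$ forces $\big(F_\omega\circ\widehat{Z}\big)^*\overline{\omega}=\widehat{Z}^*\big(F_\omega^*\overline{\omega}\big)=0$. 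A section of $J^1\tau_\Sigma$ annihilating the contact structure is holonomic, so $F_\omega\circ\widehat{Z}$ coincides with the $1$-jet of the section it covers, namely $F_\omega\circ\widehat{Z}=j^1\zeta$.

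Then I would reconstruct a solution on $W_{\rm PG}$. Since $j^1\zeta=F_\omega\circ\widehat{Z}$ is critical holonomic for $\big(J^1\tau_\Sigma,\cL_{\rm EH}^{(1)},\mathcal{I}_{\rm con}^\Sigma\big)$, Proposition~\ref{prop:FieldTheoryEqsEinsteinHilbertCase} supplies a section $\widetilde{\Gamma}\colon M\to W_{\rm EH}$ with $\pi_{\rm EH}\circ\widetilde{\Gamma}=j^1\zeta$ and $\widetilde{\Gamma}^*\big(\widetilde{Z}\lrcorner {\rm d}\lambda_{\rm EH}\big)=0$ for every $(\tau_\Sigma)_1\circ\pi_{\rm EH}$-vertical $\widetilde{Z}$. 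As $\pi_{\rm EH}\big(\widetilde{\Gamma}(x)\big)=j^1\zeta(x)=F_\omega\big(\widehat{Z}(x)\big)$, the assignment $\Gamma'(x):=\big(\widehat{Z}(x),\widetilde{\Gamma}(x)\big)$ defines a section $\Gamma'\colon M\to F_\omega^*(W_{\rm EH})$ with ${\rm pr}_1^\omega\circ\Gamma'=\widehat{Z}$ and ${\rm pr}_2^\omega\circ\Gamma'=\widetilde{\Gamma}$. Setting $\Gamma:=\widetilde{F_\omega}\circ\Gamma'$ (with $\widetilde{F_\omega}$ the affine isomorphism of Lemma~\ref{lem:FwVsWPG}), diagram~\eqref{eq:DiagramEHVsPG} gives $\tau_{\rm PG}\circ\Gamma={\rm pr}_1^\omega\circ\Gamma'=\widehat{Z}$, so $\Gamma$ covers $\widehat{Z}$. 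Using Proposition~\ref{prop:RelationBetweenCanonicalForms}, which states $\widetilde{F_\omega}^*\lambda_{\rm PG}=({\rm pr}_2^\omega)^*\lambda_{\rm EH}$, I would verify that $\Gamma^*\big(Z\lrcorner {\rm d}\lambda_{\rm PG}\big)=0$ for every $\tau_1'\circ\tau_{\rm PG}$-vertical $Z$; Proposition~\ref{prop:FieldTheoryEqsWL} then certifies that $\widehat{Z}$ is an extremal of the Griffiths variational problem for Palatini gravity, which is the claim.

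The hard part will be this last verification, that is, reversing the equation-of-motion translation of Theorem~\ref{thm:routh-reduct-palat}. In the forward direction one exploited that ${\rm pr}_2^\omega$ is a submersion to \emph{push} the Palatini equations onto $W_{\rm EH}$; here I must travel the opposite way. The key observation is that, because $\widetilde{F_\omega}$ is a fibre-preserving diffeomorphism over $\cT_0$, an arbitrary $\tau_1'\circ\tau_{\rm PG}$-vertical field $Z$ on $W_{\rm PG}$ corresponds to a $\tau_1'\circ{\rm pr}_1^\omega$-vertical field $Z'$ on $F_\omega^*(W_{\rm EH})$, and since $(\tau_\Sigma)_1\circ\pi_{\rm EH}\circ{\rm pr}_2^\omega=\tau_1'\circ{\rm pr}_1^\omega$ its image $T{\rm pr}_2^\omega\,Z'$ is $(\tau_\Sigma)_1\circ\pi_{\rm EH}$-vertical along $\widetilde{\Gamma}$. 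Because the Einstein--Hilbert relation $\widetilde{\Gamma}^*\big(\widetilde{Z}\lrcorner {\rm d}\lambda_{\rm EH}\big)=0$ holds pointwise for \emph{all} such vertical directions, rewriting $\Gamma^*\big(Z\lrcorner {\rm d}\lambda_{\rm PG}\big)=(\Gamma')^*\big(Z'\lrcorner({\rm pr}_2^\omega)^*{\rm d}\lambda_{\rm EH}\big)$ by naturality of pullback and interior product reduces it exactly to that relation contracted with $T{\rm pr}_2^\omega\,Z'$, which vanishes. The only point needing care is to confirm that every required vertical direction at $\widetilde{\Gamma}(x)$ is attained, which follows from the surjectivity of $T{\rm pr}_2^\omega$ on vertical subspaces of the pullback bundle.
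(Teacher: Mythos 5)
Your proposal is correct and follows essentially the same route as the paper's proof: lift $\zeta$ to $\widehat{Z}$ via Lemma~\ref{lem:FlatConnectionIffSectionJ1pi}, use Proposition~\ref{prop:Metricity-Horizontal} to show $F_\omega\circ\widehat{Z}$ is holonomic and hence equals $j^1\zeta$, form $\Gamma'=\big(\widehat{Z},\widetilde{\Gamma}\big)$ in $F_\omega^*(W_{\rm EH})$, and transport the Einstein--Hilbert equations back through $\widetilde{F_\omega}$ using Proposition~\ref{prop:RelationBetweenCanonicalForms}. Your closing remark about attaining all vertical directions corresponds to the paper's restriction to $\big({\rm pr}_2^\omega\circ F_\omega^{-1}\big)$-projectable vertical fields, so the two arguments coincide in substance.
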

\begin{proof} The holonomic lift
 \[
 j^1\zeta\colon \ M\to J^1\tau_\Sigma
 \]
 is extremal for the variational problem $\big(J^1\tau_\Sigma,\cL_{\rm EH}^{(1)},cI^\Sigma_{{\rm con}}\big)$; then, by Proposition~\ref{prop:FieldTheoryEqsEinsteinHilbertCase}, there exists a section
 \[
 \widetilde{\Gamma}\colon \ M\to W_{\rm EH}
 \]
 such that $\tau_{\rm EH}\circ\widetilde{\Gamma}=j^1\zeta$ and
 \begin{equation}\label{eq:WidetildeGammaLambdaEH}
 \widetilde{\Gamma}^* (X\lrcorner {\rm d}\lambda_{\rm EH} )=0
 \end{equation}
 for all $Z\in\mathfrak{X}^{V((\tau_\Sigma)_1\circ\tau_{\rm EH})}(W_{\rm EH})$.

 On the other hand, by Lemma \ref{lem:FlatConnectionIffSectionJ1pi} we have a lift
 \[
 \widehat{Z}\colon \ M\to\cT_0
 \]
 such that
 \[
 \overline{\tau}_{10}'\circ p_K^{\cT_0}\circ\widehat{Z}=\zeta;
 \]
 by diagram~\eqref{eq:F_OmegaCommutative_Diagram} we have that
 \begin{gather}\label{eq:HolonomicSectionZHat}
 \zeta=\overline{\tau}_{10}'\circ p_K^{\cT_0}\circ\widehat{Z}= (\tau_\Sigma )_{10}\circ F_\omega\circ\widehat{Z}.
 \end{gather}
 We will define the map
 \[
 \Gamma':=\big(\widehat{Z},\widetilde{\Gamma}\big)\colon \ M\to\cT_0\times W_{\rm EH}
 \]
 and show that it is a section of ${\rm pr}_1^\omega\colon F^*_\omega (W_{\rm EH})\to\cT_0$; namely, we have to show that
 \[
 F_\omega\circ\widehat{Z}=\pi_{\rm EH}\circ\widetilde{\Gamma}.
 \]
 It is important to this end to note that the conclusion of Proposition~\ref{prop:Metricity-Horizontal} can be translated to this context into
 \[
 Tp_K^{LM}\circ\omega_\pf=F_\omega^*\overline{\omega};
 \]
 moreover, by Lemma \ref{lem:FlatConnectionIffSectionJ1pi} we know that $\widehat{Z}^*\omega_\pf=0$, so
 \[
\big(F_\omega\circ\widehat{Z}\big)^*\overline{\omega}=\widehat{Z}^* \big(F_\omega^*\overline{\omega}\big)=Tp_K^{LM}\circ\big(\widehat{Z}^*\omega_\pf\big)=0.
 \]
 Then the section
 \[
 F_\omega\circ\widehat{Z}\colon \ M\to J^1\tau_\Sigma
 \]
 is holonomic; finally, from equation~\eqref{eq:HolonomicSectionZHat} we must conclude that
 \[
 F_\omega\circ\widehat{Z}=j^1\zeta.
 \]
 But $j^1\zeta=\pi_{\rm EH}\circ\widetilde{\Gamma}$ by construction of the section $\widetilde{\Gamma}$; then
 \[
 F_\omega\circ\widehat{Z}=\pi_{\rm EH}\circ\widetilde{\Gamma}
 \]
 and $\Gamma'$ is a section of $F_\omega^*\left(W_{\rm EH}\right)$, as required.

 Now define the section
 \[
 {\Gamma}:=\widetilde{F}_\omega\big(\widehat{Z},\widetilde{\Gamma}\big)\colon \ M\to W_{\rm PG};
 \]
 Therefore, for any $Z\in\mathfrak{X}^{V (\tau_1\circ\pi_{\rm PG})} (W_{\rm PG})$ that is $\big({\rm pr}_2^\omega\circ F_\omega^{-1}\big)$-projectable, we have that
 \begin{align*}
 \Gamma^* (Z\lrcorner {\rm d}\lambda_{\rm PG} )&= (\Gamma' )^*\big(\big(TF_\omega^{-1}\circ Z\big)\lrcorner {\rm d}F_\omega^*\lambda_{\rm PG}\big)
 =(\Gamma')^*\big(\big(TF_\omega^{-1}\circ Z\big)\lrcorner {\rm d}\big({\rm pr}_2^\omega\big)^*\lambda_{\rm EH}\big)\\
 &=\big({\rm pr}_2^\omega\circ\Gamma'\big)^*\big(\big(T{\rm pr}_2^\omega\circ TF_\omega^{-1}\circ Z\big)\lrcorner {\rm d}\lambda_{\rm EH}\big)\\
 &=\widetilde{\Gamma}^*\big(\big(T{\rm pr}_2^\omega\circ TF_\omega^{-1}\circ Z\big)\lrcorner {\rm d}\lambda_{\rm EH}\big) =0
 \end{align*}
 because $\widetilde{\Gamma}$ obeys equation~\eqref{eq:WidetildeGammaLambdaEH}.
\end{proof}

\section{Conclusions and outlook}\label{sec:conclusions-outlook}

We adapt the Routh reduction scheme developed in~\cite{Capriotti2019} to the case of affine gravity with vielbeins. It suggests that this formalism could be fit to deal with Griffiths variational problems more general than the classical, at least with cases when the differential restrictions are a subset of those imposed by the contact structure. Extensions of this scheme to gravity interacting with matter fields will be studied elsewhere.

\appendix
\section{An important algebraic result} First, we want to state the following algebraic proposition.

\begin{Proposition}\label{Prop:UniqueSolution}
Let $\left\{c_{ijk}\right\}$ be a set of real numbers such that
\[
\begin{cases}
c_{ijk}\mp c_{jik}=b_{ijk},\\
c_{ijk}\pm c_{ikj}=a_{ijk}
\end{cases}
\]
for some given set of real numbers $\{a_{ijk}\}$ and $\{b_{ijk}\}$ such that $b_{ijk}\mp b_{jik}=0$ and $a_{ijk}\pm a_{ikj}=0$. Then,
\[
c_{ijk}=\frac{1}{2} (a_{ijk}+a_{jki}-a_{kij}+b_{ijk}+b_{kij}-b_{jki} )
\]
is the unique solution for this linear system.
\end{Proposition}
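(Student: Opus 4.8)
The plan is to split the statement into an \emph{existence} part (the displayed expression solves the linear system) and a \emph{uniqueness} part (no other solution exists). Because the system is linear in the unknowns $c_{ijk}$, uniqueness amounts to showing that the homogeneous system obtained by setting $a_{ijk}=b_{ijk}=0$ has only the trivial solution; I would treat this first, as it is the conceptually cleanest step. Throughout I would work with the top choice of signs (the bottom choice being entirely analogous, with the roles of symmetry and antisymmetry exchanged), and I would use the side conditions on $\{a_{ijk}\}$ and $\{b_{ijk}\}$ only through the fact that they endow each right-hand side with the same index symmetry that the corresponding left-hand side automatically possesses: antisymmetry in the first pair for the $b$-equation and symmetry in the last pair for the $a$-equation.

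For uniqueness, setting $a_{ijk}=b_{ijk}=0$ turns the two equations into $c_{ijk}=c_{jik}$ (symmetry in the first pair) and $c_{ijk}=-c_{ikj}$ (antisymmetry in the last pair). The key observation is that a quantity symmetric in one adjacent pair of indices and antisymmetric in the other adjacent pair must vanish. Chasing the two relations around the permutations of $(i,j,k)$ yields the chain
\[
c_{ijk}=c_{jik}=-c_{jki}=-c_{kji}=c_{kij}=c_{ikj}=-c_{ijk},
\]
so that $c_{ijk}=-c_{ijk}$ and hence $c_{ijk}=0$. Consequently any two solutions of the inhomogeneous system differ by a solution of the homogeneous one and therefore coincide.

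For existence, I would substitute the proposed expression into each equation and verify the identity term by term. To check $c_{ijk}+c_{ikj}=a_{ijk}$ I would write out $c_{ikj}$ from the formula by interchanging $j\leftrightarrow k$, add it to $c_{ijk}$, and group the six resulting $a$-terms and six $b$-terms separately. Applying the antisymmetry of $b$ in its first pair, the $b$-terms cancel in pairs, while applying the symmetry of $a$ in its last pair collapses the $a$-terms to $2a_{ijk}$, giving the desired equality after the factor $\tfrac12$. The complementary grouping — now using the symmetry of $a$ to cancel the $a$-terms and the antisymmetry of $b$ to produce $2b_{ijk}$ — verifies $c_{ijk}-c_{jik}=b_{ijk}$.

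The only mildly delicate point is the bookkeeping: keeping track of which transpositions arise when the formula is evaluated at permuted index strings, and applying the correct symmetry of $a$ or $b$ to each of the twelve terms. I expect no genuine obstacle here, since the manipulation is precisely the Koszul-type computation that recovers the Christoffel symbols from metricity and the torsion-free condition — compare the derivation of~\eqref{eq:GammaInTermsQuotientJetVariables} in the proof of Proposition~\ref{prop:RestrictionLeviCivita} — and the whole argument is short once the permutation symmetries are recorded carefully.
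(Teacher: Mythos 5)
Your proof is correct, but it is organized differently from the paper's. The paper runs the Koszul-type combination directly on the inhomogeneous system: starting from $\pm c_{jik}=c_{ijk}-b_{ijk}$, it expands $a_{ijk}+a_{jki}-a_{kij}$ using the second equation at the three cyclic index strings and then uses the first equation to flip each transposed $c$ back, landing on $2c_{ijk}-b_{ijk}-b_{kij}+b_{jki}$ in one stroke. That computation \emph{derives} the displayed formula and simultaneously proves uniqueness (any solution must equal it), but it never checks that the formula actually satisfies the system. You instead separate the two halves: uniqueness via the homogeneous system and the standard chain $c_{ijk}=c_{jik}=-c_{jki}=\cdots=-c_{ijk}$, and existence via direct substitution using the side conditions on $a$ and $b$. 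Your chain argument is the same mechanism as the paper's combination in different packaging, so nothing essentially new happens there; what your version genuinely adds is the existence check, which the paper omits and which is not automatic --- it is exactly where the compatibility hypotheses on $a$ and $b$ enter (your $a$-terms collapse to $2a_{ijk}$ and your $b$-terms cancel only because of them). One small caution: you read those hypotheses as consistency conditions matching the symmetry of each left-hand side ($b$ antisymmetric in its first pair, $a$ symmetric in its last pair), which is the only sensible interpretation of the paper's $\mp/\pm$ bookkeeping --- read with the same sign choice throughout, the hypotheses would force $a=b=0$ and trivialize the statement --- so it is worth stating that convention explicitly when you write the argument out.
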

\begin{proof}From first equation we see that
\[
\pm c_{jik}=c_{ijk}-b_{ijk}.
\]
The trick now is to form the following combination
\begin{gather*}
a_{ijk}+a_{jki}-a_{kij} =c_{ijk}\pm c_{ikj}+c_{jki}\pm c_{jik}-\left(c_{kij}\pm c_{kji}\right)
 =2c_{ijk}-b_{ijk}-b_{kij}+b_{jki},
\end{gather*}
where in the permutation of indices was used the remaining condition.
\end{proof}

\section{Proof of Proposition \ref{prop:FieldTheoryEqsWL}}\label{app:LiftToTorsionZero}

In order to do this proof, it will be necessary to bring some facts from \cite{doi:10.1142/S0219887818500445}. First, we have the bundle isomorphism on $\cT_0$
\[
 W_{\rm PG}\simeq E_2
\]
where $p_2'\colon E_2\to\cT_0$ is the vector bundle
\[
 E_2:=\wedge^{m-1}_1 (\cT_0 )\otimes S^*(m),
\]
with $S^*(m):=\big(\mathbb{R}^m\big)^*\odot\big(\mathbb{R}^m\big)^*$ the set of symmetric forms on $\mathbb{R}^m$, and
\[
 \wedge^{m-1}_1 (\cT_0 ):=\big\{\gamma\in\wedge^{m-1} (\cT_0)\colon \gamma \text{ is horizontal respect to the projection }\tau_1'\colon \cT_0\to M\big\}.
\]
The bundle $E_2$ is a bundle of forms with values in a vector space; therefore, it has a canonical $(m-1)$-form
\[
 \Theta:=\Theta_{ij}e^i\odot e^j.
\]
Using the structure equations for the canonical connection on $J^1\tau$ (pulled back to $\cT_0$), we have that the differential of the Lagrangian form $\lambda_{\rm PG}$ is given by
\begin{gather*}
 {\rm d}\lambda_{\rm PG}|_\rho=\big[2\eta^{kp}(\omega_\pf)_k^i\wedge\theta_{il}-(\omega_\pf)^s_s\wedge \eta^{kp}\theta_{kl}+\eta^{ip} \Theta_{il}|_{\beta}\big]\wedge\Omega^l_p \nonumber\\
\hphantom{{\rm d}\lambda_{\rm PG}|_\rho=}{} +\eta^{ik}\big[{\rm d}\Theta_{ij} |_{\beta}+\eta^{rq}\eta_{li} \Theta_{rj}|_{\beta}\wedge(\omega_\kf)^l_q -\Theta_{ip}|_{\beta}\wedge(\omega_\kf)^p_j\big] \wedge(\omega_\pf)^j_k.
\end{gather*}
The equations of motion
\begin{gather}\label{eq:LiftedEqsOfMotion}
\Gamma^*(X\lrcorner d\lambda_{\rm PG})=0,\qquad X\in\mathfrak{X}^{V(\tau_1'\circ\tau_{\rm PG})}(W_{\rm PG})
\end{gather}
are obtained by choosing a convenient set of vertical vector fields; because of the identification given above, it is sufficient to give a set of vertical vector fields on $\cT_0$ and on~$E_2$. It results that a~global basis of vertical vector fields on $\cT_0$ is
\[
 B':=\big\{A_{J^1\tau},M_{rs}\big(\theta^r,\big(E^s_j\big)_{LM}\big)^V\colon A\in\mathfrak{gl}(m),M_{pq}-M_{qp}=0\big\};
\]
in fact, the equation defining $\cT_0$
\[
 e_\rho^ke_{k\sigma}^\mu=e_\sigma^ke_{k\rho}^\mu
\]
is invariant by the ${\rm GL}(m)$-action, and also
\[
\big(\theta^r,\big(E^s_j\big)_{LM}^V\big)\cdot\big(e_\rho^ke_{k\sigma}^\mu-e_\sigma^ke_{k\rho}^\mu\big) =e^\mu_j\big(e^s_\sigma e^r_\rho-e^r_\sigma e^s_\rho\big).
\]
Given that $E_2$ is a vector bundle on $\cT_0$, any section $\beta\colon \cT_0\to E_2$ gives rise to a vertical vector field; the equations of motion associated to these kind of vector fields are the metricity conditions
\[
 \omega_\pf=0.
\]
Therefore, fixing an Ehresmann connection on the bundle $p_2'\colon E_2\to\cT_0$, we can produce the set of vertical vector fields on $E_2$
\[
(A_{\cT_0})^H,\qquad M:=M_{rs}\big(\theta^r,\big(E^s_j\big)_{LM}\big)^H;
\]
the equations of motion associated to $M$ are
\[
 \eta^{ks}\Gamma^*\big(M_{rs}\Theta_{kt}\wedge\theta^r\big)=0.
\]
The unique solution of these equations is $\Theta_{kt}=0$. In fact, by writing
\[
 \Gamma^*\Theta_{kt}=\eta^{lp}N_{ktp}\theta_l
\]
and taking into account the symmetry properties of $M_{pq}$, we have that the set of quantities $N_{pqr}$ must satisfy
\[
 N_{pqr}-N_{qpr}=0,\qquad N_{pqr}+N_{prq}=0;
\]
by Proposition~\ref{Prop:UniqueSolution}, it results that $N_{pqr}=0$, as desired. The rest of the equations of motion can be calculated is the same fashion that in the~$J^1\tau$ case; therefore, the equations~\eqref{eq:LiftedEqsOfMotion} are equivalent to the equations for the extremals of the Palatini variational problem.

\section{Proof of Proposition \ref{prop:Local_Horizontal_Lift}}\label{sec:proof-prop-ref}

First, let us write down
\[
 \left(\frac{\partial}{\partial x^\mu}\right)^H=M^\nu_\mu\frac{\partial}{\partial x^\nu}+N_{\mu k}^\nu\frac{\partial}{\partial e^\nu_k}.
\]
Then from
\begin{equation}\label{eq:HorizontalLiftProjected}
 Tp_K^{LM}\left(\left(\frac{\partial}{\partial x^\mu}\right)^H\right)=\frac{\partial}{\partial x^\mu}
\end{equation}
it results
\[
 M^\nu_\mu=\delta^\nu_\mu;
\]
the condition
\[
 \omega_K\left(\left(\frac{\partial}{\partial x^\mu}\right)^H\right)=0
\]
implies
\[
 N^\nu_{\mu k}\big(\eta^{pk}e_\nu^l-\eta^{lk}e_\nu^p\big)+\eta^{lq}e^p_\sigma f^\sigma_{q\mu}-\eta^{pq}e^l_\sigma f^\sigma_{q\mu}=0.
\]
In order to understand this equation for the unknowns~$N^\nu_{\mu k}$, let us change of variables through the formula
\[
 N^\nu_{\mu k}=g_{\sigma\rho}e^\rho_kN^{\nu\sigma}_\mu;
\]
in terms of these new variables, and the Christoffel symbols of the connection $\omega_0$
\[
 \overline{\Gamma}^\sigma_{\alpha\mu}=-e_\alpha^kf_{k\mu}^\sigma,
\]
the previous equations can be expressed as
\begin{align*}
 0&=g_{\alpha\rho}e^\rho_kN^{\nu\alpha}_\mu\big(\eta^{pk}e_\nu^l-\eta^{lk}e_\nu^p\big)-\eta^{lq}e^p_\sigma e_q^\alpha \overline{\Gamma}^\sigma_{\alpha\mu}+\eta^{pq}e^l_\sigma e^\alpha_q \overline{\Gamma}^\sigma_{\alpha\mu}\\
 &=N^{\nu\alpha}_\mu\big(g_{\alpha\rho}e^\rho_k\eta^{pk}e_\nu^l-g_{\alpha\rho}e^\rho_k\eta^{lk}e_\nu^p\big) +\big(\eta^{pq}e^\alpha_qe^l_\sigma-\eta^{lq}e^\alpha_qe^p_\sigma\big)\overline{\Gamma}^\sigma_{\alpha\mu}\\
 &=N^{\nu\alpha}_\mu\big(g_{\alpha\rho}g^{\rho\beta}e_\beta^pe_\nu^l-g_{\alpha\rho} g^{\rho\beta}e_\beta^le_\nu^p\big)+\big(e^p_\beta e^l_\sigma-e^l_\beta e^p_\sigma\big)g^{\alpha\beta}\overline{\Gamma}^\sigma_{\alpha\mu}\\
 &=N^{\sigma\beta}_\mu\big(e_\beta^pe_\sigma^l-e_\beta^le_\sigma^p\big)+\big(e^p_\beta e^l_\sigma-e^l_\beta e^p_\sigma\big)g^{\alpha\beta}\overline{\Gamma}^\sigma_{\alpha\mu}\\
 &=\big(e_\beta^pe_\sigma^l-e_\beta^le_\sigma^p\big) \big(N^{\sigma\beta}_\mu+g^{\alpha\beta}\overline{\Gamma}^\sigma_{\alpha\mu}\big).
\end{align*}
The operator in the left is essentially an antisymmetrizator because of the formula
\[
 e^\mu_pe^\nu_l\big(e_\beta^pe_\sigma^l-e_\beta^le_\sigma^p\big)=\delta_\beta^\mu \delta_\sigma^\nu-\delta_\beta^\nu \delta_\sigma^\mu;
\]
therefore
\begin{gather}\label{eq:NFromGamma}
 N^{\sigma\beta}_\mu+g^{\alpha\beta}\overline{\Gamma}^\sigma_{\alpha\mu}=S^{\sigma\beta}_\mu,
\end{gather}
where
\[
 S^{\sigma\beta}_\mu-S^{\beta\sigma}_\mu=0.
\]
Finally, from the condition \eqref{eq:HorizontalLiftProjected} we obtain
\[
 N_{\mu k}^\sigma\big(\eta^{kq}e_q^\rho\delta_\sigma^\alpha+\eta^{kq}e_q^\alpha\delta_\sigma^\rho\big)=0
\]
or, in terms of the variables $N^{\nu\sigma}_\mu$
\begin{align*}
 0&=g_{\nu\beta}e^\beta_kN_{\mu}^{\sigma\nu}\big(\eta^{kq}e_q^\rho\delta_\sigma^\alpha +\eta^{kq}e_q^\alpha\delta_\sigma^\rho\big)
 =N_{\mu}^{\sigma\nu}\big(g_{\nu\beta}e^\beta_k\eta^{kq}e_q^\rho\delta_\sigma^\alpha +g_{\nu\beta}e^\beta_k\eta^{kq}e_q^\alpha\delta_\sigma^\rho\big)\\
 &=N_{\mu}^{\sigma\nu}\big(g_{\nu\beta}g^{\beta\rho}\delta_\sigma^\alpha +g_{\nu\beta}g^{\beta\alpha}\delta_\sigma^\rho\big)
 =N_{\mu}^{\sigma\nu}\big(\delta_{\nu}^{\rho}\delta_\sigma^\alpha+\delta_{\nu}^{\alpha}\delta_\sigma^\rho\big).
\end{align*}
From equation \eqref{eq:NFromGamma} it results that
\[
 g^{\alpha\beta}\overline{\Gamma}^\sigma_{\alpha\mu} +g^{\alpha\sigma}\overline{\Gamma}^\beta_{\alpha\mu}-2S^{\sigma\beta}_\mu=0,
\]
or equivalently
\[
 N^{\sigma\beta}_\mu=\frac{1}{2}\big(g^{\alpha\sigma}\overline{\Gamma}^\beta_{\alpha\mu} -g^{\alpha\beta}\overline{\Gamma}^\sigma_{\alpha\mu}\big).
\]
Therefore, we have
\[
 \left(\frac{\partial}{\partial x^\mu}\right)^H=\frac{\partial}{\partial x^\mu}+\frac{1}{2}g_{\beta\rho}e^\rho_k\big(g^{\alpha\sigma}\overline{\Gamma}^\beta_{\alpha\mu} -g^{\alpha\beta}\overline{\Gamma}^\sigma_{\alpha\mu}\big)\frac{\partial}{\partial e^\sigma_k}.
\]

Additionally, we need to construct the horizontal lifts
\[
 \left(\frac{\partial}{\partial g^{\mu\nu}}\right)^H=P_{\mu\nu}^\sigma\frac{\partial}{\partial x^\sigma}+Q_{\mu\nu k}^\sigma\frac{\partial}{\partial e^\sigma_k}
\]
with $P_{\mu\nu}^{\sigma}-P_{\nu\mu}^{\sigma}=0$, $Q_{\mu\nu k}^{\sigma}-Q_{\nu\mu k}^\sigma=0$. The equation
\begin{gather*} 
 Tp_K^{LM}\left(\left(\frac{\partial}{\partial g^{\mu\nu}}\right)^H\right)=\frac{\partial}{\partial g^{\mu\nu}}
\end{gather*}
and the identity \eqref{eq:E_Projection} imply
\[
 P_{\mu\nu}^{\sigma}\frac{\partial}{\partial x^\sigma}+Q_{\mu\nu k}^{\sigma}\left(\eta^{kq}e_q^\rho\delta_\sigma^\alpha+\eta^{kq}e_q^\alpha\delta_\sigma^\rho\right)\frac{\partial}{\partial g^{\alpha\rho}}=\frac{\partial}{\partial g^{\mu\nu}},
\]
namely
\[
 P_{\mu\nu}^{\sigma}=0
\]
and (given the symmetry properties of $g^{\mu\nu}$)
\begin{gather}
 \frac{1}{2}\big(\delta_\mu^\alpha\delta^\rho_\nu+\delta_\nu^\alpha\delta^\rho_\mu\big) =Q_{\mu\nu k}^{\sigma}\big(\eta^{kq}e_q^\rho\delta_\sigma^\alpha+\eta^{kq}e_q^\alpha\delta_\sigma^\rho\big)
 =\eta^{kq}e_q^\rho Q_{\mu\nu k}^\alpha+\eta^{kq}e_q^\alpha Q_{\mu\nu k}^{\rho}.\label{eq:Condition_One_E}
\end{gather}
The horizontality condition
\[
 \omega_K\left(\left(\frac{\partial}{\partial g^{\mu\nu}}\right)^H\right)=0
\]
will be equivalent to
\begin{gather}
 \big(\eta^{pk}e^l_\sigma-\eta^{lk}e^p_\sigma\big)Q_{\mu\nu k}^\sigma=0.
 \label{eq:Condition_Two_E}
\end{gather}
These conditions can be understood by introducing the variables
\[
 Q_{\mu\nu}^{\alpha\rho}:=\eta^{kq}e_q^\rho Q_{\mu\nu k}^{\alpha};
\]
then, equation \eqref{eq:Condition_One_E} becomes
\[
 Q_{\mu\nu}^{\alpha\rho}+Q_{\mu\nu}^{\rho\alpha}=\frac{1}{2}\left(\delta_\mu^\alpha\delta^\rho_\nu+\delta_\nu^\alpha\delta^\rho_\mu\right)
\]
and equation \eqref{eq:Condition_Two_E} is equivalent to
\[
 Q_{\mu\nu}^{\alpha\rho}-Q_{\mu\nu}^{\rho\alpha}=0.
\]
Therefore
\begin{gather*}
 Q_{\mu\nu}^{\alpha\rho}+Q_{\mu\nu}^{\rho\alpha} =\frac{1}{2}\big(Q_{\mu\nu}^{\alpha\rho}+Q_{\mu\nu}^{\rho\alpha}\big) +\frac{1}{2}\big(Q_{\mu\nu}^{\alpha\rho}-Q_{\mu\nu}^{\rho\alpha}\big)
 =\frac{1}{4}\big(\delta_\mu^\alpha\delta^\rho_\nu+\delta_\nu^\alpha\delta^\rho_\mu\big)
\end{gather*}
and so
\begin{gather*}
 \left(\frac{\partial}{\partial g^{\mu\nu}}\right)^H =Q_{\mu\nu k}^{\alpha}\frac{\partial}{\partial e^\alpha_k}
 =\frac{1}{4}\eta_{kl}e^l_\rho\big(\delta_\mu^\alpha\delta^\rho_\nu +\delta_\nu^\alpha\delta^\rho_\mu\big)\frac{\partial}{\partial e^\alpha_k}
 =\frac{1}{4}g_{\rho\beta}e^\beta_k\big(\delta_\mu^\alpha\delta^\rho_\nu +\delta_\nu^\alpha\delta^\rho_\mu\big)\frac{\partial}{\partial e^\alpha_k}.
\end{gather*}

\subsection*{Acknowledgements}

The author thanks the CONICET and UNS for financial support, and Eduardo Garc\'{\i}a-Tora\~no for valuable discussion regarding aspects of Routh reduction contained in this article, as well as for pointing me out reference \cite{kharlamov_characteristic_1977}. Also, the author would like to warmly thank the referees for the care they put in reviewing this work. The article has been greatly improved by their suggestions.

\pdfbookmark[1]{References}{ref}
\LastPageEnding


\begin{thebibliography}{99}
\footnotesize\itemsep=0pt

\bibitem{citeulike:820116}
Arnowitt R., Deser S., Misner C.W., The dynamics of general relativity,
 \href{https://doi.org/10.1007/s10714-008-0661-1}{\textit{Gen. Relativity Gravitation}} \textbf{40} (2004), 1997--2027,
 \href{https://arxiv.org/abs/gr-qc/0405109}{arXiv:gr-qc/0405109}.

\bibitem{1751-8121-40-40-005}
Barbero-Li\~{n}\'an M., Echeverr\'{\i}a-Enr\'{\i}quez A., Mart\'{\i}n~de Diego
 D., Mu\~{n}oz Lecanda M.C., Rom\'an-Roy N., Skinner--{R}usk unified formalism
 for optimal control systems and applications, \href{https://doi.org/10.1088/1751-8113/40/40/005}{\textit{J.~Phys.~A: Math. Gen.}}
 \textbf{40} (2007), 12071--12093, \href{https://arxiv.org/abs/0705.2178}{arXiv:0705.2178}.

\bibitem{capriotti14:_differ_palat}
Capriotti S., Differential geometry, {P}alatini gravity and reduction,
 \href{https://doi.org/10.1063/1.4862855}{\textit{J.~Math. Phys.}} \textbf{55} (2014), 012902, 29~pages,
 \href{https://arxiv.org/abs/1209.3596}{arXiv:1209.3596}.

\bibitem{Capriotti201723}
Capriotti S., Routh reduction and {C}artan mechanics, \href{https://doi.org/10.1016/j.geomphys.2016.11.015}{\textit{J.~Geom. Phys.}}
 \textbf{114} (2017), 23--64, \href{https://arxiv.org/abs/1606.02630}{arXiv:1606.02630}.

\bibitem{doi:10.1142/S0219887818500445}
Capriotti S., Unified formalism for {P}alatini gravity, \href{https://doi.org/10.1142/S0219887818500445}{\textit{Int.~J. Geom.
 Methods Mod. Phys.}} \textbf{15} (2018), 1850044, 33~pages,
 \href{https://arxiv.org/abs/1707.06057}{arXiv:1707.06057}.

\bibitem{Capriotti2019}
Capriotti S., Garc\'{\i}a-Tora\~{n}o Andr\'es E., Routh reduction for
 first-order {L}agrangian field theories, \href{https://doi.org/10.1007/s11005-018-1140-6}{\textit{Lett. Math. Phys.}}
 \textbf{109} (2019), 1343--1376, \href{https://arxiv.org/abs/1909.10088}{arXiv:1909.10088}.

\bibitem{Capriotti:2019bqh}
Capriotti S., Gaset J., Rom\'an-Roy N., Salomone L., Griffiths variational
 multisymplectic formulation for {L}ovelock gravity, \href{https://arxiv.org/abs/1911.07278}{arXiv:1911.07278}.

\bibitem{CASTRILLONLOPEZ2013352}
Castrill\'{o}n~L\'{o}pez M., Garc\'{\i}a P.L., Rodrigo C., Euler--{P}oincar\'e
 reduction in principal bundles by a subgroup of the structure group,
 \href{https://doi.org/10.1016/j.geomphys.2013.08.008}{\textit{J.~Geom. Phys.}} \textbf{74} (2013), 352--369.

\bibitem{springerlink:10.1007/PL00004852}
Castrill\'{o}n~L\'{o}pez M., Mu\~{n}oz Masqu\'{e} J., The geometry of the
 bundle of connections, \href{https://doi.org/10.1007/PL00004852}{\textit{Math.~Z.}} \textbf{236} (2001), 797--811.

\bibitem{doi:10.1063/1.4890555}
Castrill\'{o}n~L\'{o}pez M., Mu\~{n}oz Masqu\'{e} J., Rosado~Mar\'{\i}a E.,
 First-order equivalent to {E}instein--{H}ilbert {L}agrangian,
 \href{https://doi.org/10.1063/1.4890555}{\textit{J.~Math. Phys.}} \textbf{55} (2014), 082501, 9~pages,
 \href{https://arxiv.org/abs/1306.1123}{arXiv:1306.1123}.

\bibitem{2003CMaPh.236..223L}
Castrill\'{o}n~L\'{o}pez M., Ratiu T.S., Reduction in principal bundles:
 covariant {L}agrange--{P}oincar\'e equations, \href{https://doi.org/10.1007/s00220-003-0797-5}{\textit{Comm. Math. Phys.}}
 \textbf{236} (2003), 223--250.

\bibitem{lopez00:_reduc_princ_fiber_bundl}
Castrill\'{o}n~L\'{o}pez M., Ratiu T.S., Shkoller S., Reduction in principal
 fiber bundles: covariant {E}uler--{P}oincar\'e equations, \href{https://doi.org/10.1090/S0002-9939-99-05304-6}{\textit{Proc. Amer.
 Math. Soc.}} \textbf{128} (2000), 2155--2164, \href{https://arxiv.org/abs/math.DG/9908102}{arXiv:math.DG/9908102}.

\bibitem{Cattaneo2019}
Cattaneo A.S., Schiavina M., The reduced phase space of
 {P}alatini--{C}artan--{H}olst theory, \href{https://doi.org/10.1007/s00023-018-0733-z}{\textit{Ann. Henri Poincar\'e}}
 \textbf{20} (2019), 445--480, \href{https://arxiv.org/abs/1707.05351}{arXiv:1707.05351}.

\bibitem{CrampinMestdagRouth}
Crampin M., Mestdag T., Routh's procedure for non-abelian symmetry groups,
 \href{https://doi.org/10.1063/1.2885077}{\textit{J.~Math. Phys.}} \textbf{49} (2008), 032901, 28~pages,
 \href{https://arxiv.org/abs/0802.0528}{arXiv:0802.0528}.

\bibitem{2012GReGr..44.2337D}
Dadhich N., Pons J.M., On the equivalence of the {E}instein--{H}ilbert and the
 {E}instein--{P}alatini formulations of general relativity for an arbitrary
 connection, \href{https://doi.org/10.1007/s10714-012-1393-9}{\textit{Gen. Relativity Gravitation}} \textbf{44} (2012),
 2337--2352, \href{https://arxiv.org/abs/1010.0869}{arXiv:1010.0869}.

\bibitem{2004JMP....45..360E}
Echeverr\'{\i}a-Enr\'{\i}quez A., L\'{o}pez C., Mar\'{\i}n-Solano J., Mu\~{n}oz
 Lecanda M.C., Rom\'an-Roy N., Lagrangian--{H}amiltonian unified formalism for
 field theory, \href{https://doi.org/10.1063/1.1628384}{\textit{J.~Math. Phys.}} \textbf{45} (2004), 360--380,
 \href{https://arxiv.org/abs/math-ph/0212002}{arXiv:math-ph/0212002}.

\bibitem{ellis11:_lagran_poinc}
Ellis D.C.P., Gay-Balmaz F., Holm D.D., Ratiu T.S., Lagrange--{P}oincar\'e
 field equations, \href{https://doi.org/10.1016/j.geomphys.2011.06.007}{\textit{J.~Geom. Phys.}} \textbf{61} (2011), 2120--2146,
 \href{https://arxiv.org/abs/0910.0874}{arXiv:0910.0874}.

\bibitem{GarciaToranoAndres2016291}
Garc\'{\i}a-Tora\~{n}o Andr\'es E., Mestdag T., Yoshimura H., Implicit
 {L}agrange--{R}outh equations and {D}irac reduction, \href{https://doi.org/10.1016/j.geomphys.2016.02.010}{\textit{J.~Geom. Phys.}}
 \textbf{104} (2016), 291--304, \href{https://arxiv.org/abs/1509.01946}{arXiv:1509.01946}.

\bibitem{doi:10.1063/1.4998526}
Gaset J., Rom\'an-Roy N., Multisymplectic unified formalism for
 {E}instein--{H}ilbert gravity, \href{https://doi.org/10.1063/1.4998526}{\textit{J.~Math. Phys.}} \textbf{59} (2018),
 032502, 39~pages, \href{https://arxiv.org/abs/1705.00569}{arXiv:1705.00569}.

\bibitem{GotayCartan}
Gotay M.J., An exterior differential systems approach to the {C}artan form, in
 Symplectic Geometry and Mathematical Physics ({A}ix-en-{P}rovence, 1990),
 \textit{Progr. Math.}, Vol.~99, Birkh\"{a}user Boston, Boston, MA, 1991,
 160--188.

\bibitem{Gotay:1997eg}
Gotay M.J., Isenberg J., Marsden J.E., Montgomery R., Momentum Maps And
 Classical Relativistic Fields. {P}art~{I}: {C}ovariant field theory,
 \href{https://arxiv.org/abs/physics/9801019}{arXiv:physics/9801019}.

\bibitem{book:852048}
Griffiths P.A., Exterior differential systems and the calculus of variations,
 \textit{Progress in Mathematics}, Vol.~25, \href{https://doi.org/10.1007/978-1-4615-8166-6}{Birkh\"{a}user}, Boston, Mass.,
 1983.

\bibitem{Hehl19951}
Hehl F.W., McCrea J.D., Mielke E.W., Ne'eman Y., Metric-affine gauge theory of
 gravity: field equations, {N}oether identities, world spinors, and breaking
 of dilation invariance, \href{https://doi.org/10.1016/0370-1573(94)00111-F}{\textit{Phys. Rep.}} \textbf{258} (1995), 1--171,
 \href{https://arxiv.org/abs/gr-qc/9402012}{arXiv:gr-qc/9402012}.

\bibitem{hehl_general_1976}
Hehl F.W., von~der Heyde P., Kerlick G.D., Nester J.M., General relativity with
 spin and torsion: foundations and prospect, \href{https://doi.org/10.1103/RevModPhys.48.393}{\textit{Rev. Modern Phys.}}
 \textbf{48} (1976), 393--416.

\bibitem{higham2003j}
Higham N.J., {$J$}-orthogonal matrices: properties and generation, \href{https://doi.org/10.1137/S0036144502414930}{\textit{SIAM
 Rev.}} \textbf{45} (2003), 504--519.

\bibitem{hsu92:_calcul_variat_griff}
Hsu L., Calculus of variations via the {G}riffiths formalism,
 \href{https://doi.org/10.4310/jdg/1214453181}{\textit{J.~Differential Geom.}} \textbf{36} (1992), 551--589.

\bibitem{Ibort:2016xoo}
Ibort A., Spivak A., On a covariant {H}amiltonian description of {P}alatini's
 gravity on manifolds with boundary, \href{https://arxiv.org/abs/1605.03492}{arXiv:1605.03492}.

\bibitem{kharlamov_characteristic_1977}
Kharlamov M.P., Characteristic class of a bundle and the existence of a global
 {R}outh function, \href{https://doi.org/10.1007/BF01135548}{\textit{Funct. Anal. Appl.}} \textbf{11} (1977), 80--81.

\bibitem{KN1}
Kobayashi S., Nomizu K., Foundations of differential geometry. {V}ol.~{I},
 Interscience Publishers, New York~-- London, 1963.

\bibitem{krupka15:_introd_global_variat_geomet_atlan}
Krupka D., Introduction to global variational geometry, \textit{Atlantis
 Studies in Variational Geometry}, Vol.~1, \href{https://doi.org/10.2991/978-94-6239-073-7}{Atlantis Press}, Paris, 2015.

\bibitem{eprints21388}
Langerock B., L\'{o}pez M.C., Routh reduction for singular {L}agrangians,
 \href{https://doi.org/10.1142/S0219887810004907}{\textit{Int.~J. Geom. Methods Mod. Phys.}} \textbf{7} (2010), 1451--1489,
 \href{https://arxiv.org/abs/1007.0325}{arXiv:1007.0325}.

\bibitem{Marsden00reductiontheory}
Marsden J.E., Ratiu T.S., Scheurle J., Reduction theory and the
 {L}agrange--{R}outh equations, \href{https://doi.org/10.1063/1.533317}{\textit{J.~Math. Phys.}} \textbf{41} (2000),
 3379--3429.

\bibitem{Nawarajan:2016vzv}
Nawarajan D., Visser M., Global properties of physically interesting
 {L}orentzian spacetimes, \href{https://doi.org/10.1142/S0218271816501066}{\textit{Internat.~J. Modern Phys.~D}} \textbf{25}
 (2016), 1650106, 15~pages, \href{https://arxiv.org/abs/1601.03355}{arXiv:1601.03355}.

\bibitem{leopoldalexanderpars1965}
Pars L.A., A treatise on analytical dynamics, John Wiley \& Sons, Inc., New
 York, 1965.

\bibitem{Peldan:1993hi}
Peld\'an P., Actions for gravity, with generalizations: a review,
 \href{https://doi.org/10.1088/0264-9381/11/5/003}{\textit{Classical Quantum Gravity}} \textbf{11} (1994), 1087--1132,
 \href{https://arxiv.org/abs/gr-qc/9305011}{arXiv:gr-qc/9305011}.

\bibitem{Prieto-Martinez2015203}
Prieto-Mart\'{\i}nez P.D., Rom\'an-Roy N., A new multisymplectic unified
 formalism for second order classical field theories, \href{https://doi.org/10.3934/jgm.2015.7.203}{\textit{J.~Geom. Mech.}}
 \textbf{7} (2015), 203--253, \href{https://arxiv.org/abs/1402.4087}{arXiv:1402.4087}.

\bibitem{Romano1993}
Romano J.D., Geometrodynamics vs. connection dynamics, \href{https://doi.org/10.1007/BF00758384}{\textit{Gen. Relativity
 Gravitation}} \textbf{25} (1993), 759--854, \href{https://arxiv.org/abs/gr-qc/9303032}{arXiv:gr-qc/9303032}.

\bibitem{sardanashvily_classical_2011}
Sardanashvily G., Classical gauge gravitation theory, \href{https://doi.org/10.1142/S0219887811005993}{\textit{Int.~J. Geom.
 Methods Mod. Phys.}} \textbf{8} (2011), 1869--1895, \href{https://arxiv.org/abs/1110.1176}{arXiv:1110.1176}.

\bibitem{saunders89:_geomet_jet_bundl}
Saunders D.J., The geometry of jet bundles, \textit{London Mathematical Society
 Lecture Note Series}, Vol.~142, \href{https://doi.org/10.1017/CBO9780511526411}{Cambridge University Press}, Cambridge, 1989.

\bibitem{tsamparlis_palatini_1978}
Tsamparlis M., On the {P}alatini method of variation, \href{https://doi.org/10.1063/1.523699}{\textit{J.~Math. Phys.}}
 \textbf{19} (1978), 555--557.

\bibitem{vey_multisymplectic_2015}
Vey D., Multisymplectic formulation of vielbein gravity: {I}.~{D}e
 {D}onder--{W}eyl formulation, {H}amiltonian {$(n-1)$}-forms,
 \href{https://doi.org/10.1088/0264-9381/32/9/095005}{\textit{Classical Quantum Gravity}} \textbf{32} (2015), 095005, 50~pages, \href{https://arxiv.org/abs/1404.3546}{arXiv:1404.3546}.

\end{thebibliography}
\end{document}